\newcommand{\etal}{\textit{et~al.}\xspace}
\newcommand{\f}{Fr\'echet\xspace}
\newcommand{\dF}{\ensuremath{d_\mathrm{F}}}
\newcommand{\F}{\ensuremath{\mathcal{F}}}
\newcommand{\bigO}{\ensuremath{\mathcal{O}}}
\newcommand{\eps}{\varepsilon\xspace}
\newcommand{\from}{\colon\xspace}
\newcommand{\R}{\ensuremath{\mathbb{R}}}
\newcommand{\calH}{\ensuremath{\mathcal{H}}}
\newcommand{\near}{\ensuremath{\mathrm{near}}}
\newcommand{\far}{\ensuremath{\mathrm{far}}}
\newcommand{\enter}{\ensuremath{\mathrm{enter}}}
\newcommand{\exit}{\ensuremath{\mathrm{exit}}}
\newcommand{\sep}{\ensuremath{\mathrm{sep}}}
\newcommand{\calC}{\ensuremath{\mathcal{C}}}
\newcommand{\calR}{\ensuremath{\mathcal{R}}}
\newcommand{\T}{\ensuremath{\mathcal{T}}}
\newcommand{\NN}{\ensuremath{\mathit{NN}}}
\newcommand{\V}{\ensuremath{\mathcal{V}}}
\newcommand{\rev}[1]{\reflectbox{$\vec{\reflectbox{\!$#1$}}$}}
\newcommand{\hor}{\ensuremath{\mathrm{hor}}}
\newcommand{\ver}{\ensuremath{\mathrm{ver}}}
\title{The Geodesic Fr\'echet Distance Between Two Curves Bounding a Simple Polygon}
\author
{Thijs van der Horst}
{Department of Information and Computing Sciences, Utrecht University, the Netherlands
\and
Department of Mathematics and Computer Science, TU Eindhoven, the Netherlands}
{t.w.j.vanderhorst@uu.nl}
{https://orcid.org/0009-0002-6987-4489}
{}
\author
{Marc van Kreveld}
{Department of Information and Computing Sciences, Utrecht University, The Netherlands}
{m.j.vankreveld@uu.nl}
{https://orcid.org/0000-0001-8208-3468}
{}
\author
{Tim Ophelders}
{Department of Information and Computing Sciences, Utrecht University, the Netherlands
\and
Department of Mathematics and Computer Science, TU Eindhoven, the Netherlands}
{t.a.e.ophelders@uu.nl}
{https://orcid.org/0000-0002-9570-024X}
{partially supported by the Dutch Research Council (NWO) under project no.\ VI.Veni.212.260.}
\author
{Bettina Speckmann}
{Department of Mathematics and Computer Science, TU Eindhoven, The Netherlands}
{b.speckmann@tue.nl}
{https://orcid.org/0000-0002-8514-7858}
{}
\authorrunning{T. van der Horst, M. van Kreveld, T. Ophelders, and B. Speckmann}
\keywords{Fr\'echet distance, approximation, geodesic, simple polygon}
\begin{document}

\maketitle

\begin{abstract}
    The Fr\'echet distance is a popular similarity measure that is well-understood for polygonal curves in~$\R^d$: near-quadratic time algorithms exist, and conditional lower bounds suggest that these results cannot be improved significantly, even in one dimension and when approximating with a factor less than three.
    We consider the special case where the curves bound a simple polygon and distances are measured via geodesics inside this simple polygon. Here the conditional lower bounds do not apply; Efrat~\textit{et~al.}~(2002) were able to give a near-linear time $2$-approximation algorithm.

    In this paper, we significantly improve upon their result: we present a $(1+\varepsilon)$-approximation algorithm, for any $\varepsilon > 0$, that runs in $\mathcal{O}(\frac{1}{\varepsilon} (n+m \log n) \log nm \log \frac{1}{\varepsilon})$ time for a simple polygon bounded by two curves with $n$ and $m$ vertices, respectively.
    To do so, we show how to compute the reachability of specific groups of points in the free space at once, by interpreting the free space as one between separated one-dimensional curves.
    We solve this one-dimensional problem in near-linear time, generalizing a result by
    Bringmann and K\"unnemann~(2015).
    Finally, we give a linear time exact algorithm if the two curves bound a convex polygon.
\end{abstract}

\section{Introduction}

    The \f distance is a well-studied similarity measure for curves in a metric space. Most results so far concern the \f distance between two polygonal curves $R$ and $B$ in $\R^d$ with $n$ and $m$ vertices, respectively. Then the \f distance between two such curves can be computed in $\tilde{\bigO}(nm)$ time (see e.g.~\cite{alt95continuous_frechet,buchin17continuous_frechet}). There is a closely matching conditional lower bound: If the \f distance between polygonal curves can be computed in $\bigO((nm)^{1-\eps})$ time (for any constant $\eps > 0$), then the Strong Exponential Time Hypothesis fails~\cite{bringmann14hardness}.
    This lower bound extends to curves in one dimension, and holds even when approximating to a factor less than three~\cite{buchin19seth_says}.

    Because it is unlikely that exact strongly subquadratic algorithms exist, approximation algorithms have been developed \cite{colombe21continuous_frechet,vanderhorst23continuous_frechet,cheng25constant_frechet}.
    Van~der~Horst~\etal~\cite{vanderhorst23continuous_frechet} were the first to present an algorithms which results in an arbitrarily small polynomial approximation factor ($n^\eps$ for any $\eps \in (0, 1]$) in strongly subquadratic time ($\tilde{\bigO}(n^{2-\eps})$).
    Very recently, Cheng~\etal~\cite{cheng25constant_frechet} gave the first (randomized) constant factor approximation algorithm with a strongly subquadratic running time.
    Specifically, it computes a $(7+\eps)$-approximation in $\bigO(nm^{0.99} \log (n/\eps))$ time.

    For certain families of ``realistic'' curves, the SETH lower bound does not apply.
    For example, when the curves are \emph{$c$-packed}, Driemel~\etal~\cite{driemel12realistic} gave an $(1+\eps)$-approximation algorithm, for any $\eps \in (0, 1)$, that runs in $\tilde{\bigO}(cn/\eps)$ time.
    Bringmann~and~K\"unnemann~\cite{bringmann17cpacked} improved the running time to $\tilde{\bigO}(cn / \sqrt{\eps})$ time.

    For curves in one dimension with an imbalanced number of vertices, the \f distance can be computed in strongly subquadratic time without making extra assumptions about the shape of the curves.
    This was recently established by Blank~and~Driemel~\cite{blank24imbalanced_frechet}, who give an $\tilde{\bigO}(n^{2\alpha} + n)$-time algorithm when $m = n^\alpha$ for some $\alpha \in (0, 1)$.

    If the two polygonal curves $R$ and $B$ lie inside a simple polygon $P$ with $k$ vertices and we measure distances by the geodesic distance inside $P$, then neither the upper nor the conditional lower bound change in a fundamental way. Specifically, Cook~and~Wenk~\cite{cook10geodesic_frechet} show how to compute the \f distance in this setting in $\bigO(k + N^2 \log k N \log N)$ time, with $N = \max\{n, m\}$. For more general polygonal obstacles, Chambers~\etal~\cite{chambers10homotopic_frechet} give an algorithm that computes the \emph{homotopic} \f distance in $\bigO(N^9 \log N)$ time, where $N = m + n + k$ is the total number of vertices on the curves and obstacles.

    Har-Peled~\etal~\cite{harpeled16no_magic_leash} investigate the setting where $R$ and $B$ are simple, interior-disjoint curves on the boundary of a triangulated topological disk.
    If the disk has $k$ faces, their
    algorithm computes a $\bigO(\log k)$-approximation to the homotopic \f distance in $\bigO(k^6 \log k)$ time.
    Efrat~\etal~\cite{efrat02morphing} consider a more geometric setting,
    where $R$ and $B$ bound a simple
    \begin{wrapfigure}[6]{r}{0.18\textwidth}
        \raggedleft
        \vspace{-.6\baselineskip}\hspace{-2ex}
        \includegraphics{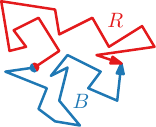}
    \end{wrapfigure}
    polygon (see figure).
    Here, the SETH lower bound does not apply; a $2$-approximation to the geodesic \f distance can be computed in $\bigO((n+m) \log nm)$ time~\cite{efrat02morphing}.
    Moreover, Van~der~Horst~\etal~\cite{vanderhorst25L_1_geodesic_frechet} recently gave an $\bigO((n+m) \log^4 nm)$-time exact algorithm for a similar setting, where distances are measured under the $L_1$-geodesic distance.
    Their result implies a $\sqrt{2}$-approximation algorithm for the geodesic \f distance.

\subparagraph*{Organization and results.} 
    In this paper, we significantly improve upon the results of Efrat~\etal~\cite{efrat02morphing} and Van~der~Horst~\etal~\cite{vanderhorst25L_1_geodesic_frechet}: we present a $(1+\varepsilon)$-approximation algorithm for the geodesic \f distance, for any $\varepsilon > 0$, that runs in $\bigO(\frac{1}{\eps} (n+m \log n) \log nm \log \frac{1}{\eps})$ time when $R$ and $B$ bound a simple polygon.
    We give an overview of our algorithm in~\cref{sec:outline}.
    Our algorithm relies on an interesting connection between matchings and nearest neighbors and is described in \cref{sec:approx}. There we also explain how to transform the decision problem for \emph{far} points on $B$ (those who are not a nearest neighbor of any point on $R$) into a problem between separated one-dimensional curves.
    Bringmann and K\"unnemann~\cite{bringmann17cpacked} previously solved the decision version of the Fr\'echet distance in this setting in $\bigO((n+m) \log nm)$ time. In~\cref{sec:separated_1D} we strengthen their result and compute the Fr\'echet distance between two separated one-dimensional curves in linear time.
    All omitted proofs can be found in the appendices.

    \begin{wrapfigure}[4]{r}{0.175\textwidth}
        \raggedleft
        \vspace{-.8\baselineskip}\hspace{-2ex}
        \includegraphics[page=1]{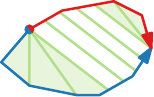}
    \end{wrapfigure}
    Finally, when $P$ is a convex polygon we describe a simple linear-time algorithm (relegated to \cref{app:convex}). We show that in this setting there is a \f matching with a specific structure: a \emph{maximally-parallel matching} (see figure). We compute the orientation of the parallel part from up to $O(n+m)$ different tangents, which we find using ``rotating calipers''.

\subparagraph{Preliminaries.}

    A (polygonal) \emph{curve} $R$ is a piecewise linear function that connects a sequence $r_1, \dots, r_n$ of points, which we refer to as \emph{vertices}.
    If the vertices lie in the plane, then we say $R$ is two-dimensional\footnote{%
        Curves are inherently one-dimensional objects. We abuse terminology slightly to refer to the ambient dimension as the dimension of a curve.
    } and equal to the function $R \from [1, n] \to \R^2$ where $R(i+t) = (1-t)r_i + tr_{i+1}$ for $i \in \{1, \dots, n-1\}$ and $t \in [0, 1]$.
    A one-dimensional curve is defined analogously.
    We assume $R$ is parameterized such that $R(i)$ indexes vertex $r_i$ for all integers $i \in [1, n]$.
    We denote by $R[x, x']$ the subcurve of $R$ over the domain $[x, x']$, and abuse notation slightly to let $R[r, r']$ to also denote this subcurve when $r = R(x)$ and $r' = R(x')$.
    The \emph{edges} of $R$ are the directed line segments $R[i, i+1]$ for integers $i \in [1, n-1]$.
    We write $|R|$ to denote the number of vertices of $R$.
    Let $R \from [1, n] \to \R^2$ and $B \from [1, m] \to \R^2$ be two simple, interior-disjoint curves with $R(1) = B(1)$ and $R(n) = B(m)$, 
    bounding a simple polygon~$P$.
    
    A \emph{reparameterization} of $[1, n]$ is a non-decreasing surjection $f \from [0, 1] \to [1, n]$.
    Two reparameterizations $f$ and $g$ of $[1, n]$ and $[1, m]$, describe a \emph{matching} $(f, g)$ between two curves $R$ and $B$ with $n$ and $m$ vertices, respectively, where any point $R(f(t))$ is matched to $B(g(t))$.
    The matching $(f, g)$ is said to have \emph{cost}
    \[
        \max_t~d( R(f(t)), B(g(t)) ),
    \]
    where $d(\cdot, \cdot)$ is the geodesic distance between points in $P$.
    A matching with cost at most $\delta$ is called a \emph{$\delta$-matching}.
    The (continuous) \emph{geodesic \f distance} $\dF(R, B)$ between $R$ and $B$ is the minimum cost over all matchings.
    The corresponding matching is a \emph{\f matching}.

    The \emph{parameter space} of $R$ and $B$ is the axis-aligned rectangle $[1, n] \times [1, m]$.
    Any point $(x, y)$ in the parameter space corresponds to the pair of points $R(x)$ and $B(y)$ on the two curves.
    A point $(x, y)$ in the parameter space is \emph{$\delta$-close} for some $\delta \geq 0$ if $d(R(x), B(y)) \leq \delta$.
    The \emph{$\delta$-free space} $\F_\delta(R, B)$ of $R$ and $B$ is the set of points $(x, y)$ in the parameter space with $d(R(x), B(y)) \leq \delta$.
    A point $q = (x', y') \in \F_\delta(R, B)$ is \emph{$\delta$-reachable} from a point $p = (x, y)$ if $x \leq x'$ and $y \leq y'$, and there exists a bimonotone (i.e., monotone in both coordinates) path in $\F_\delta(R, B)$ from $p$ to $q$.
    Points that are $\delta$-reachable from $(1, 1)$ are simply called $\delta$-reachable points.
    Alt and Godau~\cite{alt95continuous_frechet} observe that there is a one-to-one correspondence between $\delta$-matchings between $R[x, x']$ and $B[y, y']$, and bimonotone paths from $p$ to $q$ through $\F_\delta(R, B)$.
    We abuse terminology slightly and refer to such paths as $\delta$-matchings as well.
    
    Let $\eps > 0$ be a parameter.
    A \emph{$(1+\eps)$-approximate decision algorithm} for our problem takes a decision parameter $\delta \geq 0$, and outputs either that $\dF(R, B) \leq (1+\eps)\delta$ or that $\dF(R, B) > \delta$.
    It may report either answer if $\delta < \dF(R, B) \leq (1+\eps)\delta$.

\section{Algorithmic outline}
\label{sec:outline}

    In this section we sketch the major parts of our algorithm that approximates the geodesic \f distance $\delta_F:=\dF(R, B)$ between curves $R$ and $B$. 
    We approximate $\delta_F$ using a $(1+\eps)$-approximate decision algorithm, and use binary search to find the correct decision parameter. 
  In~\cref{sub:approx_optimization} we relate the \f distance to the geodesic Hausdorff distance $\delta_H$, which allows us to bound the number of iterations of the binary search, 
    In particular, we show that $\delta_F$ lies in the range $[\delta_H, 3\delta_H]$.
    This range can be computed in $\bigO((n+m) \log nm)$ time~\cite[Theorem~7.1]{cook10geodesic_frechet}.
    A binary search over this range results in a $(1+\eps)$-approximation of the \f distance after $\bigO(\log \frac{1}{\eps})$ calls to the decision algorithm.
    \Cref{thm:ourmaintheorem} follows.

    \begin{restatable}{theorem}{ourmaintheorem}\label{thm:ourmaintheorem}
        For any $\eps > 0$, we can compute a $(1+\eps)$-approximation to $\dF(R, B)$ in $\bigO(\frac{1}{\eps} (n+m \log n) \log nm \log \frac{1}{\eps})$ time.
    \end{restatable}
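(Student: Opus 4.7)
The plan is to reduce the approximation of $\delta_F := \dF(R,B)$ to a short sequence of decision queries. Concretely, I would design a $(1+\eps)$-approximate decision procedure that costs $\bigO(\tfrac{1}{\eps}(n+m\log n)\log nm)$ per call (this procedure is the substance of \cref{sec:approx} and \cref{sec:separated_1D}), and then invoke it $\bigO(\log \tfrac{1}{\eps})$ times inside a parametric binary search. Multiplying these two factors, and absorbing a one-time $\bigO((n+m)\log nm)$ preprocessing cost, yields the bound stated in the theorem.

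To initialize the binary search, I need a window of constant multiplicative width around $\delta_F$. Here I would invoke the inequality $\delta_H \le \delta_F \le 3\delta_H$ from \cref{sub:approx_optimization}, where $\delta_H$ denotes the geodesic Hausdorff distance between $R$ and $B$. The lower bound is immediate from the definition of Hausdorff distance; the upper bound is the nontrivial direction and is precisely where the simple-polygon assumption enters, since matchings of cost proportional to $\delta_H$ can be assembled from geodesic paths confined to $P$ between consecutive Hausdorff footpoints on the opposite curve. The value $\delta_H$ itself is computed exactly in $\bigO((n+m)\log nm)$ time by Cook and Wenk~\cite[Theorem~7.1]{cook10geodesic_frechet}, comfortably within the target budget.

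The binary search then proceeds over the logarithmic grid $\delta_k := \delta_H\,(1+\eps)^k$ for $k \in \{0,1,\dots,K\}$ with $K = \lceil \log_{1+\eps} 3\rceil = \bigO(\tfrac{1}{\eps})$. Using a $(1+\eps/3)$-approximate decision oracle (the constant rescaling of $\eps$ absorbs the rounding between the oracle's slack and the final guarantee), standard binary search over the index $k$ localizes the transition from NO to YES answers in $\lceil \log_2 K\rceil = \bigO(\log \tfrac{1}{\eps})$ queries, and the smallest accepted $\delta_k$ is a $(1+\eps)$-approximation of $\delta_F$. Summing costs gives the claimed $\bigO(\tfrac{1}{\eps}(n+m\log n)\log nm \log \tfrac{1}{\eps})$. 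The genuine obstacle therefore lies not in this wrapper but in the decision procedure itself, and in particular in reducing the "far point" subproblem to a Fr\'echet computation between separated one-dimensional curves; the present statement merely packages the range reduction and parametric search around that core routine.
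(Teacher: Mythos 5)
Your proposal matches the paper's approach exactly: compute $\delta_H$ via Cook and Wenk, use the bracketing $\delta_H \le \dF(R,B) \le 3\delta_H$ (the paper's \cref{lem:close_to_hausdorff}), and binary search over a $(1+\eps)$-grid of $\bigO(1/\eps)$ candidates in $\bigO(\log\frac{1}{\eps})$ decision calls. Your remark about running the decision oracle at slack $\eps/3$ to avoid the $(1+\eps)^2$ compounding between adjacent grid values and the oracle's one-sided error is a small but genuine clarification that the paper leaves implicit.
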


\subparagraph*{The approximate decision algorithm.}
    In the remainder of this section we outline the approximate decision algorithm, which is presented in detail in \cref{sec:approx}.    
    At its heart lies a useful connection between matchings and nearest neighbors.
    A \emph{nearest neighbor} of a point $r$ on $R$ is any point $b$ on $B$ that among all points on $B$ is closest to $r$.
    We denote the nearest neighbor(s) of $r$ by $\NN(r)$.
    We prove in~\cref{sub:fans_and_nearest_neighbors} that any $\delta$-matching matches each nearest neighbor $b$ of $r$ relatively close to $r$.
    Specifically, $b$ must be matched to a point $r'$ for which the entire subcurve of $R$ between $r$ and $r'$ is within distance $\delta$ of $b$.
    We introduce the concept of a $(r, b, \delta)$-nearest neighbor fan to capture the candidate locations for $r'$.
    
    The \emph{$(r, b, \delta)$-nearest neighbor fan} $F_{r, b}(\delta)$ consists of the point $b$ and the maximal subcurve $R[x, x']$ that contains $r$ and is within geodesic distance $\delta$ of $b$; it is the union of geodesics between $b$ and points on $R[x, x']$, see~\cref{fig:nearest_neighbor_fans}.
    We call $b$ the \emph{apex} of $F_{r, b}(\delta)$ and $R[x, x']$ the \emph{leaf} of the fan.
    We prove in~\cref{sub:fans_and_nearest_neighbors} that any $\delta$-matching must match the apex $b$ to a point in the leaf $R[x, x']$.
    
    As $r$ moves forward along $R$, so do its nearest neighbors $b$ along $B$.
    Their nearest neighbor fans $F_{r, b}(\delta)$ move monotonically through the polygon $P$ bounded by $R$ and $B$.
    However, while~$r$ moves continuously along $R$, the points $b$ and their nearest neighbor fans might jump discontinuously.
    Such discontinuities occur due to points $b$ that are not a nearest neighbor of any point on $R$, and thus at points that are not the apex of a nearest neighbor fan.
    We classify the points on $B$ accordingly:  
    we call a point $b$ on $B$ a \emph{near} point if it is a nearest neighbor of at least one point on $R$, and call $b$ a \emph{far} point otherwise.

    \begin{figure}
        \centering
        \includegraphics{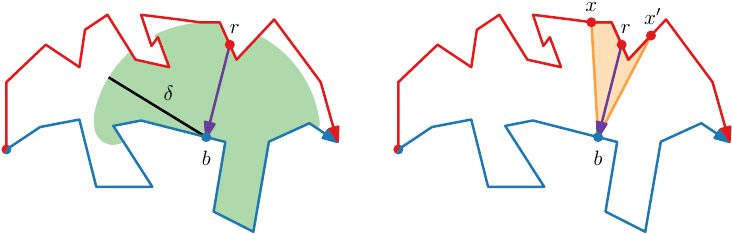}
        \caption{(left) Points $r$ and $b$ with $b \in \NN(r)$.
        The region shaded in green consists of all points within geodesic distance $\delta$ of $b$.
        (right) The $(r, b, \delta)$-nearest neighbor fan (orange).}
        \label{fig:nearest_neighbor_fans}
    \end{figure}
    
    \begin{figure}[b]
        \centering
        \includegraphics[page=1]{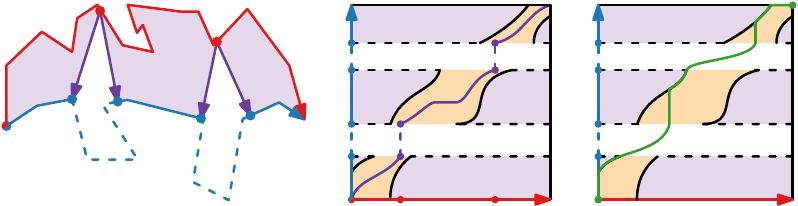}
        \caption{(left) The mappings (purple) from points on $R$ to their nearest neighbor(s) on $B$.
        (middle) The partition of the parameter space based on near and far points on $B$.
        The partly-dashed purple curve indicates the nearest neighbor(s) on $B$ of points on $R$.
        The beige regions correspond to the $(r, b, \delta)$-nearest neighbor fans.
        (right) A $\delta$-matching that is greedy on $B$ inside the regions.}
        \label{fig:near_and_far_free_space}
    \end{figure}

    The distinction between near and far points induces a partition $\calH$ of the parameter space into horizontal slabs.
    We consider these slabs from bottom to top, and iteratively construct a $\delta$-matching (provided that one exists).
    Recall that a $\delta$-matching is a bimonotone path in $\F_\delta(R, B)$ from the bottom-left corner of the parameter space to the top-right corner.

    Inside a slab $H_\near \in \calH$ corresponding to a subcurve of $B$ with only near points, the nearest neighbor fans correspond to a connected, $x$-monotone and $y$-monotone region $\calR$ spanning the entire height of $H_\near$.
    These regions are illustrated in~\cref{fig:near_and_far_free_space}.
    The intersection between any $\delta$-matching and $H_\near$ is contained in $\calR$.
    The structure of $\calR$ implies that if a $\delta$-matching between $R$ and $B$ exists, there exists one which moves vertically up inside $\calR$ whenever this is possible.
    Geometrically, this corresponds to greedily traversing the near points on $B$, and traversing parts of $R$ only when necessary.
    We formalize this in~\cref{sub:near_points}.

    Slabs whose corresponding subcurves of $B$ have only far points pose the greatest technical challenge for our algorithm; we show how to match far points in an approximate manner in~\cref{sub:far_points}.
    Specifically, let $H_\far \in \calH$ correspond to a subcurve $B[b, b']$ with only far points on its interior.
    We compute a suitable subcurve of $R$ that can be $(1+\eps)\delta$-matched to $B[b, b']$ in the following manner.
    First we argue that $d(b,b') \leq 2\delta_F$.
    In other words, the geodesic from $b$ to $b'$ is short and separates $R$ from the subcurve $B[b,b']$.
    We use this separating geodesic to transform the problem of creating a matching for far points into $K=\bigO(1/\eps)$ one-dimensional problems.

    \begin{figure}
        \centering
        \includegraphics[page=1]{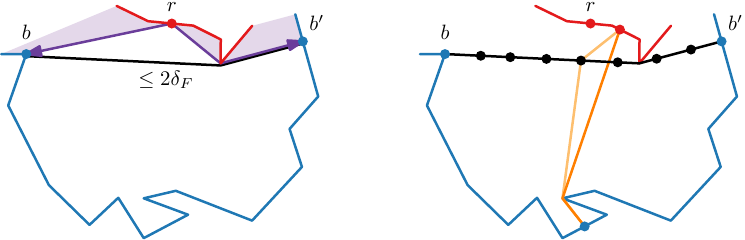}
        \caption{(left) The points $b$ and $b'$ are both nearest neighbor of some point $r$, implying a short separator.
        (right) Adding anchor points to the separator and snapping the orange geodesic (between arbitrary points on $R$ and $B[b, b']$) to one.}
        \label{fig:anchors}
    \end{figure}
    
    Specifically, we discretize the separator with $K$ points, which we call \emph{anchors}, and ensure that consecutive anchors are at most $\eps \delta$ apart.
    We snap our geodesics to these anchors (see~\cref{fig:anchors}), which incurs a small approximation error.
    Based on which anchor a geodesic snaps to, we partition the parameter space of $R$ and $B[b, b']$ into regions, one for each anchor.
    For each anchor point, the lengths of the geodesics snapped to it can be expressed as distances between points on two separated one-dimensional curves; this results in a one-dimensional problem which we can solve exactly.
    In \cref{sub:far_points} we present the transformation into one-dimensional curves, and in \cref{sec:separated_1D} we present an efficient exact algorithm for the resulting one-dimensional problem.

\section{Approximate geodesic Fr\'echet distance}
\label{sec:approx}

\subsection{Nearest neighbor fans and partitioning the parameter space}
\label{sub:fans_and_nearest_neighbors}

    We first present useful properties of nearest neighbor fans and their relation to matchings.
    In~\cref{lem:matching_to_fan} we prove a crucial property of nearest neighbor fans, namely that \emph{any} $\delta$-matching between $R$ and $B$ matches $b$ to a point on the leaf of the fan.
    For the proof, we make use of the following auxiliary lemma:

    \begin{lemma}
    \label{lem:shortcutting}
        Let $r \in R$ and $b \in \NN(r)$.
        For any points $r' \in R$ and $b' \in B$ on opposite sides of the geodesic $\pi(r, b)$ between $r$ and $b$, we have $d(r', b) \leq d(r', b')$.
    \end{lemma}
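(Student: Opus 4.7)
The statement follows once one exploits the fact that $\pi(r,b)$ \emph{separates} $r'$ from $b'$. Since $\pi(r,b)$ is a geodesic from a point on $R$ to a point on $B$, it divides the simple polygon $P$ into two subpolygons; the hypothesis that $r'$ and $b'$ lie on opposite sides means any path in $P$ from $r'$ to $b'$ must intersect $\pi(r,b)$. In particular, the geodesic $\pi(r',b')$ does, so we may fix a point $p \in \pi(r',b') \cap \pi(r,b)$.

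The proof would then proceed by inserting this point $p$ as an intermediate hub and comparing the two routes $r' \to p \to b$ and $r' \to p \to b'$. Since $p$ lies on $\pi(r',b')$, we immediately get $d(r',b') = d(r',p) + d(p,b')$, and the triangle inequality gives $d(r',b) \leq d(r',p) + d(p,b)$. Thus it suffices to establish
\[
    d(p,b) \;\leq\; d(p,b'). \tag{$\star$}
\]
This reduction is the natural one; the remaining work is to verify $(\star)$.

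\textbf{Main obstacle.} Inequality $(\star)$ is the technical heart: it is a statement about the apex $b$ and the far point $b'$ measured from a point $p$ that might lie arbitrarily close to $b'$ relative to $b$, so it really does use both hypotheses. The plan is to chain three facts:
(i) because $p$ lies on the geodesic $\pi(r,b)$, the distance splits as $d(r,p) + d(p,b) = d(r,b)$;
(ii) because $b \in \NN(r)$, we have $d(r,b) \leq d(r,b')$;
(iii) the triangle inequality gives $d(r,b') \leq d(r,p) + d(p,b')$.
Chaining (i)--(iii) yields $d(r,p) + d(p,b) \leq d(r,p) + d(p,b')$, from which $(\star)$ follows by cancellation.

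Combining $(\star)$ with the triangle inequality produces $d(r',b) \leq d(r',p) + d(p,b) \leq d(r',p) + d(p,b') = d(r',b')$, completing the argument. The only delicate bookkeeping is verifying that $\pi(r,b)$ really does split $P$ into two regions with $r'$ and $b'$ on opposite sides and therefore forces an intersection point $p$ to exist; this is where the assumption that $R$ and $B$ together bound a simple polygon (with $r$ on $R$ and $b$ on $B$) is used.
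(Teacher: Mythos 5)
Your proof is correct and takes essentially the same route as the paper: fix the intersection point $p$ of $\pi(r',b')$ with the separating geodesic $\pi(r,b)$ and apply the triangle inequality through $p$. In fact your write-up is slightly more complete than the paper's, since the paper asserts the key fact $d(p,b)\leq d(p,b')$ for $p\in\pi(r,b)$ as ``natural,'' whereas you explicitly derive it from $b\in\NN(r)$ via the chain $d(r,p)+d(p,b)=d(r,b)\leq d(r,b')\leq d(r,p)+d(p,b')$.
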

    \begin{proof}
            All points $p$ on $\pi(r, b)$ naturally have the property that $d(p, b) \leq d(p, b')$ for all $b' \in B$.
            For any $r' \in R$ and $b' \in B$ on opposite sides of $\pi(r, b)$, the geodesic $\pi(r', b')$ intersects $\pi(r, b)$ in a point $p$.
            It follows from the triangle inequality that
            \begin{align*}
                d(r', b) &\leq d(r', p) + d(p, b) \\
                &\leq d(r', p) + d(p, b') = d(r', b').\qedhere
            \end{align*}
    \end{proof}

    \begin{lemma}
    \label{lem:matching_to_fan}
        Let $r \in R$ and $b \in \NN(r)$.
        For any $\delta \geq 0$, every $\delta$-matching between $R$ and $B$ matches $b$ to a point in the leaf of $F_{r, b}(\delta)$.
    \end{lemma}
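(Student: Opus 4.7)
The plan is to argue by contradiction. Fix a $\delta$-matching $(f, g)$, let $r^* = R(f(t_2))$ be the point matched to $b = B(g(t_2))$, and let $b^* = B(g(t_1))$ be the point matched to $r = R(f(t_1))$. Since $b \in \NN(r)$, we immediately have $d(r, b) \leq d(r, b^*) \leq \delta$, so $r$ itself lies on the leaf of $F_{r, b}(\delta)$, confirming that the leaf is nonempty.

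Now suppose for contradiction that $r^*$ does not lie on the leaf. Since the leaf is the maximal connected subcurve of $R$ through $r$ within geodesic distance $\delta$ of $b$, walking along $R$ from $r$ to $r^*$ must cross some point $r' = R(f(t'))$ with $d(r', b) > \delta$. By monotonicity of $(f, g)$, I can assume WLOG that $t_1 \leq t_2$ (the case $t_1 > t_2$ is symmetric after reversing both curves). Then $t' \in [t_1, t_2]$, so $r' \in R[r, r^*]$ and its matched partner $b' = B(g(t'))$ lies in $B[b^*, b]$. In particular $d(r', b') \leq \delta$, so $b' \neq b$.

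The key step is to show that $r'$ and $b'$ lie on opposite sides of the geodesic $\pi(r, b)$; once this is established, \cref{lem:shortcutting} yields $d(r', b) \leq d(r', b') \leq \delta$, directly contradicting the choice of $r'$. For this I would observe that $\pi(r, b)$ partitions $P$ into two subpolygons, one of whose boundaries contains $R[R(1), r] \cup B[B(1), b]$ and the other of whose boundaries contains $R[r, R(n)] \cup B[b, B(m)]$. Since $r' \in R[r, R(n)]$ and $b' \in B[B(1), b]$, they lie on the boundaries of different subpolygons.

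The main obstacle is ruling out the degenerate situation in which $r'$ or $b'$ coincides with a point of $\pi(r, b)$ itself (for instance, a reflex vertex where the geodesic bends). I would handle both cases using the two defining properties of $r$ and $b$. If $r'$ lay on $\pi(r, b)$, then $d(r', b)$ would equal the length of a suffix of $\pi(r, b)$, hence at most $d(r, b) \leq \delta$, contradicting $d(r', b) > \delta$. If $b'$ lay on $\pi(r, b)$, then $d(r, b) = d(r, b') + d(b', b)$ along the geodesic, while $b \in \NN(r)$ gives $d(r, b) \leq d(r, b')$; combined these force $d(b', b) = 0$, i.e.\ $b' = b$, which we already excluded. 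Thus $r'$ and $b'$ lie strictly on opposite sides of $\pi(r, b)$ and the contradiction goes through.
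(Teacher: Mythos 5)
Your proof is correct and rests on the same key step as the paper's: applying \cref{lem:shortcutting} to a matched pair straddling the geodesic $\pi(r,b)$. The differences are presentational --- you argue by contradiction where the paper argues directly that every point between $r$ and the match of $b$ is within distance $\delta$ of $b$ --- and your explicit handling of the degenerate case where $r'$ or $b'$ lies on $\pi(r,b)$ is careful but not needed, since the proof of \cref{lem:shortcutting} only requires $\pi(r',b')$ to meet $\pi(r,b)$, which holds trivially in those cases.
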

    \begin{proof}
        Suppose $b$ is matched to a point $r'$ by some $\delta$-matching $(f, g)$.
        Assume without loss of generality that $r'$ comes before $r$ along $R$.
        Let $\hat{r}$ be a point between $r'$ and $r$.
        The $\delta$-matching $(f, g)$ matches $\hat{r}$ to a point $\hat{b}$ after $b$.
        Thus we obtain from~\cref{lem:shortcutting} that $d(\hat{r}, b) \leq d(\hat{r}, \hat{b}) \leq \delta$.
        This proves that all points between $r'$ and $r$ are included in the leaf of $F_{r, b}(\delta)$.
    \end{proof}

    \noindent
    We partition the parameter space into (closed) maximal horizontal slabs, such that for each slab $[1, n] \times [y, y']$, either the subcurve $B[y, y']$ contains only near points, or its interior contains only far points.
    Let $\calH$ be the resulting partition.
    Each slab $H \in \calH$ has two horizontal line segments, one on its bottom and one on its top side, that correspond to nearest neighbor fans.
    We refer to these line segments as the \emph{entrance} and \emph{exit intervals} of $H$, and refer to points on them as the \emph{entrances} and \emph{exits} of $H$.
    A consequence of~\cref{lem:matching_to_fan} is that any $\delta$-matching enters and leaves $H$ through an entrance and exit.
    We compute \emph{$\delta$-safe} entrances and exits: $\delta$-reachable points from which $(n, m)$ is $\delta$-reachable.
    Each such point is used by a $\delta$-matching, and is used to iteratively determine if such a matching exists.
    
    It proves sufficient to consider only a discrete set of entrances and exits for each slab.
    For each slab, we identify (implicitly) a set of $\bigO(n)$ entrances and exits that contain $\delta$-safe entrances and exits (if any exist at all).
    We define these entrances and exits using locally closest points and will call them transit points.
    
    A point $r$ on $R$ is \emph{locally closest} to a point $b$ on $B$ if perturbing $r$ infinitesimally while staying on $R$ increases its distance to $b$.
    The \emph{transit} entrances and exits are those entrances and exits $(x, y)$ where $R(x)$ is either a vertex or locally closest to $B(y)$.
    We show that it is sufficient to consider only transit entrances and exits:

    \begin{lemma}
    \label{lem:critical_exits}
        If there exists a $\delta$-matching, then there exists one that enters and leaves each slab through a transit entrance and exit.
    \end{lemma}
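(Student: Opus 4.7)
The plan is to modify any given $\delta$-matching $\pi$ so that the entry and exit of each slab $H \in \calH$ are transit points. I focus on a single slab $H$ with entrance interval $I_H = [x_l, x_r] \times \{y^*\}$, where $R[x_l, x_r]$ is the leaf of the fan with apex $B(y^*)$; the argument for exits is symmetric. The key observation is that $I_H \subseteq \F_\delta$ by definition of the leaf, so horizontal motion along $I_H$ is always feasible.

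I would first characterize the set $S$ of $\delta$-safe entries on $I_H$ (entries that are both $\delta$-reachable from $(1,1)$ and from which $(n,m)$ is $\delta$-reachable). Using horizontal extensions along $I_H$, the $\delta$-reachable entries form a right-closed subinterval $[\alpha, x_r]$ (extend any reachability witness rightward along $I_H$) and the $\delta$-co-reachable entries form a left-closed subinterval $[x_l, \beta]$ (prepend a rightward horizontal step to any co-reachability witness), so $S = [\alpha, \beta]$ is a closed interval containing the original entry of $\pi$. The crux is to show $S$ contains a transit point. Between consecutive transit points on $I_H$, the distance function $f(x) := d(R(x), B(y^*))$ is monotonic, since critical points of $f$ in edge interiors are exactly the locally closest points. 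Hence if $[\alpha, \beta]$ contains a transit point we are done; otherwise $[\alpha, \beta]$ lies strictly between two consecutive transit points and $f$ is strictly monotonic there.

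In the remaining case, say $f$ is decreasing on $[\alpha, \beta]$, I would derive a contradiction by perturbing $\alpha$ leftward: since $f(\alpha) < \delta$ strictly (if $\alpha > x_l$), a small neighborhood of $(\alpha, y^*)$ lies in $\F_\delta$, so I can reroute the reachability witness through this neighborhood to reach $(\alpha - \epsilon, y^*)$, contradicting the minimality of $\alpha$. The main obstacle is making this rerouting argument precise — the rerouted bimonotone path must interface cleanly with the slab below $H$, and the existing reachability path may not naturally pass through $x = \alpha - \epsilon$ while still below $y^*$. I expect this to go through by continuity of the distance function and openness of $\F_\delta$ at strict interior points, while the boundary case $\alpha = x_l$ requires separate treatment: there $f(x_l) = \delta$, and the fact that $f$ is decreasing immediately to the right of $x_l$ (since $f > \delta$ just outside the leaf and $f \leq \delta$ just inside it) forces $x_l$ to either be a vertex of $R$ or to admit a locally closest point inside $[x_l, \alpha]$ via a convexity-of-$f$-on-an-edge argument, yielding a transit point in $S$ in either subcase.
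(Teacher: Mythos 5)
Your approach is genuinely different from the paper's, and it has two gaps — one you acknowledge and one you don't. The paper's proof is a \emph{global} modification: starting from an arbitrary $\delta$-matching $(f,g)$, it builds a new matching $(f',g')$ with the strong property that \emph{every} matched pair $(r,b)$ has $r$ a vertex of $R$ or locally closest to $b$. Since slab boundaries correspond to specific points $b$ on $B$, this single construction immediately forces transit entrances and exits for all slabs at once. The key technical step is a careful cell-by-cell rerouting using the unimodality of the geodesic-distance function on a segment (Cook--Wenk's Lemma~2.1), mirroring the proof of \cref{lem:restricted_matching}; no reasoning about reachable sets or intervals is needed.

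Your approach instead reasons per slab about the set $S$ of $\delta$-safe entrances. The interval characterization $S = [\alpha,\beta]$ is correct, and the observation that between consecutive transit points the distance $f(x) = d(R(x),B(y^*))$ is monotone is also sound. But there are two genuine gaps. First, even granting that each slab's $[\alpha,\beta]$ contains a transit point, this is a per-slab existence statement; the lemma asserts a \emph{single} $\delta$-matching passing through transit points of \emph{all} slabs, and you never carry out the gluing (your opening sentence promises to ``modify $\pi$'', but the body never does). Gluing is not automatic: choosing, say, the leftmost transit point in each $[\alpha_i,\beta_i]$ does not yield a monotone sequence of points, so one would have to set up a bottom-to-top induction where the reachability interval at each level is re-derived relative to the previously chosen transit point. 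Second, the perturbation step at $\alpha$ — which you flag as the ``main obstacle'' — is indeed where the argument breaks. Knowing $f(\alpha) < \delta$ gives a free-space neighborhood of $(\alpha,y^*)$, but $\delta$-reachability of $(\alpha-\varepsilon, y^*)$ is a global property: the witness path from $(1,1)$ to $(\alpha,y^*)$ may arrive via a long vertical run at $x=\alpha$, and rerouting it through $x = \alpha - \varepsilon$ requires the entire vertical segment $\{\alpha-\varepsilon\} \times [y_0, y^*]$ to lie in $\F_\delta$ for the relevant $y_0$, which does not follow from local openness at $(\alpha,y^*)$. (Indeed, the very fact that $\alpha$ is the infimum of reachable $x$-values suggests some lower cell genuinely blocks $x < \alpha$.) The paper's construction avoids this entirely by never arguing about reachability, only about the cost of an explicitly built matching.
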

    \begin{proof}
        Suppose $\dF(R, B) \leq \delta$.
        We prove that there exists a $\delta$-matching between $R$ and $B$ that matches each point $b$ on $B$ to either a vertex of $R$, or a point locally closest to $b$.
                
        Let $(f, g)$ be a $\delta$-matching between $R$ and $B$.
        Based on $(f, g)$, we construct a new $\delta$-matching $(f', g')$ that satisfies the claim.
        Moreover, the matching $(f', g')$ has the property that for every matched pair $(r, b)$, the point $r$ is either a vertex or locally closest to $b$.

        Let $r_1, \dots, r_n$ and $b_1, \dots, b_m$ be the sequences of vertices of $R$ and $B$, respectively.
        For each vertex $r_i$, if $(f, g)$ matches it to a point interior to an edge $\overline{b_j b_{j+1}}$ of $B$, or to the vertex $b_j$, then we let $(f', g')$ match $r_i$ to the point on $\overline{b_j b_{j+1}}$ closest to it.
        Symmetrically, for each vertex $b_j$, if $(f, g)$ matches it to a point interior to an edge $\overline{r_i r_{i+1}}$ of $R$, or to the vertex $r_{i+1}$, then we let $(f', g')$ match $b_j$ to the point on $\overline{r_i r_{i+1}}$ closest to it.
        This point is either a vertex of $R$ or locally closest to $b_j$.

        Consider two maximal subsegments $\overline{r r'}$ and $\overline{b b'}$ of $R$ and $B$ where currently, $r$ is matched to $b$ and $r'$ is matched to $b'$.
        See~\cref{fig:restricted_matching} for an illustration of the following construction.
        Let $\overline{r r'} \subseteq \overline{r_i r_{i+1}}$ and $\overline{b b'} \subseteq \overline{b_j b_{j+1}}$.
        We have $r = r_i$ or $b = b_j$, as well as $r' = r_{i+1}$ or $b' = b_{j+1}$.

        Let $\hat{r}, \hat{r}' \in \overline{r r'}$ be the points closest to $b$ and $b'$, respectively.
        We let $(f', g')$ match $b$ to $\overline{r \hat{r}}$ and $b'$ to $\overline{\hat{r}' r'}$.
        The maximum distance from $b$ to a point on $\overline{r \hat{r}}$ is attained by $r$~\cite{cook10geodesic_frechet}, and symmetrically, the maximum distance from $b'$ to a point on $\overline{\hat{r}' r'}$ is attained by $r'$.
        Thus, these matches have a cost of at most $\dF(R, B)$.

        To complete the matching, we match $\overline{\hat{r} \hat{r}'}$ to $\overline{b b'}$ such that each point on $\overline{b b'}$ gets matched to the point on $\overline{\hat{r} \hat{r}'}$ closest to it.
        This is a proper matching, as the closest point on $\overline{\hat{r} \hat{r}'}$ moves continuously along the segment as we move continuously along $\overline{b b'}$.
        Since the original matching $(f, g)$ matches the segment $\overline{b b'}$ to part of $\overline{r_i r_{i+1}}$, these altered matchings do not increase the cost.
        This matching thus has cost at most $\dF(R, B)$.
    \end{proof}
    
    \noindent
    Our algorithm computes a $\delta$-safe transit exit for each slab.
    To do so, it requires the explicit entrance and exit intervals.
    We compute these using a geodesic Voronoi diagram.

    \begin{lemma}
    \label{lem:constructing_partition}
        The partition $\calH$ consists of $\bigO(m)$ slabs.
        We can compute $\calH$, together with the entrance and exit interval of each slab, in $\bigO((n+m) \log nm)$ time in total.
    \end{lemma}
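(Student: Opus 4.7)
The plan is to extract $\calH$ from a geodesic Voronoi diagram of $B$ inside $P$ and to recover each entrance and exit interval via a shortest-path query at the corresponding fan apex. First, I would compute the geodesic Voronoi diagram $\V$ of the curve $B$ inside $P$, treating each edge of $B$ as a site; this diagram has complexity $\bigO(n+m)$ and can be built in $\bigO((n+m)\log(n+m))$ time using standard shortest-path-map techniques for simple polygons. Walking $R$ through $\V$ yields the piecewise nearest-neighbor function $\mathrm{nn}\from R \to B$: points of $R$ lying in a single cell of $\V$ all share a common site, and their foot points form a connected subarc of $B$. The union of these subarcs is the set of near points on $B$, and its complement is the set of far points, which immediately gives the horizontal slab boundaries of $\calH$.

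To bound the slab count by $\bigO(m)$, I would show that $\mathrm{nn}$ is non-decreasing along $R$. Suppose for contradiction that $r_1 < r_2$ on $R$ have nearest neighbors $b_1, b_2$ with $b_1$ after $b_2$ along $B$. Then the four points appear in interleaved cyclic order on $\partial P$, so the geodesics $\pi(r_1, b_1)$ and $\pi(r_2, b_2)$ must cross at some interior point $p$. Combining the nearest-neighbor inequalities $d(r_1, b_1) \leq d(r_1, b_2)$ and $d(r_2, b_2) \leq d(r_2, b_1)$ with the geodesic decompositions $d(r_i, b_i) = d(r_i, p) + d(p, b_i)$ forces $d(p, b_1) = d(p, b_2)$, a degenerate bisector event that can be broken by a consistent tiebreaking rule. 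Hence $\mathrm{nn}$ is non-decreasing, so it visits each cell of $\V$ at most once. Since $\V$ has $\bigO(m)$ cells, there are $\bigO(m)$ maximal near arcs, and therefore $\bigO(m)$ slabs.

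Finally, for each slab I would compute the entrance and exit intervals. Each such interval has the form $\{(x, y_b) : R(x) \in L\}$ where $L$ is the leaf of $F_{r, b}(\delta)$, namely the maximal subcurve of $R$ containing $r$ that lies within geodesic distance $\delta$ of $b$. Using the shortest-path map of $P$ rooted at $b$, the geodesic distances from $b$ to the vertices of $R$ become available, and the two endpoints of $L$ can then be located by binary search in $\bigO(\log n)$ time per apex. Since the $\bigO(m)$ apices appear in monotone order along $B$, I would maintain the shortest-path information incrementally as the apex advances, using a funnel structure that supports local updates. This yields a total running time of $\bigO((n+m)\log nm)$. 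The main obstacle is precisely this incremental maintenance across all $\bigO(m)$ apices: each individual query is cheap, but amortizing the setup as the apex sweeps along $B$ requires careful use of funnel and shortest-path-tree structures in simple polygons.
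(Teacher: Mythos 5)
Your first two steps match the paper's approach: build the geodesic Voronoi diagram of $B$ inside $P$, read off the near/far decomposition, and bound the number of slabs by $\bigO(m)$. Your monotonicity argument for the nearest-neighbor map along $R$ is a reasonable way to justify the slab bound, though the paper simply appeals to the $\bigO(m)$ complexity of the Voronoi diagram together with general position to bound the boundary crossings. The genuine problem is in your final step, and it is not a matter of implementation detail: you propose to locate the endpoints of each leaf $L = R[x, x']$ by binary search over the vertices of $R$ using the distances $d(b, r_i)$. But the function $i \mapsto d(r_i, b)$ is not monotone along $R$, even on the side of $r$ you are searching -- a curve can approach and recede from $b$ several times -- so binary search may return an index where the distance exceeds $\delta$ without returning the \emph{first} such index, which is what determines the leaf endpoint. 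The only usable structure is the weaker fact from Cook and Wenk that along a single edge of $R$ the distance to a fixed point is unimodal, which justifies checking vertices but not skipping them.

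You flag the incremental maintenance of shortest-path information across the sweep of apices as ``the main obstacle'' without resolving it, and that is indeed where the content should be. The paper sidesteps both problems at once: rather than binary searching per apex, it scans $R$ linearly to find the right endpoint of the first entrance interval, and then, when moving to the next slab, it resumes the scan from where it left off rather than restarting. This amortization is sound precisely because of Lemma~\ref{lem:monotone_leaves} (monotonicity of fan leaves): as the apex advances along $B$, both endpoints of the leaf can only advance along $R$. This yields $\bigO((n + |\calH|)\log nm)$ total, using only the Cook--Wenk point-to-segment distance structure for the per-edge computations and avoiding any dynamic shortest-path-tree machinery. To repair your proof, you should replace the binary search with this monotone scan and cite Lemma~\ref{lem:monotone_leaves} as the amortization argument.
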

    \begin{proof}
        We first construct the \emph{geodesic Voronoi diagram} $\V_P(B)$ of $B$ inside $P$.
        This diagram is a partition of $P$ into regions containing those points for which the closest edge(s) of $B$ (under the geodesic distance) are the same.
        Points inside a cell have only one edge of $B$ closest to them, whereas points on the segments and arcs bounding the cells have multiple.
        The geodesic Voronoi diagram can be constructed in $\bigO((n+m) \log nm)$ time with the algorithm of Hershberger~and~Suri~\cite{hershberger99optimal_Euclidean_paths}.

        The points on $R$ that lie on the boundary of a cell of $\V_P(B)$ are precisely those that have multiple (two) points on $B$ closest to them.
        By general position assumptions, these points form a discrete subset of $R$.
        Furthermore, there are only $\bigO(m)$ such points.
        We identify these points $r$ by scanning over $\V_P(B)$.
        From this, we also get the two edges of $B$ containing $\NN(r)$, and we compute the set $\NN(r)$ in $\bigO(\log nm)$ time with the data structure of~\cite[Lemma~3.2]{cook10geodesic_frechet} (after $\bigO(n+m)$ preprocessing time).
        The sets $\NN(r)$ with cardinality two are precisely the ones determining the partition $\calH$.
        The partition consists of $\bigO(m)$ slabs.

        We first compute the right endpoints for the entrance intervals (and thus of the exit intervals) of slabs in $\calH$.
        The left endpoints can be computed by a symmetric procedure.
        
        We start by computing the last point in the leaf of $F_{R(1), B(1)}$.
        This point corresponds to the right endpoint of the entrance interval of the bottommost slab.
        To compute this endpoint, we determine the first vertex $r_i$ of $R$ with $d(r_i, B(1)) > \delta$.
        We do so in $\bigO(i \log nm)$ time by scanning over the vertices of $R$.
        For any edge $\overline{r_{i'} r_{i'+1}}$ of $R$, the distance $d(r, B(1))$ as $r$ varies from $r_{i'}$ to $r_{i'+1}$ first decreases monotonically to a global minimum and then increases monotonically~\cite[Lemma~2.1]{cook10geodesic_frechet}.
        Thus $d(r, B(1)) \leq \delta$ for all $r \in R[1, i-1]$.
        Moreover, the last point in the leaf of $F_{R(1), B(1)}(\delta)$ is the last point on $\overline{r_{i-1} r_i}$ with geodesic distance $\delta$ to $B(1)$.
        We compute this point in $\bigO(\log nm)$ time with the data structure of~\cite[Lemma~3.2]{cook10geodesic_frechet}.

        Let $H = [1, n] \times [1, y']$ be the bottommost slab of $\calH$.
        Next we compute the right endpoint of the entrance interval of the slab directly above $H$.
        Suppose $B(y') \in \NN(r)$.
        Let $r^* = R(x^*)$ be the last point in the leaf of $F_{R(1), B(1)}(\delta)$, so $(x^*, 1)$ is the right endpoint of the entrance interval of $H$.
        From the monotonicity of the fan leaves~(\cref{lem:monotone_leaves}), the last point in the leaf of $F_{r, B(y)}(\delta)$ comes after $r^*$ along $R$.
        Let $r = R(x)$.
        Given that $r$ is contained in the leaf of $F_{r, B(y)}(\delta)$ (by our assumption that there are no empty nearest neighbor fans), the endpoint we are looking for lies on $[\max\{x^*, x\}, n] \times \{y'\}$.
        We proceed as before, setting $R$ to be the subcurve $R[\max\{x^*, x\}, n]$ and $B$ to be the subcurve $B[y', m]$.

        The above iterative procedure takes $\bigO((n+|\calH|) \log nm)$ time, where $|\calH| = \bigO(m)$ is the number of slabs in $\calH$.
        This running time is asymptotically the same as the time taken to construct $\calH$.
    \end{proof}

\subsection{Slabs of near points}
\label{sub:near_points}

    Let $H_\near \in \calH$ be a slab corresponding to a subcurve $\hat{B}$ of $B$ with only near points.
    We use properties of nearest neighbor fans to determine a $\delta$-safe transit exit of $H_\near$.
    A crucial property is that the nearest neighbor fans behave monotonically with respect to their apexes, if $\delta$ is large enough.
    Specifically, this is the case if $\delta \geq \delta_H$, the geodesic Hausdorff distance between $R$ and $B$.
    This is the maximum distance between a point on $R \cup B$ and its closest point on the other curve.

    \begin{lemma}
    \label{lem:monotone_leaves}
        Suppose $\delta \geq \delta_H$.
        Let $b$ and $b'$ be near points on $B$ and let $R[x_1, x'_1]$ and $R[x_2, x'_2]$ be the leaves of their respective nearest neighbor fans.
        If $b$ comes before $b'$, then $x_1 \leq x_2$ and $x'_1 \leq x'_2$.
    \end{lemma}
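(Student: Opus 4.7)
My plan is to derive the monotonicity of the leaves in two stages: first establish that the nearest-neighbor correspondence itself is monotone (if $b$ precedes $b'$ on $B$ and $b \in \NN(r)$, $b' \in \NN(r')$, then $r$ precedes $r'$ on $R$), and then propagate this to the leaves via a single application of \cref{lem:shortcutting}.

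For the first stage I would argue by contradiction: suppose $r'$ strictly precedes $r$ on $R$. Traversing $\partial P$ from $R(1)$ forward along $R$ to $R(n) = B(m)$ and back along $B$ to $B(1)$, the four points appear in cyclic order $r', r, b', b$, so the endpoints of $\pi(r, b)$ and $\pi(r', b')$ interleave and the geodesics must cross at some interior point $p$. Additivity of geodesic length through $p$ combined with the triangle inequality for $d(r, b')$ and $d(r', b)$ via $p$ gives
\[
d(r, b') + d(r', b) \;\leq\; d(r, b) + d(r', b').
\]
Meanwhile $b \in \NN(r)$ and $b' \in \NN(r')$ yield $d(r, b) \leq d(r, b')$ and $d(r', b') \leq d(r', b)$, whose sum reverses the inequality, so equality holds throughout, placing us in the degenerate case $b, b' \in \NN(r) \cap \NN(r')$. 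Under the general position assumption used elsewhere in the paper (cf.\ \cref{lem:constructing_partition}) this cannot occur, so $r \leq r'$.

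For the second stage I establish $x_1' \leq x_2'$; the inequality $x_1 \leq x_2$ then follows from a symmetric argument on the opposite end. Suppose for contradiction $x_2' < x_1'$. The chain $x_1 \leq x_r \leq x_{r'} \leq x_2' < x_1'$ places $r'$ inside the first leaf, so the subcurve $R[r', R(x_1')]$ lies within geodesic distance $\delta$ of $b$. Now the geodesic $\pi(r', b')$ separates $P$ into a side whose boundary contains $R[r', R(n)]$ together with $B[b', B(m)]$, and a side whose boundary contains $R[R(1), r']$ together with $B[B(1), b']$. Every point $r'' \in R[r', R(x_1')]$ lies on the first side, while $b$ (which precedes $b'$) lies on the second; \cref{lem:shortcutting} applied to $\pi(r', b')$ with $b' \in \NN(r')$ then gives $d(r'', b') \leq d(r'', b) \leq \delta$. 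Hence $R[r', R(x_1')]$ also lies within distance $\delta$ of $b'$, and being a connected subcurve containing $r'$ it is contained in the second leaf $R[x_2, x_2']$; this forces $R(x_1') \in R[x_2, x_2']$, contradicting $x_2' < x_1'$.

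The main obstacle is the first stage: the crossing-geodesic argument only yields a chain of equalities rather than a strict contradiction, so the monotonicity of the nearest-neighbor map requires invoking general position to rule out the coincidence where two distinct points of $R$ share both $b$ and $b'$ as nearest neighbors. Once this is secured, the second stage is a clean, almost mechanical, application of \cref{lem:shortcutting}.
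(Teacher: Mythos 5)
Your proof is correct, and it is in fact more careful than the paper's own argument. The paper's proof goes directly by contradiction: supposing $x_1 > x_2$, it takes $\hat r \in R[x_2, x_1]$, observes $\hat r$ and $b'$ lie on opposite sides of $\pi(r,b)$, and concludes via \cref{lem:shortcutting} that $d(\hat r, b) \le d(\hat r, b') \le \delta$. But the second inequality $d(\hat r, b') \le \delta$ requires $\hat r$ to lie in the leaf of $b'$, which only holds if $x_1 \le x_2'$ --- a fact the paper never establishes. That inequality is exactly what your Stage~1 (monotonicity of the nearest-neighbor correspondence) delivers: once $x_r \le x_{r'}$ is known, one gets $x_1 \le x_r \le x_{r'} \le x_2'$ and the shortcutting step goes through cleanly, which is essentially your Stage~2. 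So your two-stage decomposition isolates and proves the lemma the paper implicitly relies on; the crossing-geodesics argument in Stage~1 (additivity through the crossing point, paired with the two nearest-neighbor inequalities, forces equality and hence $\NN(r) = \NN(r') = \{b,b'\}$) is a standard and correct way to get it. The one soft spot is the appeal to ``general position'' to rule out two distinct points of $R$ sharing the nearest-neighbor pair $\{b,b'\}$: the assumption the paper explicitly makes (near \cref{lem:constructing_partition}) is that points of $R$ with $|\NN|=2$ form a discrete set, which does not by itself preclude two of them having the same pair. You should state the slightly stronger (but equally standard) assumption that $R$ meets each geodesic bisector $\{p : d(p,b)=d(p,b')\}$ in at most one point, or equivalently that the map sending a point with $|\NN|=2$ to its nearest-neighbor pair is injective.
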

    \begin{proof}
        The assumption that $\delta \geq \delta_H$ implies that the distance between any point on $R$ and their nearest neighbor is at most $\delta$, thus ensuring that the leaves $R[x_1, x'_1]$ and $R[x_2, x'_2]$ are both non-empty.
        We prove via a contradiction that $x_1 \leq x_2$.
        The proof that $x'_1 \leq x'_2$ is symmetric.
        
        Suppose that $x_1 > x_2$.
        Let $b \in \NN(r)$.
        We naturally have that $R(x_1)$ comes before $r$, and thus $R(x_2)$ comes before $r$.
        The subcurve $R[x_2, x_1]$ and the point $b'$ therefore lie on opposite sides of $\pi(r, b)$.
        It therefore follows from~\cref{lem:shortcutting} that $d(\hat{r}, b) \leq d(\hat{r}, b') \leq \delta$ for all $\hat{r} \in R[x_2, x_1]$.
        By maximality of the leaf of $F_{r, b}$, we thus must have that $R[x_2, x_1]$ is part of the leaf, contradicting the fact that $R[x_1, x'_1]$ is the leaf.
    \end{proof}
    
    \noindent
    The monotonicity of the nearest neighbor fans, together with the fact that each point on $\hat{B}$ corresponds to such a fan, ensures that we can determine such an exit in logarithmic time (see~\cref{lem:advancing_through_fans}).
    We make use of the following data structure that reports transit exits:

    \begin{lemma}
    \label{lem:critical_exit_DS}
        Given the exit interval of $H_\near$, we can report the at most three transit exits on a horizontal line segment $[i, i+1] \times \{y\}$, for any integer $1 \leq i < n$, in $\bigO(\log nm)$ time, after $\bigO(n+m)$ preprocessing time.
    \end{lemma}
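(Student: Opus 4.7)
My plan is first to bound the number of transit exits on the segment using the unimodality of the distance function, and then to set up a geodesic query data structure to locate the at most one interior candidate in logarithmic time.

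A transit exit $(x, y)$ on $[i, i+1] \times \{y\}$ has $x$-coordinate either $i$ or $i+1$ (corresponding to the vertices $r_i$ and $r_{i+1}$), or an interior value $x \in (i, i+1)$ at which $R(x)$ is locally closest to $B(y)$. By the observation of Cook and Wenk that $r \mapsto d(r, B(y))$ is unimodal on $\overline{r_i r_{i+1}}$ (already invoked in the proof of~\cref{lem:constructing_partition}), at most one interior point of the edge is a local minimum, so there are indeed at most three candidates in total.

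For the preprocessing, I would triangulate $P$ and build the Guibas--Hershberger two-point geodesic query data structure for $P$ (the structure referenced via Cook and Wenk's Lemma~3.2) in $\bigO(n+m)$ time. This structure supports $\bigO(\log nm)$-time queries for the geodesic between any two points on $\partial P$. I will additionally rely on its standard extension to point-to-segment shortest-path queries: given a query point $q$ on $\partial P$ and a boundary edge $e$, it returns in $\bigO(\log nm)$ time the geodesically closest point $c$ on $e$ to $q$, which is either the foot of the perpendicular from the apex of the funnel from $q$ to $e$ or an endpoint of $e$. At query time, given $i$ and the exit interval $[x_L, x_R] \times \{y\}$, I check in $\bigO(1)$ time whether $(i, y)$ and $(i+1, y)$ lie in this interval and report them if so. Then I compute the geodesically closest point $c$ on $\overline{r_i r_{i+1}}$ to $B(y)$ using one such query; if $c$ is strictly interior to the edge and its parameter lies in $[x_L, x_R]$, I report it as the third transit exit.

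The main obstacle I expect is the clean invocation of the point-to-segment shortest-path structure, since the apex of the funnel from $B(y)$ to $\overline{r_i r_{i+1}}$ is a polygon vertex that need not be adjacent on the geodesics to both $r_i$ and $r_{i+1}$. I expect this to follow from a standard extension of Guibas--Hershberger; should a direct derivation be preferable, one can obtain the apex by first computing the last polygon vertices on $\pi(B(y), r_i)$ and $\pi(B(y), r_{i+1})$ via the base structure, then traversing the (implicitly represented) short funnel between them to identify the apex and take its perpendicular foot on the edge, all within the same logarithmic time budget.
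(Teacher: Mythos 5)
Your proposal is correct and takes essentially the same approach as the paper: identify the two endpoint candidates $(i,y)$ and $(i+1,y)$ directly, use unimodality of $x \mapsto d(R(x), B(y))$ on the edge to see there is at most one interior candidate (the locally closest point), and compute that candidate with the Cook--Wenk point-to-edge geodesic distance structure in $\bigO(\log nm)$ time after $\bigO(n+m)$ preprocessing. The paper simply cites Cook and Wenk's Lemma~3.2 as a black box, whereas you unpack it into its Guibas--Hershberger funnel-apex machinery; that is extra exposition, not a different argument.
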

    \begin{proof}
        If $(i, y)$ or $(i+1, y)$ is an exit of $H_\near$, then it is also naturally a transit exit, as $R(i)$ and $R(i+1)$ correspond to vertices.
        The third transit exit is of the form $(x, y)$ where $R(x)$ is locally closest to $B(y)$.
        To compute $R(x)$, we make use of the data structure of Cook and Wenk~\cite[Lemma~3.2]{cook10geodesic_frechet}.
        This data structure takes $\bigO(n+m)$ time to construct, and given a query consisting of the edge $R[i, i+1]$ and the point $B(y)$, reports the minimum distance between $B(y)$ and a point on $R[i, i+1]$ in $\bigO(\log nm)$ time.
        Additionally, given this distance, it reports the point achieving this distance in $\bigO(\log nm)$ time.
        This point is locally closest to $B(y)$.
    \end{proof}

    \begin{lemma}
    \label{lem:advancing_through_fans}
        Suppose $\delta \geq \delta_H$.
        Let $H_\near \in \calH$ be a slab corresponding to a subcurve of $B$ with only near points.
        Given the exit interval of $H_\near$, together with a $\delta$-safe transit entrance, we can compute a $\delta$-safe transit exit in $\bigO(\log nm)$ time, after $\bigO(n+m)$ preprocessing time.
    \end{lemma}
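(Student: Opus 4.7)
The plan is to exploit the monotonicity of nearest-neighbor fan leaves across $H_\near$ to pin down a canonical target point on the exit interval, then snap this target to a transit exit via \cref{lem:critical_exit_DS}.

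First, I would formalize the region $\calR$ alluded to in \cref{sub:near_points}: for each height $y$ between the heights $y_0$ and $y_1$ of the entrance and exit intervals, the nearest-neighbor fan leaf of $B(y)$ induces a horizontal slice $[x_1(y), x_2(y)] \times \{y\}$ of $\calR$. By \cref{lem:monotone_leaves} both $x_1$ and $x_2$ are nondecreasing, and by \cref{lem:matching_to_fan} every $\delta$-matching is confined to $\calR \subseteq \F_\delta(R,B)$. The entrance and exit intervals coincide with the bottom and top slices of $\calR$, so the given $\delta$-safe transit entrance $(x_0, y_0)$ satisfies $x_1(y_0) \leq x_0 \leq x_2(y_0) \leq x_2(y_1)$.

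Second, setting $x^* := \max(x_0, x_1(y_1)) \in [x_1(y_1), x_2(y_1)]$, I would verify that $(x^*, y_1)$ is $\delta$-reachable from $(x_0, y_0)$ along a bimonotone path inside $\calR$: move straight up from $(x_0, y_0)$ until hitting the left boundary of $\calR$, then track this nondecreasing boundary up to height $y_1$. Conversely, by monotonicity of $x_1$, no point $(x, y_1)$ with $x < x^*$ is $\delta$-reachable from $(x_0, y_0)$. Since the entire exit interval lies in $\F_\delta(R,B)$, rightward motion along it is free, so the leftmost transit exit on the exit interval at or after $x^*$ is $\delta$-safe whenever \emph{any} transit exit there is $\delta$-safe (an $(n,m)$-reachability witness from a later exit extends by prepending a rightward move).

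Third, I would locate this leftmost transit exit in $\bigO(\log nm)$ time with a single query. Let $i = \lfloor x^* \rfloor$ and invoke \cref{lem:critical_exit_DS} on $[i, i+1] \times \{y_1\}$, using its $\bigO(n+m)$ preprocessing. Among the at most three returned transit exits, output the leftmost one lying in $[x_1(y_1), x_2(y_1)]$ with $x$-coordinate $\geq x^*$; if none exists we report that no $\delta$-safe transit exit is available. One query suffices because if $i+1 \leq x_2(y_1)$ then the vertex $R(i+1)$ is already a transit exit on the queried strip (hence returned), and if $i+1 > x_2(y_1)$ then the exit interval ends strictly inside $[i, i+1]$ so no later edge intersects it. The main difficulty will be the safety exchange---arguing that taking the leftmost $\delta$-reachable transit exit at height $y_1$ does not forfeit $(n,m)$-reachability---which hinges on \cref{lem:monotone_leaves} together with the observation that horizontal motions along the exit interval stay in $\F_\delta(R,B)$.
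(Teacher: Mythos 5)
Your proposal is correct and follows essentially the same route as the paper: use the monotonicity and connectedness of the nearest-neighbor-fan region (\cref{lem:monotone_leaves}) to argue that the leftmost reachable transit exit is $\delta$-safe, then locate it with a single query to the data structure of \cref{lem:critical_exit_DS}. Your more explicit $x_1(y), x_2(y), x^*$ bookkeeping and the one-query case analysis simply spell out details the paper leaves terse.
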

    \begin{proof}
        Let $B[y, y']$ be the subcurve of $B$ corresponding to $H_\near$.
        Because $B[y, y']$ contains only near points, each point $B(\hat{y}) \in B[y, y']$ is the apex of some nearest neighbor fan $F_{r, B(\hat{y})}(\delta)$, which in turn corresponds to a horizontal line segment $[x, x'] \times \{\hat{y}\}$ inside $\F_\delta(R, B)$.
        By~\cref{lem:monotone_leaves}, the union of these line segments is $x$- and $y$-monotone.
        Furthermore, observe that this union is connected.
        Indeed, the assumption that $\delta \geq \delta_H$ implies that the distance between any point on $R$ and their nearest neighbor is at most $\delta$, thus ensuring that each nearest neighbor fan has a non-empty leaf that corresponds to a non-empty line segment in the parameter space.

        Let $p_\enter$ be a $\delta$-safe transit entrance of $H_\near$ and let $q_\exit$ be the leftmost transit exit of $H_\near$ that lies to the right of $p_\enter$.
        The monotonicity and connectedness of the region corresponding to the nearest neighbor fans, together with the fact that it lies inside the $\delta$-free space, implies that $q_\exit$ is the leftmost transit exit of $H_\near$ that is $\delta$-reachable from $p_\enter$.
        Since all transit exits to the right of $q_\exit$ are $\delta$-reachable from $q_\exit$, the exit $q_\exit$ is $\delta$-safe.

        Let $[i, i+1] \times \{y'\}$ be the leftmost line segment of this form that has a non-empty intersection with the exit interval of $H_\near$, and where $(i+1, y')$ lies to the right of $p_\enter$.
        The point $q_\exit$ is the leftmost transit exit on this line segment that lies to the right of $p_\enter$.
        We compute $q_\exit$ in $\bigO(\log nm)$ time with the data structure of~\cref{lem:critical_exit_DS}.
    \end{proof}

\subsection{Slabs of far points}
\label{sub:far_points}

    Next we give an algorithm for computing a $\delta$-safe transit exit of a given slab $H_\far \in \calH$ that corresponds to a subcurve of $B$ with only far points on its interior.
    Our algorithm is approximate: given $\eps > 0$, it computes an \emph{$(\eps, \delta)$-safe} transit exit (if one exists).
    This is a $(1+\eps)\delta$-reachable point from which $(n, m)$ is $\delta$-reachable.
    Such a transit exit behaves like a $\delta$-safe transit exit for the purpose of iteratively constructing a matching.

    To compute an $(\eps, \delta)$-safe transit exit, we make use of an approximate decision algorithm that uses the fact that $\hat{B}$ has only far points on its interior.
    We present this algorithm in~\cref{subsub:decision_far_points}.
    In~\cref{subsub:approximate_reachable_exits} we then apply this approximate decision algorithm in a search procedure, where we search over the $\bigO(n)$ transit exits to compute an $(\eps, \delta)$-safe one.

\subsubsection{Approximate decision algorithm for far points}
\label{subsub:decision_far_points}

    Let $\hat{B}$ be the subcurve of $B$ that corresponds to $H_\far$, so its interior has only far points.
    Let $\hat{R}$ be an arbitrary subcurve of $R$, for which we seek to approximately determine whether $\dF(\hat{R}, \hat{B}) \leq \delta$.
    We report either that $\dF(\hat{R}, \hat{B}) \leq (1+\eps)\delta$, or that $\dF(\hat{R}, \hat{B}) > \delta$.
    
    For our algorithm, we first discretize the space of geodesics between points on $\hat{R}$ and points on $\hat{B}$, by grouping the geodesics into few ($\bigO(1/\eps)$) groups and rerouting the geodesics in a group through some representative point.
    Let $\hat{b}$ and $\hat{b}'$ be the first, respectively last, endpoints of $\hat{B}$.
    There is a point $r$ on $R$ with $\NN(r) = \{\hat{b}, \hat{b}'\}$.
    We observe that this implies the geodesic $\pi(\hat{b}, \hat{b}')$ that connects $\hat{b}$ to $\hat{b}'$ is short with respect to the \f distance, and thus with respect to any relevant value for $\delta$:

    \begin{lemma}
        $d(\hat{b}, \hat{b}') \leq 2\dF(\hat{R}, \hat{B})$.
    \end{lemma}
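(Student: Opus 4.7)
The plan is to construct a short path from $\hat{b}$ to $\hat{b}'$ inside $P$ via an intermediate point on $R$, and then bound its length using the nearest-neighbor property together with any Fr\'echet matching of $\hat{R}$ to $\hat{B}$.

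The key structural observation---noted just before the lemma---is that there exists a point $r \in R$ with $\NN(r) = \{\hat{b}, \hat{b}'\}$. This transition point is forced to exist because $\hat{b}$ and $\hat{b}'$ are the two near endpoints of $\hat{B}$ bordering the slab $H_\far$ of far points; by the monotonicity of the nearest-neighbor fans (\cref{lem:monotone_leaves}) together with the construction of the partition $\calH$ in \cref{lem:constructing_partition}, the corresponding point $r$ on $R$ lies on the subcurve $\hat{R}$ we match against $\hat{B}$. Concatenating the two geodesics $\pi(\hat{b}, r)$ and $\pi(r, \hat{b}')$ inside $P$ yields a (not necessarily shortest) path from $\hat{b}$ to $\hat{b}'$, so the triangle inequality for the geodesic distance gives
\[
d(\hat{b}, \hat{b}') \;\leq\; d(\hat{b}, r) + d(r, \hat{b}') \;=\; 2\,d(r, \NN(r)),
\]
where the equality uses that both $\hat{b}$ and $\hat{b}'$ are nearest neighbors of $r$.

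It then remains to bound $d(r, \NN(r))$ by $\dF(\hat{R}, \hat{B})$. Take any Fr\'echet matching of $\hat{R}$ to $\hat{B}$ of cost $\delta_F := \dF(\hat{R}, \hat{B})$. Since $r \in \hat{R}$, the matching sends $r$ to some point $b^{*} \in \hat{B} \subseteq B$ with $d(r, b^{*}) \leq \delta_F$. The defining property of $\NN(r)$ then gives $d(r, \NN(r)) = d(r, \hat{b}) \leq d(r, b^{*}) \leq \delta_F$, and chaining this with the previous display yields $d(\hat{b}, \hat{b}') \leq 2\delta_F$, as desired.

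The main obstacle is the geometric claim that $r$ lies on $\hat{R}$---everything else is just the triangle inequality and the definition of a matching. This is also why the statement is naturally phrased in terms of $\dF(\hat{R}, \hat{B})$ rather than $\dF(R, B)$: the argument leverages the fact that $\hat{R}$ is precisely the subcurve singled out by the slab $H_\far$ and must therefore span the full nearest-neighbor transition witnessed by $r$.
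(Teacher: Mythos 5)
Your proof handles only the case $r \in \hat{R}$, but this is not the general situation, and the justification you give for it does not hold. The lemma is stated (and invoked, e.g. in \cref{lem:far_points_decider} and during the exponential search of \cref{lem:matching_far_points}) for an \emph{arbitrary} subcurve $\hat{R}$ of $R$; in particular the candidate subcurves $R[x, x']$ tried during the search need not contain the transition point $r$ with $\NN(r) = \{\hat{b}, \hat{b}'\}$. For instance, the entrance $R(x)$ and exit $R(x')$ both lie in the leaves of the fans $F_{r, \hat{b}}(\delta)$ and $F_{r, \hat{b}'}(\delta)$, each of which contains $r$, but nothing forces $x \leq$ the parameter of $r$ $\leq x'$; both $R(x)$ and $R(x')$ can lie strictly after $r$ (or strictly before it). So the appeal to \cref{lem:monotone_leaves} and \cref{lem:constructing_partition} does not establish $r \in \hat{R}$.

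When $r \notin \hat{R}$, the chain $d(\hat{b}, \hat{b}') \leq d(\hat{b}, r) + d(r, \hat{b}')$ still holds, but you can no longer bound $d(r, \NN(r))$ by $\dF(\hat{R}, \hat{B})$ via a matching, since $r$ is not a point that gets matched. The paper's proof handles this case separately: WLOG $\hat{R}$ lies before $r$ along $R$, so the last endpoint $\hat{r}$ of $\hat{R}$ is matched to $\hat{b}'$, giving $d(\hat{r}, \hat{b}') \leq \dF(\hat{R}, \hat{B})$, and then \cref{lem:shortcutting} (applied with $\hat{r}$ and $\hat{b}'$ on opposite sides of the geodesic $\pi(r, \hat{b})$) gives $d(\hat{r}, \hat{b}) \leq d(\hat{r}, \hat{b}')$, so the triangle inequality through $\hat{r}$ instead of through $r$ closes the bound. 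Your proof is missing this second case and the use of the shortcutting lemma that makes it work.
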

    \begin{proof}
        Let $r \in R$ be a point with $\NN(r) = \{\hat{b}, \hat{b}'\}$.
        If $r \in \hat{R}$, then due to the properties of nearest neighbors, we have $d(r, \hat{b}) \leq \dF(\hat{R}, \hat{B})$ and $d(r, \hat{b}') \leq \dF(\hat{R}, \hat{B})$.
        The claim then follows from the triangle inequality.

        If $r \notin \hat{R}$, then the subcurve $\hat{R}$ either comes before $r$ along $R$, or after.
        Suppose without loss of generality that $\hat{R}$ comes before $r$.
        Let $\hat{r}$ be the last endpoint of $\hat{R}$.
        By definition of \f distance, we have $d(\hat{r}, \hat{b}') \leq \dF(\hat{R}, \hat{B})$.
        Additionally, it follows from~\cref{lem:shortcutting} that $d(\hat{r}, \hat{b}) \leq d(\hat{r}, \hat{b}') \leq \dF(\hat{R}, \hat{B})$.
        The claim again follows from the triangle inequality.
    \end{proof}
    
    \noindent
    We assume for the remainder that $d(\hat{b}, \hat{b}') \leq 2\delta$; if this is not the case, we immediately report that $\dF(\hat{R}, \hat{B}) > \delta$.
    This assumption means that the geodesic $\pi(\hat{b}, \hat{b}')$ is a short separator between $\hat{R}$ and $\hat{B}$.
    That is, any geodesic between a point on $\hat{R}$ and a point on $\hat{B}$ intersects $\pi(\hat{b}, \hat{b}')$.
    For clarity, we denote by $\pi_\sep$ the separator $\pi(\hat{b}, \hat{b}')$.
    We use the short separator to formulate the (exact) reachability problem as $\bigO(1/\eps)$ subproblems involving one-dimensional curves.
    This is where we incur a small approximation error.

    \begin{figure}
        \centering
        \includegraphics[page=2]{far_points.pdf}
        \caption{(left) Snapping a geodesic (orange) to an anchor.
        (right) The eight regions in the parameter space of $\hat{R}$ and $\hat{B}$ corresponding to the first eight (out of nine) anchors.
        The orange geodesic lies in region $\calR_5$.}
        \label{fig:far_points}
    \end{figure}
    
    We discretize $\pi_\sep$ with $K+1 = \bigO(1/\eps)$ points $\hat{b} = a_1, \dots, a_{K+1} = \hat{b}'$, which we call \emph{anchors}, and ensure that consecutive anchors at most a distance $\eps \delta$ apart, see~\cref{fig:far_points}~(left).
    We assume that no anchor coincides with a vertex of $\hat{R} \cup \hat{B}$ (except for $a_1$ and $a_{K+1}$).
    
    We route geodesics between points on $\hat{R}$ and points on $\hat{B}$ through these anchors.
    Specifically, for points $\hat{r} \in \hat{R}$ and $\hat{b} \in \hat{B}$, if $\pi(\hat{r}, \hat{b})$ intersects $\pi_\sep$ between consecutive anchors $a_k$ and $a_{k+1}$, then we ``snap'' $\pi(\hat{r}, \hat{b})$ to $a_k$; that is, we replace it by the union of $\pi(\hat{r}, a_k)$ and $\pi(a_k, \hat{b})$ (see~\cref{fig:far_points}~(right)).
    This creates a new path between $\hat{r}$ and $\hat{b}$ that goes through $a_k$ and has length at most $d(\hat{r}, \hat{b}) + \eps \delta$.

    We associate an anchor $a_k$ with the points $(x, y)$ in the parameter space for which the geodesic $\pi(\hat{R}(x), \hat{B}(y))$ is snapped to $a_k$.
    These points form a region $\calR_k$ that is connected and monotone in both coordinates (see~\cref{fig:far_points}~(right)).
    We iteratively compute, for each region $\calR_k$, a set of $(1+\eps)\delta$-reachable points on its boundary, such that the decision question can be answered by checking if the last set contains $(|\hat{R}|, |\hat{B}|)$.
    We later formulate these subproblem in terms of pairs of one-dimensional curves that are separated by a point.

    We first discretize the problem.
    For this, we identify sets of points on the shared boundaries between the pairs of adjacent regions, such that there exists a $\delta$-matching that enters and exits each region through such a set.
    Specifically, for $k = 2, \dots, K$, we let the set $S_k$ contain those points $(x, y) \in \calR_{k-1} \cap \calR_k$ for which one of $R(x)$ and $B(y)$ is a vertex or locally closest to $a_k$ (so perturbing the point infinitesimally along its curve increases its distance to $a_k$).
    We set $S_1 = \{(1, 1)\}$.
    In the following lemmas, we show that these sets suit our needs and are efficiently constructable.
    
    \begin{figure}
        \centering
        \includegraphics{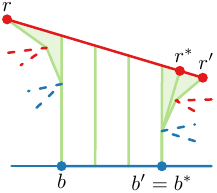}
        \caption{The construction in~\cref{lem:restricted_matching}.
        Non-dashed segments are $\overline{r_i r_{i+1}}$ and $\overline{b_j b_{j+1}}$.}
        \label{fig:restricted_matching}
    \end{figure}

    \begin{lemma}
    \label{lem:restricted_matching}
        If there exists a $\delta$-matching between $\hat{R}$ and $\hat{B}$, then there exists one that goes through a point in $S_k$ for every anchor $a_k$.
    \end{lemma}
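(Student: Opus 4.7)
The plan is to modify any given $\delta$-matching between $\hat R$ and $\hat B$ into a canonical form that, for every anchor $a_k$ simultaneously, crosses the boundary $\calR_{k-1}\cap\calR_k$ at a point of $S_k$.

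I would first apply the canonical-matching construction from the proof of \cref{lem:critical_exits} to obtain a $\delta$-matching in which every matched pair $(r,b)$ satisfies one of: $r$ is a vertex of $\hat R$, $b$ is a vertex of $\hat B$, or $r$ is locally closest to $b$ on its edge. For each anchor $a_k$, this matching crosses the boundary curve $\calR_{k-1}\cap\calR_k$, which is precisely the set of pairs $(x,y)$ for which $\pi(\hat R(x),\hat B(y))$ passes through $a_k$, at some first point $(x_k^*,y_k^*)$. If $\hat R(x_k^*)$ or $\hat B(y_k^*)$ is a vertex, then $(x_k^*,y_k^*)\in S_k$ immediately by definition.

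Otherwise, by the canonical form, $\hat R(x_k^*)$ lies in the interior of an edge and is locally closest to $\hat B(y_k^*)$. I would then argue this implies $\hat R(x_k^*)$ is also locally closest to $a_k$, which places $(x_k^*,y_k^*)$ in $S_k$. The key observation is that $a_k$ lies in the relative interior of the separator $\pi_\sep$ and is neither a vertex of $\hat R\cup\hat B$ nor a reflex vertex of $P$, so the geodesic $\pi(\hat R(x_k^*),\hat B(y_k^*))$ passes through $a_k$ without bending there. Consequently its initial direction at $\hat R(x_k^*)$ coincides with the initial direction of $\pi(\hat R(x_k^*),a_k)$. Since the first-order variation of geodesic distance as one endpoint slides along an edge depends only on this initial direction (being the cosine of the angle the geodesic makes with the edge), the functions $r\mapsto d(r,\hat B(y_k^*))$ and $r\mapsto d(r,a_k)$ have identical first-order behavior at $\hat R(x_k^*)$, and local convexity of geodesic distance at a critical point on an edge upgrades this first-order agreement to an agreement on being a local minimum.

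The main obstacle I expect to handle is the rigorous identification of ``locally closest to $\hat B(y_k^*)$'' with ``locally closest to $a_k$.'' The first-order equivalence via the shared initial geodesic direction is elementary, but turning it into an equivalence of local-minimality statements requires an explicit argument about local convexity of geodesic distance along an edge and about the structural stability of the combinatorial bending pattern of the geodesic under small perturbations of the moving endpoint on $\hat R$. A symmetric argument covers the case in which $\hat B(y_k^*)$ is the one in the interior of its edge.
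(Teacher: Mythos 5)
Your overall strategy matches the paper's: reduce a given $\delta$-matching to a canonical form in which every matched pair has one endpoint a vertex or locally closest to the other, then observe that at the crossing of $\calR_{k-1}\cap\calR_k$ this yields a point of $S_k$, because ``locally closest to the matched point'' transfers to ``locally closest to $a_k$.'' Your expansion of that last step is correct and is a genuine improvement over the paper's terse remark that the transfer holds ``naturally'': since $a_k$ is not a vertex of $P$, the geodesic $\pi(\hat R(x),\hat B(y))$ does not bend at $a_k$, so $\pi(\hat R(x),a_k)$ has the same initial direction at $\hat R(x)$; the derivative of geodesic distance along an edge is determined by this direction, so the two distance functions share a critical point; and by \cite[Lemma~2.1]{cook10geodesic_frechet} (unimodality of geodesic distance restricted to a segment) a critical point is the unique local minimum, which upgrades the first-order agreement to agreement in local-minimality exactly as you anticipate.

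However, there is a genuine gap in the first step. You invoke the construction from the proof of \cref{lem:critical_exits} and claim it yields a matching in which \emph{every} matched pair $(r,b)$ has $r$ a vertex, $b$ a vertex, or $r$ locally closest to $b$. That construction does not guarantee this. In the subsegment step it matches a point $b$ (which may be interior to an edge, namely the foot of a vertex $r_i$ of $R$) to the entire segment $\overline{r\hat r}\subseteq R$, and at interior points $\tilde r\in\overline{r\hat r}$ none of the three conditions holds: $\tilde r$ is not a vertex, $b$ is not a vertex, $\tilde r$ is not locally closest to $b$ (only $\hat r$ is), and $b$ is not locally closest to $\tilde r$ (only to $r_i$). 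That is harmless for \cref{lem:critical_exits}, which only needs the property at the fixed slab-boundary values of $b$, where the segment's endpoints $(r_i,b)$ and $(\hat r,b)$ already suffice. But \cref{lem:restricted_matching} needs the property at whichever unknown pair the bimonotone path happens to cross $\calR_{k-1}\cap\calR_k$, which could be one of those interior points. The paper's proof is therefore not a reuse of \cref{lem:critical_exits}: it performs a subtly different construction, choosing the bichromatic closest pair $(r^*,b^*)$ on the subsegments and then matching $\overline{rr^*}$ to $\overline{\tilde b b^*}$ by nearest-point, so that whenever one side traverses a segment the other side always lands on its closest point (hence a vertex or a locally closest point), and whenever one side is pinned to a single point that point is a vertex. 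Your proposal would need to be amended to use this (or an equivalent) construction rather than importing the one from \cref{lem:critical_exits} verbatim.
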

    \begin{proof}
        Let $(f, g)$ be a $\delta$-matching between $R$ and $B$.
        Based on $(f, g)$, we construct a new $\delta$-matching $(f', g')$ that satisfies the claim.

        For each vertex $r_i$ of $R$, if $(f, g)$ matches it to a point interior to an edge $\overline{b_j b_{j+1}}$ of $B$, or to the vertex $b_j$, then we let $(f', g')$ match $r_i$ to the point on $\overline{b_j b_{j+1}}$ closest to it.
        This point is either $b_j$, $b_{j+1}$, or it is locally closest to $r_i$.
        Symmetrically, for each vertex $b_j$ of $B$, if $(f, g)$ matches it to a point interior to an edge $\overline{r_i r_{i+1}}$ of $R$, or to the vertex $r_{i+1}$, then we let $(f', g')$ match $b_j$ to the point on $\overline{r_i r_{i+1}}$ closest to it.
        This point is either a vertex or locally closest to $b_j$.

        Consider two maximal subsegments $\overline{r r'}$ and $\overline{b b'}$ of $R$ and $B$ where currently, $r$ is matched to $b$ and $r'$ is matched to $b'$.
        See~\cref{fig:restricted_matching} for an illustration of the following construction.
        Let $\overline{r r'} \subseteq \overline{r_i r_{i+1}}$ and $\overline{b b'} \subseteq \overline{b_j b_{j+1}}$.
        We have $r = r_i$ or $b = b_j$, as well as $r' = r_{i+1}$ or $b' = b_{j+1}$.

        Let $r^* \in \overline{r r'}$ and $b^* \in \overline{b b'}$ minimize the geodesic distance $d(r^*, b^*)$ between them.
        It is clear that $r^*$ and $b^*$ are both either vertices or locally closest to the other point.
        We let $(f', g')$ match $r^*$ to $b^*$.
        Since $(f, g)$ originally matched $r^*$ and $b^*$ to points on $\overline{r_i r_{i+1}}$ and $\overline{b_j b_{j+1}}$, respectively, we have $d(r^*, b^*) \leq \delta$.
        Next we define the part of $(f', g')$ that matches $\overline{r r^*}$ to $\overline{b b^*}$.

        Suppose $r = r_i$ and let $\tilde{b} \in \overline{b b^*}$ be the point closest to $r$.
        We let $(f', g')$ match $r$ to $\overline{b \tilde{b}}$, and match each point on $\overline{r r^*}$ to its closest point on $\overline{\tilde{b} b^*}$.
        This is a proper matching, as the closest point on a segment moves continuously along the segment as we move continuously along $R$.

        The cost of matching $r$ to $\overline{b \tilde{b}}$ is at most $d(r, b) \leq \delta$, since the maximum distance from $r$ to $\overline{b \tilde{b}}$ is attained by $b$ or $\tilde{b}$~\cite{cook10geodesic_frechet}.
        For the cost of matching $\overline{r r^*}$ to $\overline{\tilde{b} b^*}$, observe that the point on $\overline{\tilde{b} b^*}$ closest to a point $\tilde{r} \in \overline{r r^*}$ is also the point on $\overline{b_j b_{j+1}}$ closest to it.
        This is due to $\tilde{b}$ being closest to $r$ among the points on $\overline{b_j b_{j+1}}$ and $b^*$ being locally closest to $r^*$, which means it is closest to $r^*$ among the points on $\overline{b_j b_{j+1}}$.
        It follows that the cost of matching $\overline{r r^*}$ to $\overline{\tilde{b} b^*}$ is at most $\delta$, since $(f, g)$ matches $\overline{r r^*}$ to a subset of $\overline{b_j b_{j+1}}$ and $(f', g')$ matches each point on $\overline{r r^*}$ to its closest point on $\overline{b_j b_{j+1}}$.

        We define a symmetric matching of cost at most $\delta$ between $\overline{r r^*}$ and $\overline{b b^*}$ when $b = b_j$.
        Also, we symmetrically define a matching of cost at most $\delta$ between $\overline{r^* r'}$ and $\overline{b^* b'}$.

        The matching $(f', g')$ has cost at most $\delta$.
        Also, for any pair of matched points $R(x)$ and $B(y)$ where $\pi(R(x), B(y))$ goes through an anchor $a_k$, we have that one of $R(x)$ and $B(y)$ is a vertex, or locally closest to the other point.
        In the latter case, the point is naturally locally closest to $a_k$ as well.
        Thus $(x, y) \in S_k$.        
    \end{proof}

    \begin{lemma}
        Each set $S_k$ contains $\bigO(|\hat{R}|+|\hat{B}|)$ points and can be constructed in $\bigO((|\hat{R}|+|\hat{B}|) \log nm)$ time, after $\bigO(n+m)$ preprocessing time.
    \end{lemma}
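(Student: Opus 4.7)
The key geometric object is the shared boundary $\gamma_k$ in parameter space, consisting of those pairs $(x, y)$ for which $\pi(\hat{R}(x), \hat{B}(y))$ passes exactly through the anchor $a_k$; all of $S_k$ lies on $\gamma_k$. The plan is to show that $\gamma_k$ is a bimonotone curve of combinatorial complexity $\bigO(|\hat{R}|+|\hat{B}|)$, from which both the cardinality bound on $S_k$ and the claimed construction time will follow. Bimonotonicity follows from a standard non-crossing argument: since $a_k$ lies on the separator $\pi_\sep$ between $\hat{R}$ and $\hat{B}$, the point where $\pi(\hat{R}(x), \hat{B}(y))$ crosses $\pi_\sep$ advances monotonically along $\pi_\sep$ as either $x$ or $y$ advances (two geodesics between interleaved pairs of points on $\hat{R}$ and $\hat{B}$ cannot cross without contradicting shortest-path optimality), so moving forward along $\gamma_k$ forces both coordinates to weakly increase.

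To bound $|S_k|$, I would view $\gamma_k$ as a monotone staircase on the grid in parameter space whose cells correspond to pairs of edges of $\hat{R}$ and $\hat{B}$. By bimonotonicity, $\gamma_k$ visits at most $\bigO(|\hat{R}|+|\hat{B}|)$ cells and crosses at most $|\hat{R}|+|\hat{B}|$ integer grid lines. Each grid-line crossing contributes at most one vertex-type point to $S_k$, and within each cell the unimodality of geodesic distance along an edge (\cite[Lemma~2.1]{cook10geodesic_frechet}) ensures that each of the two participating edges carries at most one point locally closest to $a_k$, giving $\bigO(1)$ locally-closest events per cell. Summing yields $|S_k| = \bigO(|\hat{R}|+|\hat{B}|)$.

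For the construction, I would walk along $\gamma_k$ from its lower-left endpoint, using the query structure of \cref{lem:critical_exit_DS} (built once during the global $\bigO(n+m)$ preprocessing) to advance one event at a time. At each step, I ask in $\bigO(\log nm)$ time where $\gamma_k$ exits the current cell and whether either participating edge carries a locally-closest-to-$a_k$ point inside the current portion of $\gamma_k$; each event is appended to $S_k$ and the walk advances to the adjacent cell. With $\bigO(|\hat{R}|+|\hat{B}|)$ events and $\bigO(\log nm)$ per step, the construction takes the claimed $\bigO((|\hat{R}|+|\hat{B}|)\log nm)$ time. The main obstacle I anticipate is verifying the staircase structure in the presence of reflex polygon vertices: the geodesic $\pi(\hat{R}(x), \hat{B}(y))$ may bend at a reflex vertex of $P$ before reaching $a_k$, so I need to argue that such bends do not introduce additional vertex-type or locally-closest events beyond the claimed counts, and that the query data structure (possibly extended via the shortest-path tree rooted at $a_k$) still answers each per-step query in logarithmic time.
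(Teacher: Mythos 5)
Your bound on $|S_k|$ is correct and essentially equivalent in content to the paper's, just reached via a detour through the bimonotonicity of $\gamma_k$. The paper avoids proving bimonotonicity entirely: it counts directly, observing that (a) only $|\hat R|+|\hat B|$ geodesics through $a_k$ end at a vertex of $\hat R$ or $\hat B$, and (b) on each edge there is at most one point locally closest to $a_k$, contributing $|\hat R|+|\hat B|-2$ more geodesics. Your non-crossing argument for monotonicity of the crossing point along $\pi_\sep$ is sound, so both routes work; the paper's is simply more economical.

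The construction step is where your proposal has a real gap. You propose to walk along $\gamma_k$ and advance one event at a time ``using the query structure of Lemma~\ref{lem:critical_exit_DS}.'' But that data structure (built on Cook and Wenk's Lemma~3.2) answers queries of the form ``given an edge of $R$ and a \emph{point} of $B$, what is the closest/locally-closest point on the edge.'' It lets you find, within a cell, the point locally closest to $a_k$ on a given edge, which you do need. It does \emph{not} tell you where the curve $\gamma_k$ exits the current cell, because that requires solving ``for which pairs $(\hat r,\hat b)$ with $\hat r\in\overline{r_ir_{i+1}}$ and $\hat b\in\overline{b_jb_{j+1}}$ does $\pi(\hat r,\hat b)$ pass exactly through $a_k$,'' and neither Lemma~\ref{lem:critical_exit_DS} nor the Cook--Wenk structure answers that. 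The paper supplies the missing machinery: it preprocesses $P$ for two-point shortest-path queries~\cite{guibas89shortest_paths} (to get the first edge of $\pi(a_k,\hat r)$) and for ray shooting~\cite{chazelle94ray_shooting} (to extend that edge past $a_k$ until it hits $\hat B$, identifying $\hat b$). With those tools the paper does not walk $\gamma_k$ at all; it simply iterates over edges and vertices of $\hat R$ and $\hat B$ and computes the corresponding anchor geodesics directly, which also sidesteps your worry about reflex vertices. You flagged that you might need ``the shortest-path tree rooted at $a_k$''---that instinct is right, but you would need the full ray-shooting plus shortest-path-query setup, and without it your per-step query cannot be answered in $\bigO(\log nm)$ time as claimed.
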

    \begin{proof}
        We first bound the number of points in the set $S_k$.
        By our general position assumptions, there are only $|\hat{R}|+|\hat{B}|$ geodesics $\pi(\hat{r}, \hat{b})$ through $a_k$ that have a vertex of $\hat{R}$ or $\hat{B}$ as an endpoint.
        Additionally, observe that for any geodesic $\pi(\hat{r}, \hat{b})$ through $a_k$,
        where $\hat{r} \in \overline{r_i r_{i+1}}$ is locally closest to $\hat{b}$, the point $\hat{r}$ is also closest to $a_k$ among the points on $\overline{r_i r_{i+1}}$.
        Symmetrically, if $\hat{b} \in \overline{b_j b_{j+1}}$ is locally closest to $\hat{r}$, the point $\hat{b}$ is also closest to $a_k$ among the points on $\overline{b_j b_{j+1}}$.
        This gives $|\hat{R}|+|\hat{B}|-2$ geodesics through $a_k$ of this form.
        The set $S_k$ was defined as the set of points in the parameter space corresponding to geodesics of the above three forms.
        Thus $|S_k| = \bigO(|\hat{R}|+|\hat{B}|)$.

        To construct $S_k$, we make use of three data structures.
        We preprocess $P$ into the data structure of~\cite{guibas89shortest_paths} for shortest path queries, which allows for computing the first edge of $\pi(p, q)$ given points $p, q \in P$ in $\bigO(\log nm)$ time.
        We also preprocess $P$ into the data structure of~\cite{chazelle94ray_shooting} for ray shooting queries, which allows for computing the first point on the boundary of $P$ hit by a query ray in $\bigO(\log nm)$ time.
        Lastly, we preprocess $P$ into the data structure of~\cite[Lemma~3.2]{cook10geodesic_frechet}, which in particular allows for computing the point on a segment $e$ that is closest to a point $p$ in $\bigO(\log nm)$ time.
        The preprocessing time for each data structure is $\bigO(n+m)$.

        Given an edge $\overline{r_i r_{i+1}}$ of $\hat{R}$, we compute the point $\hat{r} \in \overline{r_i, r_{i+1}}$ closest to $a_k$, as well as the geodesic $\pi(\hat{r}, \hat{b})$ that goes through $a_k$.
        For this, we first compute the point $\hat{r}$ in $\bigO(\log nm)$ time.
        Then we compute the first edge of $\pi(a_k, \hat{r})$ and extend it towards $B'$ by shooting a ray from $a_k$.
        This takes $\bigO(\log nm)$ time.
        By our general position assumption on the anchors, the ray hits only one point before leaving $P$.
        This is the point $\hat{b}$.
        Through the same procedure, we compute the two geodesics through $p$ that start at $r_i$ and $r_{i+1}$, respectively.

        Applying the above procedure to all vertices and edges of $\hat{R}$, and a symmetric procedure to the vertices and edges of $\hat{B}$, we obtain the set of geodesics corresponding to the points in $S_k$.
        The total time spent is $\bigO(\log nm)$ per vertex or edge, with $\bigO(n+m)$ preprocessing time.
        This sums up to $\bigO((|\hat{R}|+|\hat{B}|) \log nm)$ after preprocessing.
    \end{proof}

    \noindent
    Having constructed the sets $S_k$ for all anchors in $\bigO(\frac{1}{\eps} (|\hat{R}|+|\hat{B}|) \log nm)$ time altogether, we move to computing subsets $S^*_k \subseteq S_k$ containing all $\delta$-reachable points, and only points that are $(1+\eps)\delta$-reachable.
    We proceed iteratively, constructing $S^*_{k+1}$ from $S^*_k$.
    For this, observe that for any point $(x, y)$ in the interior of $\calR_k$, the geodesic $\pi(\hat{R}(x), \hat{B}(y))$ was snapped to $a_k$.
    We use this fact to construct a pair of one-dimensional curves that approximately describe the lengths of these geodesics.

    Let $\bar{R}_k \from [1, |\hat{R}|] \to \R$ and $\bar{B}_k \from [1, |\hat{B}|] \to \R$ be one-dimensional curves, where we set $\bar{R}_k(x) = -d(\hat{R}(x), a_k)$ and $\bar{B}_k(y) = d(\hat{B}(y), a_k)$ (note the difference in sign).
    These curves encode the distances between points on $\hat{R}$ and $\hat{B}$ when snapping geodesics to $a_k$.
    That is, $|\bar{R}_k(x) - \bar{B}_k(y)|$ is the length of $\pi(\hat{R}(x), \hat{B}(y))$ after snapping to $a_k$.
    Hence, for any pair of points $p \in S^*_k$ and $q \in S_{k+1}$, we have the following relations:
    \begin{itemize}
        \item If $q$ is $\delta$-reachable from $p$ in the parameter space of $\hat{R}$ and $\hat{B}$, then it is $(1+\eps)\delta$-reachable in the parameter space of $\bar{R}_k$ and $\bar{B}_k$.
        \item Conversely, if $q$ is $(1+\eps)\delta$-reachable from $p$ in the parameter space of $\bar{R}_k$ and $\bar{B}_k$, then it is $(1+\eps)\delta$-reachable in the parameter space of $\hat{R}$ and $\hat{B}$.
    \end{itemize}
    We define $S^*_{k+1}$ as the points in $S_{k+1}$ that are $(1+\eps)\delta$-reachable from points in $S^*_k$, in the the parameter space of $\bar{R}_k$ and $\bar{B}_k$.
    Computing these points is the problem involving one-dimensional curves that we alluded to earlier.

    \begin{lemma}
        Given $S^*_k$, we can compute $S^*_{k+1}$ in $\bigO((|\hat{R}|+|\hat{B}|) \log nm)$ time, after $\bigO(n+m)$ preprocessing time.
    \end{lemma}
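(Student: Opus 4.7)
The plan is to reduce the computation of $S^*_{k+1}$ to a batched reachability problem on the one-dimensional curves $\bar{R}_k$ and $\bar{B}_k$, and then invoke a linear-time propagation in the style of the algorithm developed in \cref{sec:separated_1D}.

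First, I would construct explicit piecewise-linear representations of $\bar{R}_k$ and $\bar{B}_k$. On each edge $\overline{r_i r_{i+1}}$ of $\hat{R}$, the function $d(\hat{R}(x), a_k)$ is continuous and is monotone away from the (unique) point locally closest to $a_k$. Taking as breakpoints the vertices of $\hat{R}$ together with these locally closest points yields $\bigO(|\hat{R}|)$ breakpoints, each evaluable in $\bigO(\log nm)$ time via the shortest-path and closest-point data structures of \cite{guibas89shortest_paths} and \cite[Lemma~3.2]{cook10geodesic_frechet} after $\bigO(n+m)$ preprocessing. I proceed symmetrically for $\bar{B}_k$. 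Crucially, these are precisely the $x$- and $y$-coordinates that already appear in $S_k$ and $S_{k+1}$, so linearly interpolating $\bar{R}_k$ and $\bar{B}_k$ between consecutive breakpoints preserves the monotone behaviour of the true geodesic-distance functions; hence reachability between points in $S^*_k$ and $S_{k+1}$ is identical under the linear interpolation and under the true functions.

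Next, I would observe that the resulting curves satisfy $\bar{R}_k \leq 0 \leq \bar{B}_k$, so they are separated in value — exactly the setting for which \cref{sec:separated_1D} provides a linear-time Fr\'echet reachability algorithm. The task is to compute, at threshold $(1+\eps)\delta$, which points of the monotone target staircase $S_{k+1}$ are reachable from the monotone source staircase $S^*_k$ in the parameter space of $\bar{R}_k$ and $\bar{B}_k$. The union of $(1+\eps)\delta$-reachable points from sources in $S^*_k$ is again a monotone staircase region, so a single sweep suffices to propagate it across $\calR_k$ and intersect with $S_{k+1}$. Summing over both steps, the dominating cost is the $\bigO((|\hat{R}|+|\hat{B}|) \log nm)$ spent on constructing the one-dimensional curves; the reachability propagation itself is linear, giving the claimed bound.

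The main obstacle I expect is justifying that the single-source single-sink algorithm of \cref{sec:separated_1D} extends to a \emph{batched} reachability computation from the monotone set $S^*_k$ to the monotone set $S_{k+1}$ without losing the linear running time. I would handle this by initializing the sweep with the staircase of sources $S^*_k$ rather than the single corner $(1,1)$, and arguing — using the same separation property $\bar{R}_k \leq 0 \leq \bar{B}_k$ that drives the algorithm of \cref{sec:separated_1D} — that the frontier of reachable points remains a monotone staircase throughout the sweep. Together with the $(1+\eps)$-approximation guarantee of the anchor-snapping (each geodesic length changes by at most $\eps\delta$), this yields exactly the set $S^*_{k+1}$ as defined.
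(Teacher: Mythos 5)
Your core plan matches the paper's proof exactly: construct the one-dimensional curves $\bar{R}_k$ and $\bar{B}_k$ by computing the $\bigO(1)$ extrema of the distance-to-$a_k$ function per edge (via \cite[Lemma~2.1]{cook10geodesic_frechet}), observing that only the parameterization-invariant order/values of these extrema matter, and then invoking the reachability-propagation algorithm of \cref{sec:separated_1D} with $S=S^*_k$, $E=S_{k+1}$ and threshold $(1+\eps)\delta$.

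However, the ``obstacle'' you raise at the end is not actually an obstacle, and the workaround you sketch for it is where your proposal drifts from the paper and becomes shaky. \Cref{thm:propagating_reachability} is \emph{already} stated and proved for arbitrary sets $S,E \subseteq \F_\delta(\bar R,\bar B)$ of $\bigO(n+m)$ points, so there is no ``single-source single-sink'' restriction to lift; you can apply it directly. Your fallback plan --- ``initialize a sweep with the staircase of sources and argue the frontier of reachable points remains a monotone staircase'' --- is not justified and is not how the paper's algorithm works: for separated 1D curves the reachable set from a general source set is not a simple monotone staircase, and the paper instead propagates along horizontal/vertical-greedy prefix-minima matchings, builds the forests $\T^+_\hor,\T^+_\ver$ and their reversed counterparts, and detects reachability by a bichromatic segment-intersection sweep. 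Also a small inaccuracy: the propagation step is not ``linear'' --- \cref{thm:propagating_reachability} runs in $\bigO((n+m)\log nm)$ time --- but this happens to match the cost of constructing the curves, so your final bound $\bigO((|\hat R|+|\hat B|)\log nm)$ is still correct.
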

    \begin{proof}
        We first construct the curves $\bar{R}_k$ and $\bar{B}_k$.
        For the \f distance, the parameterization of the curves does not matter.
        This means that for $\bar{R}_k$ and $\bar{B}_k$, we need only the set of local minima and maxima.
        The distance function from $a_k$ to an edge of $\hat{R}$ or $\hat{B}$ has only one local minimum and at most two local maxima~\cite[Lemma~2.1]{cook10geodesic_frechet}, which we compute in $\bigO(\log nm)$ time for a given edge, after $\bigO(n+m)$ preprocessing time~\cite[Lemma~2.1]{cook10geodesic_frechet}.
        The total time to construct $\bar{R}_k$ and $\bar{B}_k$ (under some parameterization) is therefore $\bigO((|\hat{R}|+|\hat{B}|) \log nm)$ after preprocessing.
    
        We compute the set $S^*_{k+1} \subseteq S_{k+1}$ of points that are $(1+\eps)\delta$-reachable from a point in $S^*_k$ in the parameter space of $R_k$ and $B_k$.
        We do so with the algorithm we develop in~\cref{sec:separated_1D} (see~\cref{thm:propagating_reachability}).
        This algorithm takes $\bigO((|\hat{R}|+|\hat{B}|) \log nm)$ time.
    \end{proof}

    \noindent
    We apply the above procedure iteratively, computing $S^*_k$ for each anchor $a_k$.
    These sets take a total of $\bigO(\frac{1}{\eps} (|\hat{B}|+|\hat{R}|) \log nm)$ time to construct.
    Afterwards, if $(|\hat{R}|, |\hat{B}|) \in S^*_K$, we report that $\dF(\hat{R}, \hat{B}) \leq (1+\eps)\delta$.
    Otherwise, we report that $\dF(\hat{R}, \hat{B}) > \delta$.
    We obtain:

    \begin{lemma}
    \label{lem:far_points_decider}
        Let $\hat{R}$ be an arbitrary subcurve of $R$, and let $\hat{B}$ be a maximal subcurve of $B$ with only far points on its interior.
        We can decide whether $\dF(\hat{R}, \hat{B}) \leq (1+\eps)\delta$ or $\dF(\hat{R}, \hat{B}) > \delta$ in $\bigO(\frac{1}{\eps} (|\hat{R}|+|\hat{B}|) \log nm)$ time, after $\bigO(n+m)$ preprocessing time.
    \end{lemma}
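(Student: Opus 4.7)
The plan is to assemble the machinery developed in~\cref{subsub:decision_far_points} into a single algorithm with a clean correctness analysis and a running-time tally. First I would do a constant-time preliminary check: if $d(\hat{b}, \hat{b}') > 2\delta$ then the preceding short-separator lemma implies $\dF(\hat R, \hat B) > \delta$, so the algorithm immediately reports ``no''. Otherwise the separator assumption holds, and I would construct the $K+1 = \bigO(1/\eps)$ anchors $a_1,\dots,a_{K+1}$ along $\pi_\sep$ at spacing $\eps\delta$ by walking along the geodesic $\pi(\hat b, \hat b')$ (using the shortest-path data structure on $P$ already set up during preprocessing), slightly perturbing to avoid coincidences with vertices.

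Next I would build the sets $S_1, \dots, S_{K+1}$ explicitly, and propagate reachability iteratively: initialize $S^*_1 = \{(1,1)\}$, and for each $k = 1, \dots, K$ invoke~\cref{thm:propagating_reachability} on the one-dimensional separated curves $\bar R_k$ and $\bar B_k$ to obtain $S^*_{k+1}$ from $S^*_k$ in $\bigO((|\hat R|+|\hat B|)\log nm)$ time. Finally, I would report ``$\dF(\hat R, \hat B) \leq (1+\eps)\delta$'' if $(|\hat R|, |\hat B|) \in S^*_{K+1}$, and otherwise ``$\dF(\hat R, \hat B) > \delta$''.

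For correctness I would argue both directions. In the completeness direction, if $\dF(\hat R, \hat B) \leq \delta$ then by~\cref{lem:restricted_matching} there exists a $\delta$-matching that passes through some point of each $S_k$. Snapping every geodesic of length at most $\delta$ to its anchor yields a path of length at most $\delta + \eps\delta = (1+\eps)\delta$, so each hop within region $\calR_k$ is $(1+\eps)\delta$-reachable in the parameter space of $\bar R_k$ and $\bar B_k$ (via the first bullet relating the two parameter spaces); hence the corresponding point of $S_{k+1}$ is placed into $S^*_{k+1}$ by~\cref{thm:propagating_reachability}, and by induction $(|\hat R|, |\hat B|) \in S^*_{K+1}$. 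In the soundness direction, any point placed into $S^*_{k+1}$ corresponds by the second bullet to a $(1+\eps)\delta$-matching in $P$ between the matched endpoints; concatenating these across all $k$ produces a global $(1+\eps)\delta$-matching in $P$, so if the algorithm reports ``yes'' then $\dF(\hat R, \hat B) \leq (1+\eps)\delta$.

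For the running time, the preliminary check costs $\bigO(\log nm)$, anchor placement is $\bigO(1/\eps)$, and each of the $K = \bigO(1/\eps)$ iterations costs $\bigO((|\hat R|+|\hat B|)\log nm)$ by the preceding two lemmas, summing to the claimed $\bigO(\tfrac{1}{\eps}(|\hat R|+|\hat B|)\log nm)$ after $\bigO(n+m)$ global preprocessing. The main subtlety I expect to handle carefully is verifying that the approximation error does not compound across iterations: a matching in $P$ is obtained by concatenating subpaths, one per region, and snapping adds at most $\eps\delta$ to the length of each \emph{individual} geodesic (not to any cumulative quantity), so the supremum cost stays within a single $(1+\eps)$ factor of $\delta$ no matter how many of the $\bigO(1/\eps)$ regions a matching traverses.
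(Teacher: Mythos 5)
Your proposal reproduces the paper's proof essentially verbatim: the short-separator check, placement of $\bigO(1/\eps)$ anchors along $\pi_\sep$, construction of the transit sets $S_k$ via~\cref{lem:restricted_matching}, iterative propagation through the one-dimensional curves $\bar R_k, \bar B_k$ via~\cref{thm:propagating_reachability}, and the completeness/soundness argument based on the two bullet relations (snapping adds at most an additive $\eps\delta$ per geodesic, and snapped lengths always dominate true geodesic lengths). The explicit remark that the $\eps\delta$ error is per-geodesic rather than cumulative is a correct and useful observation, and the running-time tally matches the paper's.
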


\subsubsection{Computing a good exit}
\label{subsub:approximate_reachable_exits}

    \begin{figure}[b]
        \centering
        \includegraphics[page=2]{near_and_far_free_space.pdf}
        \caption{The subproblem of matching far points.
        The exit interval on the right is divided into three regions, based on reachability of points.
        The aim is to compute a $(1+\eps)\delta$-reachable transit exit to the left of all $\delta$-reachable transit exits.}
        \label{fig:matching_far_points}
    \end{figure}

    Recall that we set out to compute an $(\eps, \delta)$-safe transit exit of $H_\far$.
    We assume we are given an $(\eps, \delta)$-safe entrance $p_\enter = (x, y)$.
    Since the entire exit interval of $H_\far$ lies in $\F_\delta(R, B)$, it suffices to compute a transit exit $q_\exit$ that is $(1+\eps)\delta$-reachable from $p_\enter$ and that lies to the left of all transit exits that are $\delta$-reachable from $p_\enter$, see~\cref{fig:matching_far_points}.
    We compute such a transit exit $q_\exit$ through a search procedure, combined with the decision algorithm.

    There are $\bigO(n)$ transit exits of $H_\far$.
    To avoid running the decision algorithm for each of these, we use exponential search.
    The choice for exponential search over, e.g., binary search comes from the fact that the running time of the decision algorithm depends on the location of the transit exit, with transit exits lying further to the right in the exit interval of $H_\far$ needing more time for the decision algorithm.
    Exponential search ensures that we do not consider transit exits that are much more to the right than needed.

    \begin{lemma}
    \label{lem:matching_far_points}
        Let $H_\far \in \calH$ be a slab corresponding to a subcurve $\hat{B}$ of $B$ with only far points on its interior.
        Given an $(\eps, \delta)$-safe transit entrance $p = (x, y)$ of $H_\far$, we can compute an $(\eps, \delta)$-safe transit exit $q = (x', y')$ in $\bigO(\frac{1}{\eps} (|R[x, x']| + |\hat{B}|) \log n \log nm)$ time.
    \end{lemma}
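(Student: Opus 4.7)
The plan is to combine exponential search with the approximate decision algorithm of \cref{lem:far_points_decider}. Enumerate the transit exits on the exit interval of $H_\far$ as $(x_1, y'), \ldots, (x_t, y')$ in order of increasing $x$-coordinate, where $t = \bigO(n)$. Running the decision algorithm on $R[x, x_j]$ and $\hat{B}$ tests whether $\dF(R[x, x_j], \hat{B}) \leq (1+\eps)\delta$, which is exactly the statement that $(x_j, y')$ is $(1+\eps)\delta$-reachable from $p = (x, y)$. I will compute an index $\hat{j}$ at which the algorithm answers ``yes'' but answers ``no'' at $\hat{j}-1$, and return $q = (x_{\hat{j}}, y')$.

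Correctness rests on a monotonicity property of the true \f distance: because the entire exit interval lies in $\F_\delta$, any matching witnessing $\dF(R[x, x_{j-1}], \hat{B}) \leq \delta$ extends to one witnessing $\dF(R[x, x_j], \hat{B}) \leq \delta$ by walking rightward along the interval (which adds only distances at most $\delta$). Contrapositively, $\dF(R[x, x_j], \hat{B}) > \delta$ forces $\dF(R[x, x_{j'}], \hat{B}) > \delta$ for all $j' \leq j$. Therefore a ``no'' at $\hat{j}-1$ guarantees that no transit exit of index strictly less than $\hat{j}$ is $\delta$-reachable from $p$, while a ``yes'' at $\hat{j}$ certifies that $(x_{\hat{j}}, y')$ itself is $(1+\eps)\delta$-reachable. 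Since $p$ is $(\eps, \delta)$-safe, a $\delta$-matching from $p$ to $(n, m)$ exists, and by the construction in the proof of \cref{lem:critical_exits} it can be assumed to pass through a transit exit of some index $j^* \geq \hat{j}$; moving right from $q$ along the exit interval (inside $\F_\delta$) reaches this exit and then $(n, m)$, establishing $(\eps, \delta)$-safety of $q$.

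To efficiently locate $\hat{j}$, I will run an exponential phase, querying indices $j_i = \min(2^i, t)$ for $i = 0, 1, \ldots$ until the decision algorithm first returns ``yes'' at some $i^*$. Then I binary search in $(j_{i^*-1}, j_{i^*}]$ for a ``no''-to-``yes'' transition; maintaining the invariant that the lower boundary is a known ``no'' index and the upper boundary a known ``yes'' index through each halving yields some $\hat{j}$ with algorithm-output ``yes'' at $\hat{j}$ and ``no'' at $\hat{j}-1$. The main obstacle is that the decision algorithm may report either answer in the ambiguous range $(\delta, (1+\eps)\delta]$, so its outputs are not necessarily monotone in $j$. This is handled by the observation above: any ``no''-to-``yes'' transition point (not merely the leftmost ``yes'') yields a valid $q$, and binary search is guaranteed to find such a transition regardless of behaviour in the interior of the interval.

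For the running time, a decision call on $R[x, x_j]$ costs $\bigO(\frac{1}{\eps}(j + |\hat{B}|) \log nm)$ by \cref{lem:far_points_decider}. The exponential phase sums to $\sum_{i=0}^{i^*} \bigO(\frac{1}{\eps}(2^i + |\hat{B}|)\log nm) = \bigO(\frac{1}{\eps}(2^{i^*} + i^* |\hat{B}|)\log nm)$, and the binary-search phase makes $\bigO(i^*)$ calls each of cost $\bigO(\frac{1}{\eps}(2^{i^*} + |\hat{B}|)\log nm)$, contributing $\bigO(\frac{1}{\eps}\, i^* (2^{i^*} + |\hat{B}|)\log nm)$. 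Since $2^{i^*} = \bigO(|R[x, x']|)$ and $i^* = \bigO(\log n)$, the overall cost is $\bigO(\frac{1}{\eps}(|R[x, x']| + |\hat{B}|) \log n \log nm)$, matching the claim. Enumerating the required transit exits on the fly via \cref{lem:critical_exit_DS} by scanning edges of $R$ contributes only $\bigO(|R[x, x']| \log nm)$ and fits within this budget.
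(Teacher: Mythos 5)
Your proposal is correct and follows essentially the same strategy as the paper: exponential search over the transit exits of $H_\far$, each tested with the approximate decision algorithm of \cref{lem:far_points_decider} applied to $R[x, x_j]$ and $\hat{B}$, with the identical running-time accounting. Your explicit observation that any ``no''-to-``yes'' transition (rather than the leftmost ``yes'') already yields an $(\eps,\delta)$-safe exit, because a confirmed ``no'' at $\hat{j}-1$ certifies $\dF(R[x,x_{\hat{j}-1}],\hat{B})>\delta$ and hence, by the rightward-extension monotonicity along the $\delta$-free exit interval, excludes all smaller indices from being $\delta$-reachable, is a useful clarification of a subtlety the paper's proof leaves implicit.
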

    \begin{proof}
        We search over the edges of $R$.
        For each edge $R[i, i+1]$, we compute the at most three transit exits on the line segment $[i, i+1] \times \{y'\}$ with the data structure of~\cref{lem:critical_exit_DS}, taking $\bigO(\log nm)$ time after $\bigO(n+m)$ preprocessing time.
        We can check whether a transit exit $q_\exit = (x', y')$ is $(1+\eps)\delta$-reachable from $p_\enter = (x, y)$, or not $\delta$-reachable, by applying our decision algorithm~(\cref{lem:matching_far_points}) to the subcurves $R[x, x']$ and $\hat{B} = B[y, y']$.
        If the algorithm reports that $q_\exit$ is $(1+\eps)\delta$-reachable from $p_\enter$, we keep $q_\exit$ in mind and search among transit exits to the left of $q_\exit$.
        Otherwise, we search among the transit exits to the right of~$q_\exit$.
    
        The time spent per candidate exit $q_\exit = (x', y')$ is $\bigO(\frac{1}{\eps} (|R[x, x']| + |\hat{B}|) \log nm)$.
        With exponential search, we consider $\bigO(\log n)$ candidates.
        The total complexity of the subcurves $R[x, x']$ is bounded by $\bigO(|R[x, x^*]|)$, where $(x^*, y')$ is the $(1+\eps)$-approximate $\delta$-safe transit exit we report.
        Thus we get a total time spent of $\bigO(\frac{1}{\eps} (|R[x, x^*]| + |\hat{B}|) \log n \log nm)$.
    \end{proof}

\subsection{The approximate optimization algorithm}
\label{sub:approx_optimization}

    We combine the algorithms of~\cref{sub:near_points,sub:far_points} to obtain a $(1+\eps)$-approximate decision algorithm, which we then turn into an algorithm that computes a $(1+\eps)$-approximation of the geodesic \f distance between $R$ and $B$.
    Given $\delta \geq 0$ and $\eps > 0$, the approximate decision algorithm reports that $\dF(R, B) \leq (1+\eps)\delta$ or $\dF(R, B) > \delta$.

    Let $\delta_H$ be the geodesic Hausdorff distance between $R$ and $B$.
    This distance, which is the maximum distance between a point on $R \cup B$ to its closest point on the other curve, is a natural lower bound on the geodesic \f distance.
    If $\delta < \delta_H$, we therefore immediately return that $\dF(R, B) > \delta$.
    We can compute $\delta_H$ in $\bigO((n+m) \log nm)$ time~\cite{cook10geodesic_frechet}.

    Next suppose $\delta \geq \delta_H$.
    For our approximate decision algorithm, we first compute the partition $\calH$ and the entrance and exit intervals of each of its slabs.
    By \cref{lem:constructing_partition}, this takes $\bigO((n+m) \log nm)$ time.
    We iterate over the $\bigO(m)$ slabs of $\calH$ from bottom to top.
    Once we consider a slab $H \in \calH$, we have computed an $(\eps, \delta)$-safe transit entrance $p_\enter = (x, y)$ (except if $H$ is the bottom slab, in which case we set $p_\enter = (1, 1)$).
    
    Let $\hat{B}$ be the subcurve corresponding to $H$.
    If $\hat{B}$ contains only near points, we compute a $(\eps, \delta)$-safe transit exit $q_\exit = (x', y')$ of $H$ in $\bigO(\log nm)$ time with the algorithm of~\cref{sub:near_points}.
    Otherwise, we use the algorithm of~\cref{sub:far_points}, which takes $\bigO(\frac{1}{\eps} (|R[x, x']| + |\hat{B}|) \log n \log nm)$ time.
    Both algorithms require $\bigO(n+m)$ preprocessing time.
    Taken over all slabs in $\calH$, the total complexity of the subcurves $\hat{B}$ is $\bigO(m)$.
    This gives the following result:

    \begin{theorem}
    \label{thm:approx_decision_algorithm}
        For any $\eps > 0$, there is a $(1+\eps)$-approximate decision algorithm running in $\bigO(\frac{1}{\eps} (n+m) \log n \log nm)$ time.
    \end{theorem}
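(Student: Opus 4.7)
The plan is to assemble the full approximate decision algorithm by composing the slab-by-slab subroutines already established in \cref{sub:near_points,sub:far_points}, with a preliminary check against the geodesic Hausdorff distance $\delta_H$ so that the monotonicity hypothesis of \cref{lem:monotone_leaves,lem:advancing_through_fans} is valid. Concretely, I would first compute $\delta_H$ in $\bigO((n+m)\log nm)$ time via~\cite{cook10geodesic_frechet}; if $\delta<\delta_H$, report $\dF(R,B)>\delta$ immediately, which is correct since $\delta_H\leq\dF(R,B)$. Otherwise, invoke \cref{lem:constructing_partition} to build the horizontal slab partition $\calH$ together with the explicit entrance/exit intervals of each slab in $\bigO((n+m)\log nm)$ time, and also do the $\bigO(n+m)$ preprocessing needed by the two slab subroutines (the shortest-path, ray-shooting, and Cook--Wenk data structures).

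Next, I would sweep the slabs of $\calH$ from bottom to top, maintaining an $(\eps,\delta)$-safe transit entrance $p_\enter$ for the current slab, initialized to $(1,1)$ for the bottom slab. For a slab $H\in\calH$ whose corresponding subcurve of $B$ consists only of near points, I apply \cref{lem:advancing_through_fans} to obtain an $(\eps,\delta)$-safe transit exit in $\bigO(\log nm)$ time; for a slab whose interior consists only of far points, I apply \cref{lem:matching_far_points} to obtain an $(\eps,\delta)$-safe transit exit in $\bigO(\tfrac1\eps(|R[x,x']|+|\hat{B}|)\log n\log nm)$ time. Each exit becomes the entrance of the slab above, so these subcalls compose correctly. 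If at any slab the subroutine fails to produce an exit, I report $\dF(R,B)>\delta$; after processing the top slab, if the final transit exit reaches $(n,m)$, I report $\dF(R,B)\leq(1+\eps)\delta$, and otherwise $\dF(R,B)>\delta$. Correctness follows since every $\delta$-safe exit is $(\eps,\delta)$-safe, and an $(\eps,\delta)$-safe point at $(n,m)$ witnesses a $(1+\eps)\delta$-matching by concatenating the per-slab matchings produced by \cref{lem:critical_exits,lem:restricted_matching}.

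The main obstacle is bounding the total running time so that the dependence on $n$ stays linear despite there being $\bigO(m)$ slabs, each of which may invoke the far-point routine on a subcurve of $R$. The key observation is telescoping: the transit entrance $(x,y)$ of one slab is the transit exit of the preceding slab, so the $x$-coordinates advance monotonically along $R$, and the subcurves $R[x,x']$ handed to successive invocations of \cref{lem:matching_far_points} share only endpoints. Hence $\sum |R[x,x']| = \bigO(n)$, and likewise $\sum |\hat{B}| = \bigO(m)$ since the $\hat{B}$'s partition $B$ up to shared endpoints. Summing over slabs gives $\bigO(m\log nm)$ for the near-point slabs and $\bigO(\tfrac1\eps(n+m)\log n\log nm)$ for the far-point slabs, which dominates the preprocessing cost and yields the claimed $\bigO(\tfrac1\eps(n+m)\log n\log nm)$ bound.
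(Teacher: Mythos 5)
Your proof follows essentially the same route as the paper: compute $\delta_H$ and reject when $\delta < \delta_H$, build the slab partition via \cref{lem:constructing_partition}, then sweep the slabs upward using \cref{lem:advancing_through_fans} for near slabs and \cref{lem:matching_far_points} for far slabs, with the total time bounded by the telescoping $\sum |R[x,x']| = \bigO(n)$ and $\sum |\hat{B}| = \bigO(m)$. You make the telescoping over the $R[x,x']$ subcurves explicit whereas the paper leaves it implicit, but otherwise the two arguments coincide.
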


    \noindent
    We turn the decision algorithm into an approximate optimization algorithm with a simple binary search.
    For this, we show that the geodesic \f distance is not much greater than $\delta_H$.
    This gives an accurate ``guess''  of the actual geodesic \f distance.

    \begin{lemma}
    \label{lem:close_to_hausdorff}
        $\delta_H \leq \dF(R, B) \leq 3\delta_H$.
    \end{lemma}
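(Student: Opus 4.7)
The lower bound $\delta_H \leq \dF(R, B)$ is immediate: any $\delta$-matching pairs every point on one curve with a point on the other at distance at most $\delta$, so each point has a nearest neighbor on the other curve within $\delta$, giving $\delta_H \leq \dF(R,B)$.

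For the upper bound $\dF(R, B) \leq 3\delta_H$, the plan is to construct an explicit monotone matching of cost at most $3\delta_H$. For each $r \in R$ select a nearest neighbor $b(r) \in \NN(r)$, and define $\phi \from [1, n] \to [1, m]$ by $\phi(r) = \sup\{b(r') \mid r' \leq r\}$, which is monotone non-decreasing with $\phi(1) = 1$ and $\phi(n) = m$. The matching traces the graph of $\phi$ and inserts a vertical segment $\{r^*\} \times [b^-, b^+]$ at each jump of $\phi$ from $b^-$ to $b^+$. On the continuous part, $\phi(r) = b(r')$ for some $r' \leq r$ with $b(r') \geq b(r)$; the geodesic $\pi(r', \phi(r))$ separates $P$ so that $r$ (late of $r'$) and $b(r) \leq \phi(r)$ (early of $\phi(r)$) lie on opposite sides. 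Applying \cref{lem:shortcutting} to the nearest-neighbor pair $(r', \phi(r))$ then yields $d(r, \phi(r)) \leq d(r, b(r)) \leq \delta_H$.

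The main work is bounding the cost at jumps. At a jump of $\phi$ at $r^*$ from $b^-$ to $b^+$, a compactness argument on $b(r)$ as $r \to r^*{-}$ (combined with continuity of $r \mapsto d(r, \NN(r))$) produces a nearest neighbor $b^-_* \in \NN(r^*)$ with $b^-_* \leq b^-$; together with $b^+ \in \NN(r^*)$, this gives two nearest neighbors of $r^*$ straddling $B[b^-, b^+]$. Fix $b \in B[b^-, b^+]$ and let $r(b) \in R$ minimize the distance to $b$, so $d(b, r(b)) \leq \delta_H$. If $r(b) \geq r^*$, then $r(b)$ (late) and $b \leq b^+$ (early) lie on opposite sides of $\pi(r^*, b^+)$, and \cref{lem:shortcutting} gives $d(r(b), b^+) \leq d(r(b), b) \leq \delta_H$; two applications of the triangle inequality then yield $d(r^*, b) \leq d(r^*, b^+) + d(b^+, r(b)) + d(r(b), b) \leq 3\delta_H$. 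If instead $r(b) < r^*$, the symmetric argument via $\pi(r^*, b^-_*)$ gives $d(r(b), b^-_*) \leq \delta_H$ and hence $d(r^*, b) \leq d(r^*, b^-_*) + d(b^-_*, r(b)) + d(r(b), b) \leq 3\delta_H$. Combined with the bound on continuous parts, the resulting matching has cost at most $3\delta_H$.

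The most delicate step will be verifying the ``opposite sides'' condition needed for each invocation of \cref{lem:shortcutting}, which requires relating the partition of $P$ induced by each geodesic to the orderings of points along $R$ and $B$; the boundary cases $b \in \{b^-, b^+\}$ need a brief separate check. A secondary technicality is justifying rigorously that the one-sided limit provides the second nearest neighbor $b^-_*$ at every jump, which follows from compactness of $B$ and continuity of the geodesic distance even when the selection $b(\cdot)$ is not globally continuous.
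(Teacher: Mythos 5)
Your proof is correct and follows essentially the same route as the paper. Both constructions advance along $B$ by tracking nearest neighbors of points on $R$ (the paper matches each $r$ to the first point of $\NN(r)$, you to the running supremum $\phi(r)$, which coincide up to tie-breaking since, by \cref{lem:shortcutting}, the first nearest neighbor is already monotone in $r$), and both handle the jump at $r^*$ over $B[b^-,b^+]$ with the identical chain: take the closest point $r(b)$ to $b$, note that $r(b)$ and $b$ lie on opposite sides of one of the two geodesics from $r^*$ to its extreme nearest neighbors, apply \cref{lem:shortcutting} to get $d(r(b), b^\pm) \leq \delta_H$, and close with the triangle inequality to obtain $d(r^*, b) \leq 3\delta_H$. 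The paper states this more tersely (simply ``take $\hat b$ between $b, b' \in \NN(r)$''), while you make the monotonization and the limit argument producing $b^-_* \in \NN(r^*)$ explicit, but these are presentation choices, not a different argument.
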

    \begin{proof}
        Recall that the geodesic Hausdorff distance $\delta_H$ is the maximum distance between a point on $R \cup B$ and its closest point on the other curve.
        It follows directly from this definition that $\delta_H \leq \dF(R, B)$.
        
        Next consider a point $r \in R$ and let $b, b' \in \NN(r)$.
        Take a point $\hat{b}$ between $b$ and $b'$ along $B$.
        There is a point $\hat{r} \in R$ with $d(\hat{r}, \hat{b}) \leq \delta_H$.
        The points $\hat{r}$ and $\hat{b}$ must lie on opposite sides of one of $\pi(r, b)$ and $\pi(r, b')$.
        Hence~\cref{lem:shortcutting} implies that $d(\hat{r}, b) \leq \delta_H$ or $d(\hat{r}, b') \leq \delta_H$.
        Since $d(r, b) \leq \delta_H$ and $d(r, b') \leq \delta_H$, we obtain from the triangle inequality that $d(r, \hat{b}) \leq 3\delta_H$.

        We construct a $(3\delta)$-matching between $R$ and $B$ by matching each point $r \in R$ to the first point in $\NN(r)$.
        If $|\NN(r)| = 2$ and $r$ is the last point with this set of nearest neighbors, we additionally match the entire subcurve of $B$ between the points in $\NN(r)$ to $r$.
        This results in a matching, which has cost at most $3\delta_H$.
    \end{proof}

    \noindent
    For our approximate optimization algorithm, we perform binary search over the values $\delta_H, (1+\eps)\delta_H, \dots, 3\delta_H$ and run our approximate decision algorithm with each encountered parameter.
    This leads to our main result:

   \ourmaintheorem*

\section{Separated one-dimensional curves and propagating reachability}
\label{sec:separated_1D}

    In this section we consider the following problem:
    Let $\bar{R}$ and $\bar{B}$ be two one-dimensional curves with $n$ and $m$ vertices, respectively, where $\bar{R}$ lies left of the point $0$ and $\bar{B}$ right of it.
    We are given a set $S \subseteq \F_\delta(\bar{R}, \bar{B})$ of $\bigO(n+m)$ ``entrances,'' for some $\delta \geq 0$.
    Also, we are given a set $E \subseteq \F_\delta(\bar{R}, \bar{B})$ of $\bigO(n+m)$ ``potential exits.''
    We wish to compute the subset of potential exits that are $\delta$-reachable from an entrance.
    We call this procedure \emph{propagating reachability information} from $S$ to $E$.
    We assume that the points in $S$ and $E$ correspond to pairs of vertices of $\bar{R}$ and $\bar{B}$.
    This assumption can be met by introducing $\bigO(n+m)$ vertices, which does not increase our asymptotic running times.
    Additionally, we may assume that all vertices of $\bar{R}$ and $\bar{B}$ have unique values, for example by a symbolic perturbation.

    The problem of propagating reachability information has already been studied by Bringmann and K\"unnemann~\cite{bringmann17cpacked}.
    In case $S$ lies on the left and bottom sides of the parameter space and $E$ lies on the top and right sides, they give an $\bigO((n+m) \log nm)$ time algorithm.
    We are interested in a more general case however, where $S$ and $E$ may lie anywhere in the parameter space.
    We make heavy use of the concept of \emph{prefix-minima} to develop an algorithm for our more general setting that has the same running time as the one described by Bringmann and K\"unnemann~\cite{bringmann17cpacked}.
    Furthermore, our algorithm is able to actually compute a \f matching between $\bar{R}$ and $\bar{B}$ in linear time (see~\cref{sub:linear_time_frechet}), whereas Bringmann and K\"unnemann require near-linear time for only the decision version.

    
    As mentioned above, we use prefix-minima extensively for our results in this section. Prefix-minima are those vertices that are closest to the separator $0$ among those points before them on the curves. In~\cref{sub:prefix-minima_matchings} we prove that a \f matching exists that matches subcurves between consecutive prefix-minima to prefix-minima of the other curve (\cref{lem:prefix_minimum_matching}).
    We call these matchings \emph{prefix-minima matchings}.
    This matching will end in a bichromatic closest pair of points (\cref{cor:matching_closest_pair}), and so we can compose the matching with a symmetric matching for the reversed curves.
    
    In~\cref{sub:greedy_paths} we introduce two geometric forests in $\F_\delta(\bar{R}, \bar{B})$, with leaves at $S$, that capture multiple prefix-minima matchings at once. It is based on \emph{horizontal-greedy} and \emph{vertical-greedy} matchings. We show that these forests have linear complexity and can be computed efficiently.
    
    In~\cref{sub:propagating_reachability} we do not only go forward from points in $S$, but also backwards from points in $E$ using \emph{suffix-minima}. Again we have horizontal-greedy and vertical-greedy versions. Intersections between the two prefix-minima forests and the two suffix minima forests show the existence of a $\delta$-free path of the corresponding points in $S$ and $E$, so the problem reduces to a bichromatic intersection algorithm.

\subsection{Prefix-minima matchings}
\label{sub:prefix-minima_matchings}

    We investigate $\delta$-matchings based on \emph{prefix-minima} of the curves.
    We call a vertex $\bar{R}(i)$ a prefix-minimum of $\bar{R}$ if $|\bar{R}(i)| \leq |\bar{R}(x)|$ for all $x \in [1, i]$.
    Prefix-minima of $\bar{B}$ are defined symmetrically.
    Intuitively, prefix-minima are vertices that are closest to $0$ (the separator of $\bar{R}$ and $\bar{B}$) with respect to their corresponding prefix.
    Note that we may extend the definitions to points interior to edges as well, but the restriction to vertices is sufficient for our application.

    The prefix-minima of a curve form a sequence of vertices that monotonically get closer to the separator.
    This leads to the following observation:

    \begin{lemma}
    \label{lem:advancing_to_prefix-minima}
        For any two prefix-minima $\bar{R}(i)$ and $\bar{B}(j)$, we have $\dF(\bar{R}[i, n], \bar{B}[j, m]) \leq \dF(\bar{R}, \bar{B})$.
    \end{lemma}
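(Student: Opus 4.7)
The plan is to take any optimal matching $(f, g)$ of the full curves $\bar{R}$ and $\bar{B}$, carve out an initial portion that can be replaced by a trivial bridge, and show that the resulting matching between $\bar{R}[i, n]$ and $\bar{B}[j, m]$ still has cost at most $\dF(\bar{R}, \bar{B})$. I set $t^* = \min\{t : f(t) \geq i \text{ and } g(t) \geq j\}$, which is attained by closedness of the defining set. By continuity and monotonicity of $f$ and $g$, at least one of $f(t^*) = i$ or $g(t^*) = j$ must hold at this handover point; up to symmetry I assume $f(t^*) = i$ and $g(t^*) \geq j$.

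I then build a matching $(f', g')$ of $\bar{R}[i, n]$ to $\bar{B}[j, m]$ in two stages. Stage one keeps $\bar{R}$ fixed at $\bar{R}(i)$ while $\bar{B}$ advances from $\bar{B}(j)$ to $\bar{B}(g(t^*))$; stage two is simply the restriction of $(f, g)$ to $[t^*, 1]$. Stage two has cost at most $\dF(\bar{R}, \bar{B})$ by construction, so the entire argument reduces to bounding stage one. This is where the prefix-minimum hypothesis does the real work: for any $y \in [j, g(t^*)]$, continuity yields $t \leq t^*$ with $g(t) = y$, and minimality of $t^*$ forces $f(t) \leq i$. Since $\bar{R}$ lies left of $0$ and $\bar{B}$ right of it, distances split as signed sums, so the prefix-minimum inequality $|\bar{R}(i)| \leq |\bar{R}(f(t))|$ yields
\[
    |\bar{R}(i) - \bar{B}(y)| \;=\; \bar{B}(y) + |\bar{R}(i)| \;\leq\; \bar{B}(y) + |\bar{R}(f(t))| \;=\; |\bar{R}(f(t)) - \bar{B}(y)| \;\leq\; \dF(\bar{R}, \bar{B}).
\]
The symmetric case $g(t^*) = j$ uses the prefix-minimum property of $\bar{B}(j)$ with a horizontal bridge in place of the vertical one.

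The argument is not technically deep, but two things warrant care. First, one must verify that the handover point $t^*$ is well defined and that, by continuity together with minimality, at least one of the two coordinate conditions is tight there; this is the only place the structure of $(f, g)$ is used beyond monotonicity. Second, and more conceptually, the signed structure of the separated one-dimensional setting is exactly what turns the prefix-minimum inequality into a distance inequality; the analogous statement would fail for general two-dimensional curves, which is precisely why the earlier reduction to separated one-dimensional curves was worth paying for.
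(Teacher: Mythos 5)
Your proof is correct and follows essentially the same strategy as the paper's: identify the handover point where a \f matching of the full curves first enters the restricted parameter space $[i,n]\times[j,m]$, bridge to it from $(i,j)$ with a segment that holds one curve fixed, bound the bridge's cost using prefix-minimality together with the signed structure of separated one-dimensional curves, and splice in the tail of the original matching. One small slip: the inequality $f(t) \leq i$ follows from monotonicity of $f$ together with $f(t^*) = i$, not from minimality of $t^*$ (minimality is what you correctly invoke earlier to show one of the two coordinate conditions is tight at $t^*$).
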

    \begin{proof}
        Consider a \f matching $(f, g)$ between $\bar{R}$ and $\bar{B}$.
        It matches $\bar{R}(i)$ to some point $\bar{B}(y)$ and matches $\bar{B}(j)$ to some point $\bar{R}(x)$.
        Suppose without loss of generality that $y \geq j$; the other case is symmetric.
        The subcurve $\bar{R}[x, i]$ is matched to the subcurve $\bar{B}[j, y]$.
        By virtue of $\bar{R}(i)$ being a prefix-minimum, it follows that $\dF(\bar{R}(i), \bar{B}[j, y]) \leq \dF(\bar{R}[x, i], \bar{B}[j, y]) \leq \dF(\bar{R}, \bar{B})$.
        Thus composing a matching between $\bar{R}(i)$ and $\bar{B}[j, y]$ with the matching between $\bar{R}[i, n]$ and $\bar{B}[y, m]$ induced by $(f, g)$ gives a matching with cost at most $\dF(\bar{R}, \bar{B})$.
    \end{proof}

    The bichromatic closest pair of points $\bar{R}(i^*)$ and $\bar{B}(j^*)$ is formed by prefix-minima of the curves.
    (This pair of points is unique, by our general position assumption.)
    The points are also prefix-minima of the reversals of the curves.
    By using that the \f distance between two curves is equal to the \f distance between the two reversals of the curves, we obtain the following regarding matchings and bichromatic closest pairs of points:

    \begin{corollary}
    \label{cor:matching_closest_pair}
        There exists a \f matching that matches $\bar{R}(i^*)$ to $\bar{B}(j^*)$.
    \end{corollary}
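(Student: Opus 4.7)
The plan is to build the required \f matching by concatenating two half-matchings that meet at $(\bar{R}(i^*),\bar{B}(j^*))$. The key preliminary observation is that this pair consists of prefix-minima of both the forward and the reversed curves: for any vertex $\bar{R}(i)$ with $i \leq i^*$, the value $|\bar{R}(i)|$ cannot be strictly smaller than $|\bar{R}(i^*)|$, because otherwise the pair $(\bar{R}(i),\bar{B}(j^*))$ would satisfy $d(\bar{R}(i),\bar{B}(j^*)) = |\bar{R}(i)|+|\bar{B}(j^*)| < |\bar{R}(i^*)|+|\bar{B}(j^*)| = d(\bar{R}(i^*),\bar{B}(j^*))$, contradicting the closest-pair property. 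The same argument applies symmetrically to $\bar{B}$ and to both curves read backwards.

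First, I would apply \cref{lem:advancing_to_prefix-minima} with $i = i^*$ and $j = j^*$ to obtain a matching $M_1$ between $\bar{R}[i^*,n]$ and $\bar{B}[j^*,m]$ of cost at most $\dF(\bar{R},\bar{B})$; since any matching of these subcurves must pair their starting points, $M_1$ matches $\bar{R}(i^*)$ to $\bar{B}(j^*)$. Next, I would apply the same lemma to the reversed curves $\rev{\bar{R}}$ and $\rev{\bar{B}}$, on which $\bar{R}(i^*)$ and $\bar{B}(j^*)$ remain prefix-minima, and use $\dF(\rev{\bar{R}},\rev{\bar{B}}) = \dF(\bar{R},\bar{B})$ to produce a matching whose time-reversal is a matching $M_2$ between $\bar{R}[1,i^*]$ and $\bar{B}[1,j^*]$ of cost at most $\dF(\bar{R},\bar{B})$ that ends by matching $\bar{R}(i^*)$ to $\bar{B}(j^*)$.

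Finally I would concatenate $M_2$ and $M_1$ at their common matched pair $(\bar{R}(i^*),\bar{B}(j^*))$, yielding a full matching between $\bar{R}$ and $\bar{B}$ of cost at most $\dF(\bar{R},\bar{B})$; it is therefore a \f matching, and by construction it matches $\bar{R}(i^*)$ to $\bar{B}(j^*)$. The main thing I expect to need to verify carefully is the prefix-minimum property in the reversed direction, since this is the place where being the (unique) bichromatic closest pair of the whole curves, rather than merely a closest pair among some prefix, is essential.
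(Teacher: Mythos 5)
Your proposal is correct and takes essentially the same route as the paper: both observe that the bichromatic closest pair consists of prefix-minima in both the forward and reversed directions, apply \cref{lem:advancing_to_prefix-minima} once to the curves and once to their reversals (using that the Fr\'echet distance is reversal-invariant), and concatenate the two resulting matchings at $(\bar{R}(i^*), \bar{B}(j^*))$. You also spell out the short argument for the prefix-minimum property, which the paper states without proof.
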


    \begin{figure}
        \centering
        \includegraphics[page=1]{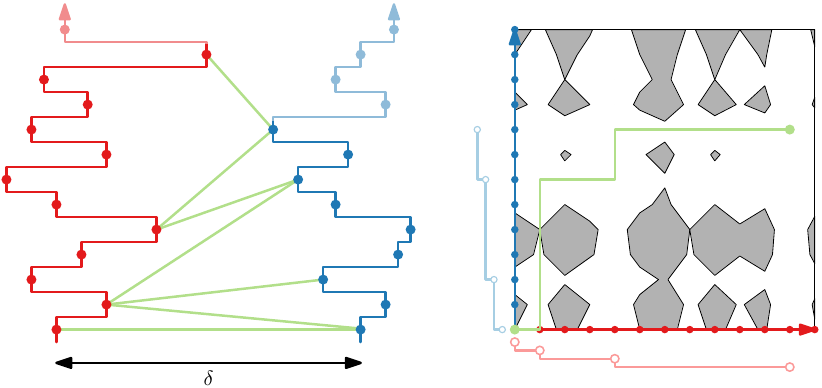}
        \caption{(left) A pair of separated, one-dimensional curves $\bar{R}$ and $\bar{B}$, drawn stretched vertically for clarity.
        A prefix-minima matching, up to the last prefix-minima of the curves, is given in green.
        (right) The path in $\F_\delta(\bar{R}, \bar{B})$ corresponding to the matching.
        }
        \label{fig:prefix-minima_matchings}
    \end{figure}

    A $\delta$-matching $(f, g)$ is called a \emph{prefix-minima $\delta$-matching} if for all $t \in [0, 1]$ at least one of $\bar{R}(f(t))$ and $\bar{B}(g(t))$ is a prefix-minimum.
    Such a matching corresponds to a rectilinear path $\pi$ in $\F_\delta(\bar{R}, \bar{B})$ where for each vertex $(i, j)$ of $\pi$, both $\bar{R}(i)$ and $\bar{B}(j)$ are prefix-minima.
    We call $\pi$ a \emph{prefix-minima $\delta$-matching} as well.
    See~\cref{fig:prefix-minima_matchings} for an illustration.
    We show that there exists a prefix-minima \f matching, up to any pair of prefix-minima:

    \begin{lemma}
    \label{lem:prefix_minimum_matching}
        Let $\bar{R}(i)$ and $\bar{B}(j)$ be prefix-minima of $\bar{R}$ and $\bar{B}$.
        There exists a prefix-minima \f matching between $\bar{R}[1, i]$ and $\bar{B}[1, j]$.
    \end{lemma}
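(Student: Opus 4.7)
The plan is to prove the lemma by induction on $i+j$. The base cases $i=1$ or $j=1$ are immediate: keeping the single-vertex curve fixed while sweeping along the other curve yields a matching that is trivially a prefix-minima matching of cost $\dF$. For the inductive step with $i, j \geq 2$, I fix any Frechet matching $(f,g)$ of cost $\delta = \dF(\bar{R}[1,i], \bar{B}[1,j])$ and look at the ``final rectangle'' $W = [i_{p-1}, i_p] \times [j_{q-1}, j_q]$ whose four corners are all prefix-minima pairs. The strategy is to build the desired matching as the concatenation of (a) a prefix-minima matching from $(1,1)$ to $(i_{p-1}, j_{q-1})$ provided by the induction hypothesis, and (b) an L-shaped staircase through $W$ on prefix-minima lines.

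The core technical step is showing that at least one of the two L-staircases through $W$ lies entirely in $\F_\delta$. Let $X \in [i_{p-1}, i_p]$ maximize $|\bar{R}|$ and let $Y \in [j_{q-1}, j_q]$ maximize $\bar{B}$, and let $\tau_X, \tau_Y$ be the times at which $f(\tau_X) = X$ and $g(\tau_Y) = Y$. Because $\bar{B}(j_q)$ is the global prefix-minimum of $\bar{B}[1,j]$, the prefix-minimum property gives $\bar{B}(g(\tau_X)) \geq \bar{B}(j_q)$, which combined with the free-space constraint $|\bar{R}(X)| + \bar{B}(g(\tau_X)) \leq \delta$ yields $|\bar{R}(X)| + \bar{B}(j_q) \leq \delta$; hence the entire horizontal bridge $[i_{p-1}, i_p] \times \{j_q\}$ lies in $\F_\delta$, and symmetrically $\{i_p\} \times [j_{q-1}, j_q]$ lies in $\F_\delta$. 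It then remains to upgrade one of the remaining two bridges. This is done by case analysis on whether $\tau_X \leq \tau_Y$ or $\tau_Y \leq \tau_X$: in the first case monotonicity of $(f,g)$ forces $f(\tau_Y) \geq X \geq i_{p-1}$, and applying the prefix-minimum property of $\bar{R}(i_{p-1})$ on the appropriate sub-range yields $|\bar{R}(i_{p-1})| + \bar{B}(Y) \leq \delta$, so the vertical bridge at $x = i_{p-1}$ lies in $\F_\delta$ and combines with the horizontal bridge at $y=j_q$ into a valid L-staircase; the opposite ordering is symmetric and produces the other L-staircase.

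The main obstacle will be closing the induction cleanly: applying the inductive hypothesis to $\bar{R}[1,i_{p-1}]$ and $\bar{B}[1,j_{q-1}]$ requires that their Frechet distance is at most $\delta$, which is only immediate when the given matching $(f,g)$ actually passes through the corner $(i_{p-1}, j_{q-1})$. In general it passes through the boundary of $[1, i_{p-1}] \times [1, j_{q-1}]$ at some point $(i_{p-1}, y^*)$ with $y^* \geq j_{q-1}$ or $(x^*, j_{q-1})$ with $x^* \geq i_{p-1}$; in either subcase we use the already-established bridges in $W$ to reroute the tail of the matching onto the prefix-minima lines without exceeding cost $\delta$, which exhibits a matching of $\bar{R}[1,i_{p-1}]$ and $\bar{B}[1,j_{q-1}]$ of the required cost and thereby enables the inductive call.
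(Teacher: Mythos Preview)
Your backward induction has a genuine gap that the proposed rerouting cannot repair. You need $\dF(\bar R[1,i_{p-1}],\bar B[1,j_{q-1}])\le\delta$ to invoke the inductive hypothesis, but this inequality can fail outright. Take $\bar R=(-5,-100,-4,-1)$ and $\bar B=(1,\tfrac12)$. The prefix-minima of $\bar R$ sit at indices $1,3,4$, so $(i_{p-1},i_p)=(3,4)$; those of $\bar B$ at $1,2$, so $(j_{q-1},j_q)=(1,2)$. Here $\delta=\dF(\bar R[1,4],\bar B[1,2])=100.5$, realized by the path $(1,1)\to(1,2)\to(4,2)$, whereas $\dF(\bar R[1,3],\bar B[1,1])=\max_{x\in[1,3]}|\bar R(x)|+\bar B(1)=101>\delta$. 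There is simply no $\delta$-matching of $\bar R[1,i_{p-1}]$ and $\bar B[1,j_{q-1}]$ to reroute onto; the unique prefix-minima \f matching avoids the corner $(i_{p-1},j_{q-1})=(3,1)$ entirely. Separately, the L-staircase case analysis does not go through as written: from $\tau_Y\le\tau_X$ you obtain $g(\tau_X)\ge Y\ge j_{q-1}$ and then want $\bar B(j_{q-1})\le\bar B(g(\tau_X))$, but the prefix-minimum property of $\bar B(j_{q-1})$ only controls $\bar B$ on $[1,j_{q-1}]$, not on $[j_{q-1},j_q]$ where $g(\tau_X)$ lives. (Concretely, with $\bar R=(-3,-10,-1)$ and $\bar B=(1,5,\tfrac12)$ one has $\tau_Y<\tau_X$, yet the L-staircase your case selects has cost $11>\delta=10.5$.)

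The paper avoids both issues by inducting \emph{forward}: it shows that one curve can be held at its first vertex while the other advances to its \emph{second} prefix-minimum, and then recurses on the resulting suffixes. The recursion is sound precisely because of \cref{lem:advancing_to_prefix-minima}, which says that passing to any pair of prefix-minima never increases the \f distance of the remaining suffixes. There is no analogous statement for truncating at the penultimate prefix-minima, which is exactly what your backward step would need.
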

    \begin{proof}
        Let $(f, g)$ be a \f matching between $\bar{R}[1, i]$ and $\bar{B}[1, j]$.
        If $i = 1$ or $j = 1$ then $(f, g)$ is naturally a prefix-minima \f matching.
        We therefore assume $i \geq 2$ and $j \geq 2$, and consider the second prefix-minima $\bar{R}(i')$ and $\bar{B}(j')$ of $\bar{R}$ and $\bar{B}$ (the first being $\bar{R}(1)$ and~$\bar{B}(1)$).

        Let $\bar{R}(\hat{i})$ and $\bar{B}(\hat{j})$ be the points (vertices) on $\bar{R}[1, i']$ and $\bar{B}[1, j']$ furthest from the separator $0$.
        Suppose $(f, g)$ matches $\bar{R}(\hat{i})$ to a point $\bar{B}(\hat{y})$, and matches a point $\bar{R}(\hat{x})$ to $\bar{B}(\hat{j})$.
        We assume that $\hat{x} \leq \hat{i}$; the other case, where $\hat{y} \leq \hat{j}$, is symmetric.

        We have $\hat{i} \leq i'-1$, and hence $\hat{x} \leq i'-1$.
        The subcurve $\bar{R}[1, \hat{x}]$ contains no prefix-minima other than $\bar{R}(1)$, and so
        \[
            |\bar{R}(1) - \bar{B}(y)| \leq |\bar{R}(1) - \bar{B}(\hat{j})| \leq |\bar{R}(\hat{x}) - \bar{B}(\hat{j})| \leq \dF(\bar{R}[1, i], \bar{B}[1, j])
        \]
        for all $y \in [1, j']$.
        The unique matching (up to reparameterization) between $\bar{R}(1)$ and $\bar{B}[1, j']$ is therefore a prefix-minima matching with cost at most $\dF(\bar{R}[1, i], \bar{B}[1, j])$.
        \Cref{lem:advancing_to_prefix-minima} shows that $\dF(\bar{R}, \bar{B}[j', m]) \leq \dF(\bar{R}[1, i], \bar{B}[1, j])$, and so by inductively applying the above construction to $\bar{R}$ and $\bar{B}[j', m]$, we obtain a prefix-minima \f matching between $\bar{R}[1, i]$ and $\bar{B}[1, j]$.
    \end{proof}

\subsection{Greedy paths in the free space}
\label{sub:greedy_paths}

    We wish to construct a set of canonical prefix-minima $\delta$-matchings in the free space from which we can deduce which points in $E$ are reachable.
    Naturally, we want to avoid constructing a path between every point in $S$ and every point in $E$.
    Therefore, we investigate certain classes of prefix-minima $\delta$-matchings that allows us to infer reachability information with just two paths per point in $S$ and two paths per point in $E$.
    Furthermore, these paths have a combined $\bigO(n+m)$ description complexity.

    We first introduce one of the greedy matchings and prove a useful property.
    A \emph{horizontal-greedy $\delta$-matching} $\pi_\hor$ is a prefix-minima $\delta$-matching starting at a point $s = (i, j)$ that satisfies the following property:
    Let $(i', j')$ be a point on $\pi_\hor$ where $\bar{R}(i')$ and $\bar{B}(j')$ are prefix-minima of $\bar{R}[i, n]$ and $\bar{B}[j, m]$.
    If there exists a prefix-minimum $\bar{R}(\hat{i})$ of $\bar{R}[i, n]$ after $\bar{R}(i')$, and the horizontal line segment $[i', \hat{i}] \times \{j'\}$ lies in $\F_\delta(\bar{R}, \bar{B})$, then either $\pi_\hor$ traverses this line segment, or $\pi_\hor$ terminates in $(i', j')$.

    For an entrance $s \in S$, let $\pi_\hor(s)$ be the maximal horizontal-greedy $\delta$-matching.
    See~\cref{fig:horizontal-greedy_forest} for an illustration.
    The path $\pi_\hor(s)$ serves as a canonical prefix-minima $\delta$-matching, in the sense that any point $t$ that is reachable from $s$ by a prefix-minima $\delta$-matching is reachable from a point on $\pi_\hor(s)$ through a single vertical segment:

    \begin{lemma}
    \label{lem:lower_envelope}
        Let $s \in S$ and let $t$ be a point that is reachable by a prefix-minima $\delta$-matching from $s$.
        A point $\hat{t} \in \pi_\hor(s)$ vertically below $t$ exists for which the segment $\overline{\hat{t} t}$ lies in $\F_\delta(\bar{R}, \bar{B})$.
    \end{lemma}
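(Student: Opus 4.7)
The plan is to compare $\pi_\hor(s)$ with the given prefix-minima $\delta$-matching $\pi'$ from $s$ to $t$. Because $\pi'$ is a prefix-minima matching, each of its vertices (in particular its endpoint $t$) is a pair of prefix-minima; write $t = (i_p, j_q)$, where $i_0 < i_1 < \cdots$ enumerate the prefix-minima of $\bar{R}[s_x, n]$ and $j_0 < j_1 < \cdots$ those of $\bar{B}[s_y, m]$. Both $\pi'$ and $\pi_\hor(s)$ are then monotone rectilinear paths through the grid of prefix-min pairs, each contained in $\F_\delta$.

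The heart of the argument is a dominance claim: at every height $j_{q'}$ with $q' \leq q$ reached by $\pi'$, the horizontal-greedy path $\pi_\hor(s)$ has advanced at least as far to the right. Writing $P'(q')$ and $P_\hor(q')$ for the largest prefix-min $x$-index reached by each path at height $j_{q'}$, I would prove $P_\hor(q') \geq P'(q')$ by induction on $q'$. The inductive step exploits the fact that prefix-minima values of $|\bar{R}|$ are non-increasing: whenever $P_\hor(q'-1) \geq P'(q'-1)$, we have $|\bar{R}(i_{P_\hor(q'-1)})| \leq |\bar{R}(i_{P'(q'-1)})|$, so the free-space bound witnessing $\pi'$'s vertical move from $j_{q'-1}$ to $j_{q'}$ also witnesses the analogous move at $\pi_\hor$'s (at least as far right) position. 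After that up move, any horizontal prefix-min step used by $\pi'$ at height $j_{q'}$ is likewise in $\F_\delta$ for $\pi_\hor$, and by the horizontal-greedy rule $\pi_\hor$ must take these (at least until it matches $P'(q')$). Specializing to $q' = q$ gives $P_\hor(q) \geq p$, so letting $q^{**}$ be the smallest height index with $P_\hor(q^{**}) \geq p$, the horizontal segment of $\pi_\hor(s)$ at $j_{q^{**}}$ spans $[i_{P_\hor(q^{**}-1)}, i_{P_\hor(q^{**})}]$, which contains $i_p$. I then set $\hat{t} := (i_p, j_{q^{**}})$, which lies on $\pi_\hor(s)$ and is at height $j_{q^{**}} \leq j_q$.

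Finally, I would verify that the vertical segment $\{i_p\} \times [j_{q^{**}}, j_q]$ is contained in $\F_\delta$. Since $\pi'$ ends at $(i_p, j_q)$ and is monotone, every vertical move of $\pi'$ at heights $j_{q'-1} \to j_{q'}$ (for $q' \in (q^{**}, q]$) occurs at some prefix-min $x$-index at most $i_p$, where $|\bar{R}|$ is at least $|\bar{R}(i_p)|$ by prefix-min monotonicity. That the vertical segment of $\pi'$ lies in $\F_\delta$ then forces $|\bar{B}(y)| \leq \delta - |\bar{R}(i_p)|$ on $[j_{q'-1}, j_{q'}]$; chaining these bounds across $q' \in (q^{**}, q]$ (together with the fact that $\hat{t} \in \F_\delta$ already takes care of the lowest piece) covers the entire interval $[j_{q^{**}}, j_q]$. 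I expect the main challenge to lie in the inductive dominance step, particularly in carefully treating the case where $P_\hor(q'-1) > P'(q'-1)$ so that $\pi_\hor$ is already strictly past $\pi'$: one must show that the combination of the up move at $\pi_\hor$'s current position together with the ensuing horizontal catch-up moves is consistent with the horizontal-greedy rule and with staying in $\F_\delta$, invoking both the prefix-min monotonicity of $|\bar{R}|$ and the fact that $\pi'$'s horizontal segment at the new height already certifies the needed free-space bounds.
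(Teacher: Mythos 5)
Your argument is correct and reaches the same conclusion as the paper's, but it is somewhat more hands-on. The paper's proof is built around a single invocation of \cref{lem:advancing_to_prefix-minima}: for any dominated point $(\hat{i},\hat{j})$ on $\pi_\hor(s)$ (with $\hat{i}\le i'$, $\hat{j}\le j'$), that lemma immediately gives $\dF(\bar{R}[\hat{i},i'],\bar{B}[\hat{j},j'])\le\delta$, which both rules out early termination of the maximal greedy path and, once specialized to $\hat{i}=i'$, directly certifies the vertical segment. You instead re-derive these facts from scratch: a dominance induction over prefix-min heights to get the greedy path to $x$-coordinate $i_p$ at some height $j_{q^{**}}\le j_q$, and a separate chaining of free-space bounds along the vertical moves of $\pi'$ (using that $|\bar{R}(i_p)|\le|\bar{R}(i_{p'})|$ for $p'\le p$) to certify the segment $\{i_p\}\times[j_{q^{**}},j_q]$. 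Both rest on the same core fact --- prefix-minima of $|\bar{R}|$ are non-increasing, so being farther right can only make the free-space constraint easier --- so this is really a more explicit unpacking of the same argument rather than a different route. What your version buys is that it makes the ``maximality $+$ horizontal-greediness $\Rightarrow$ $\pi_\hor(s)$ reaches column $i'$'' step, which the paper states quite tersely, fully concrete; what it costs is length, and it leaves the early-termination case (where $\pi_\hor(s)$ might stop before reaching column $i_p$) only implicit, whereas the paper's appeal to \cref{lem:advancing_to_prefix-minima} handles that cleanly in one stroke. You may want to add a sentence noting that from any dominated prefix-min pair there is always a legal next greedy move, so the maximal path cannot terminate before crossing column $i_p$.
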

    \begin{proof}
        Let $s = (i, j)$ and $t = (i', j')$.
        Consider a point $(\hat{i}, \hat{j}) \in \pi_\hor(s)$ with $\hat{i} \leq i'$ and $\hat{j} \leq j'$.
        By definition, $\pi_\hor(s)$ is a prefix-minima $\delta$-matching, so $\bar{R}(\hat{i})$ and $\bar{B}(\hat{j})$ are prefix-minima of $\bar{R}[i, n]$ and $\bar{B}[j, m]$, and hence of $\bar{R}[i, i']$ and $\bar{B}[j, j']$.
        By~\cref{lem:advancing_to_prefix-minima}, we have $\dF(\bar{R}[\hat{i}, i'], \bar{B}[\hat{j}, j']) \leq \dF(\bar{R}[i, i'], \bar{B}[j, j']) \leq \delta$.
        So there exists a $\delta$-matching from~$(\hat{i}, \hat{j})$~to~$(i', j')$.

        By the maximality of $\pi_\hor(s)$ and the property that $\pi_\hor(s)$ moves horizontal whenever possible, it follows that $\pi_\hor(s)$ reaches a point $(i', \hat{j})$ with $\hat{j} \leq j'$.
        The existence of a $\delta$-matching from $(i', \hat{j})$ to $(i', j')$ follows from the above.
    \end{proof}

    \begin{figure}
        \centering
        \includegraphics[page=2]{prefix_and_suffix-minima_matchings.pdf}
        \caption{(left) For every vertex, its next prefix-minimum is depicted as its parent in the respective tree. (right) The horizontal-greedy $\delta$-matchings.
        Paths move monotonically to the right and up.
        }
        \label{fig:horizontal-greedy_forest}
    \end{figure}

    \noindent
    A single path $\pi_\hor(s)$ may have $\bigO(n+m)$ complexity.
    We would like to construct the paths for all entrances, but this would result in a combined complexity of $\bigO((n+m)^2)$.
    However, due to the definition of the paths, if two paths $\pi_\hor(s)$ and $\pi_\hor(s')$ have a point $(i, j)$ in common, then the paths are identical from $(i, j)$ onwards.
    Thus, rather than explicitly describing the paths, we instead describe their union.
    Specifically, the set $\bigcup_{s \in S} \pi_\hor(s)$ forms a geometric forest $\T_\hor(S)$ whose leaves are the points in~$S$, see~\cref{fig:horizontal-greedy_forest}.
    In~\cref{sub:forest_complexity} we show that this forest has only $\bigO(n+m)$ complexity, and in~\cref{sub:forest_construction} we give a construction algorithm that takes $\bigO((n+m) \log nm)$ time.

\subsection{Complexity of the forest}
\label{sub:forest_complexity}

    The forest $\T_\hor(S)$ is naturally equal to the union $\bigcup_{\pi \in \Pi} \pi$ of a set of $|S|$ horizontal-greedy $\delta$-matching $\Pi$ with interior-disjoint images that each start at a point in $S$.
    We analyze the complexity of $\T_\hor(S)$ by bounding the complexity of $\bigcup_{\pi \in \Pi} \pi$.

    For the proofs, we introduce the notation $\calC(\pi)$ to denote the set of integers $i \in \{1, \dots, n\}$ for which a path $\pi$ has a vertical edge on the line $\{i\} \times [1, m]$.
    We use the notation $\calR(\pi)$ for representing the horizontal lines containing a horizontal edge of $\pi$.
    The number of edges $\pi$ has is $|\calC(\pi)| + |\calR(\pi)|$.

    \begin{lemma}
    \label{lem:aligned_turns}
        For any two interior-disjoint horizontal-greedy $\delta$-matching $\pi$ and $\pi'$, we have $|\calC(\pi) \cap \calC(\pi')| \leq 1$ or $|\calR(\pi) \cap \calR(\pi')| \leq 1$.
    \end{lemma}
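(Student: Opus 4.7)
The plan is to argue by contradiction. Suppose both $|\calC(\pi) \cap \calC(\pi')| \geq 2$ and $|\calR(\pi) \cap \calR(\pi')| \geq 2$, and fix shared columns $i_1 < i_2$ and shared rows $j_1 < j_2$. Since $\pi$ and $\pi'$ are bimonotone rectilinear paths with disjoint interiors, in the $x$-range where both are defined one lies weakly below the other; without loss of generality, $\pi$ lies below $\pi'$.

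I would then derive strict separation inequalities from interior-disjointness. At each shared column $i$, if the top of $\pi$'s vertical segment at $i$ coincided with the bottom of $\pi'$'s vertical segment at $i$, that common point would be a corner interior to both paths, contradicting interior-disjointness; hence the top of $\pi$'s vertical at $i$ lies strictly below the bottom of $\pi'$'s. Symmetrically, at each shared row $j$, the right endpoint of $\pi'$'s horizontal segment lies strictly to the left of the left endpoint of $\pi$'s.

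The third and main step uses the horizontal-greedy property. I would focus on the turning point $(d'_1, j_1)$ where $\pi'$'s horizontal at row $j_1$ ends and the next edge of $\pi'$ is vertical. By the greedy rule this upward turn can only occur because the horizontal continuation from $(d'_1, j_1)$ toward the next prefix-minimum of $\bar{R}$ (with respect to $\pi'$'s starting column) leaves $\F_\delta(\bar{R}, \bar{B})$. However, the existence of the shared column $i_2$ and the shared row $j_2$, combined with the free-space information certified by $\pi$'s own horizontal at row $j_1$ and \cref{lem:advancing_to_prefix-minima}, should let us certify that this continuation actually lies inside $\F_\delta(\bar{R},\bar{B})$, yielding the desired contradiction.

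The main obstacle will be reconciling the possibly different sets of prefix-minima seen by $\pi$ and $\pi'$, since their starts generally differ. For this I would use that the prefix-minima of $\bar{R}[i', n]$ contain those of $\bar{R}[i, n]$ whenever $i \leq i'$, which lets the free-space certificate provided by $\pi$ be transferred to the greedy continuation $\pi'$ ought to have taken; once this transfer is established, the contradiction follows directly, and a symmetric argument handles the case where $\pi$ lies above $\pi'$.
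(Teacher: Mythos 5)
Your high-level strategy (contradiction via the greedy rule, transferring a free-space certificate, reconciling prefix-minima of different suffixes) is the same family of ideas the paper uses, but the central step of your argument is asserted rather than proven, and as stated I don't see how to make it go through. You want to certify that the horizontal continuation from $(d'_1, j_1)$ — where the upper path $\pi'$ turns up at shared row $j_1$ — lies in $\F_\delta$, using ``the free-space information certified by $\pi$'s own horizontal at row $j_1$.'' But by your own separation inequality, $\pi$'s horizontal edge at $j_1$ is $[c_1,e_1]\times\{j_1\}$ with $d'_1 < c_1$, so the certificate from $\pi$ says nothing about the strip $[d'_1, c_1]\times\{j_1\}$. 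There is no reason that strip should be $\delta$-free, and \cref{lem:advancing_to_prefix-minima} is a statement about Fr\'echet distances between suffixes, not about any specific horizontal segment being free, so it doesn't fill this gap. The prefix-minima containment you cite (prefix-minima of $\bar R[a,n]$ at positions $\geq a'$ are prefix-minima of $\bar R[a',n]$ when $a\le a'$) is correct, but you never get to apply it because the free-space certificate itself is missing, and the containment also only goes one way — you would still need to know which of the two starting columns is smaller.

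The paper avoids this problem by deriving its contradiction at a shared \emph{column}, not at a shared row. It takes the first horizontal edge $e_\hor$ of the upper path that is colinear with a horizontal edge of the lower path, and the first such vertical edge $e_\ver$, and splits into two cases by their order along the path. In Case~1 it shows the lower path must terminate at the top of its colinear vertical edge; the key move is a \emph{vertical} transfer of the free-space certificate: from the lower path's horizontal edge at row $j'$ up to row $j_1$, justified by $|\bar B(j_1)| \le |\bar B(j')|$, which in turn follows from both $\bar B(j_1)$ and $\bar B(j')$ being prefix-minima of a common prefix $\bar B[j,\cdot]$ anchored at the earliest shared row $j$. This sidesteps the uncovered horizontal strip entirely. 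Your plan also omits the case distinction on the order of the shared row and shared column along the path, which the paper needs (its two cases conclude $|\calC(\pi)\cap\calC(\pi')|=1$ and $|\calR(\pi)\cap\calR(\pi')|=1$, respectively). Finally, your ``strict'' separation at shared columns is not quite strict: if the touching point is an endpoint of one of the paths it is not interior to that path, and indeed the paper's Case~1 conclusion is exactly that one of the paths terminates at such a touching point.
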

    \begin{proof}
        If $\calC(\pi) \cap \calC(\pi') = \emptyset$ or $\calR(\pi) \cap \calR(\pi') = \emptyset$, the statement trivially holds.
        We therefore assume that the paths have colinear horizontal edges and colinear vertical edges.

        We assume without loss of generality that $\pi$ lies above $\pi'$, so $\pi$ does not have any points that lie vertically below points on $\pi'$.
        Let $e_\hor = [i_1, i_2] \times \{j\}$ and $e_\ver = \{i\} \times [j_1, j_2]$ be the first edges of $\pi$ that are colinear with a horizontal, respectively vertical, edge of $\pi'$.
        Let $e'_\hor = [i'_1, i'_2] \times \{j\}$ and $e'_\ver = \{i\} \times [j'_1, j'_2]$ be the edges of $\pi'$ that are colinear with $e_\hor$ and $e_\ver$, respectively.
        We distinguish between the order of $e_\hor$ and $e_\ver$ along $\pi$.

        First suppose $e_\hor$ comes before $e_\ver$ along $\pi$.
        Let $e'_\ver = \{i\} \times [j'_1, j'_2]$ be the edge of $\pi'$ that is colinear with $e_\ver$.
        This edge lies vertically below $e_\ver$, so $j'_2 \leq j_1$.
        If $\pi'$ terminates at $(i, j'_2)$, then $\calC(\pi) \cap \calC(\pi') = \{i\}$ and the claim holds.
        Next we show that $\pi'$ must terminate at~$(i, j'_2)$.

        Suppose for sake of contradiction that $\pi'$ has a horizontal edge $[i, i'] \times \{j'\}$.
        We have $j' \leq j_1$.
        By virtue of $\pi'$ being a prefix-minima $\delta$-matching, we obtain that $\bar{B}(j')$ is a prefix-minimum of $\bar{B}[y, j']$ for every point $(x, y)$ on $\pi'$.
        In particular, since $\pi'$ has a horizontal edge that is colinear with $e_\hor$, we have that $\bar{B}(j')$ is a prefix-minimum of $\bar{B}[j, j']$, and thus of~$\bar{B}[j, j'_1]$.

        Additionally, by virtue of $\pi$ being a prefix-minima $\delta$-matching, we obtain that $\bar{B}(j'_1)$ is a prefix-minimum of $\bar{B}[j, j'_1]$.
        Hence $|\bar{B}(j'_1)| \leq |\bar{B}(j')|$, which shows that the horizontal line segment $[i, i'] \times \{j'_1\}$ lies in $\F_\delta(\bar{R}, \bar{B})$.
        However, this means that $\pi$ cannot have $e_\ver$ as an edge, as $\pi$ is horizontal-greedy.
        This gives a contradiction.

        The above proves the statement when $e_\hor$ comes before $e_\ver$ along $\pi$.
        Next we prove the statement when $e_\hor$ comes after $e_\ver$ along $\pi$.

        By virtue of $\pi'$ being a prefix-minimum $\delta$-matching, we have that $\bar{B}(j)$ is a prefix-minimum of $\bar{B}[y, j]$ for every point $(x, y)$ on $\pi'$.
        It follows that for all points $(x, y)$ on $\pi'$ with $x \in [i, i'_1]$, we have $|\bar{R}(x) - \bar{B}(j)| \leq |\bar{R}(x) - \bar{B}(y)| \leq \delta$.
        Hence the horizontal line segment $[\max\{i_1, i\}, i'_1] \times \{j\}$ lies in $\F_\delta(\bar{R}, \bar{B})$.
        Because $e_\hor$ comes after $e_\ver$, we further have $i_1 \geq i$.
        Thus, the horizontal-greedy $\delta$-matching $\pi$ must fully contain the horizontal segment $[i_1, i'_1] \times \{j\}$, or terminate in a point on this segment.

        If $\pi$ reaches the point $(i'_1, j)$, then either $\pi$ or $\pi'$ terminates in this point, since the two paths are interior-disjoint.
        Hence we have $\calR(\pi) \cap \calR(\pi') = \{j\}$, proving the statement.
    \end{proof}

    \begin{lemma}
    \label{lem:forest_complexity}
        The forest $\T_\hor(S)$ has $\bigO(n+m)$ vertices.
    \end{lemma}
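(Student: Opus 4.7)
The vertex set of $\T_\hor(S)$ splits into three types, each of which I will bound by $\bigO(n+m)$: (i)~leaves of the forest, (ii)~internal branching vertices of degree $\geq 3$, and (iii)~degree-two bend vertices where a single path changes direction. Types~(i) and~(ii) are immediate: the leaves of $\T_\hor(S)$ are exactly the $\bigO(n+m)$ points of $S$, and the number of degree-$\geq 3$ internal nodes in any forest is at most its number of leaves.

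Writing $\T_\hor(S) = \bigcup_{\pi \in \Pi} \pi$ where $\Pi$ is the set of $|S|$ interior-disjoint maximal horizontal-greedy $\delta$-matchings (one per point of $S$), every bend on $\pi$ coincides with an endpoint of an axis-aligned edge of $\pi$, so the total number of bends is at most $\sum_{\pi \in \Pi}(|\calC(\pi)| + |\calR(\pi)|)$. Proving this sum is $\bigO(n+m)$ is the heart of the argument, and it is where \cref{lem:aligned_turns} enters: for any pair of interior-disjoint paths $(\pi,\pi')$, either $|\calC(\pi)\cap\calC(\pi')|\leq 1$ or $|\calR(\pi)\cap\calR(\pi')|\leq 1$, which forbids ``rectangular'' overlaps between two paths. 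To leverage this pairwise condition into a linear global bound, I would sweep left-to-right over the prefix-min columns of $\bar R$: at each such column $c$, the vertical edges of paths in $\Pi$ at $c$ are interior-disjoint and stack vertically; applying \cref{lem:aligned_turns} to adjacent pairs in the stack lets me charge each vertical edge either to the column $c$ itself (totaled over columns this is bounded by the number of prefix-min columns of $\bar R$, hence $\bigO(n)$) or to a uniquely shared prefix-min row with its stack neighbor (totaled this is $\bigO(m)$). A symmetric sweep over prefix-min rows of $\bar B$ handles $\sum_\pi |\calR(\pi)|$.

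\textbf{Main obstacle.} The amortized charging is the delicate step. \cref{lem:aligned_turns} is pairwise and a naive per-pair count yields a quadratic penalty in $|\Pi|$; extracting a linear global bound requires carefully combining the interior-disjointness of $\Pi$ with the horizontal-greedy rule (a vertical edge at column $c$ exists only because a horizontal extension from that point would leave $\F_\delta(\bar R,\bar B)$). I expect the right invariant during the sweep---capturing the vertical order of the ``active'' paths at the sweepline together with a $\bigO(1)$ charge budget for each prefix-min column and row---will make the per-column amortized cost $\bigO(1)$, but pinning down this invariant is where most of the technical work should lie.
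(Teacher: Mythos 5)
Your decomposition into leaves, degree-$\geq 3$ branch points, and bends is sound, and your bounds for the first two classes are correct. You have also correctly identified the crux: you must bound $\sum_{\pi \in \Pi} (|\calC(\pi)| + |\calR(\pi)|)$ by $\bigO(n+m)$, and \cref{lem:aligned_turns} is the right tool. The gap is that you never actually carry out that bound — you sketch a left-to-right sweep over prefix-minimum columns with a charging scheme, but you state explicitly that you only ``expect'' the right invariant will work and have not pinned it down. That is precisely the hard part, so the proof as written is incomplete.

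Moreover the particular charging scheme you sketch is not obviously sound. \cref{lem:aligned_turns} is a pairwise disjunction (either at most one shared column or at most one shared row), and it does not directly tell you, when several paths stack vertical edges at one column $c$, that each can be charged to a \emph{distinct} shared row with its stack neighbor; one would need an additional argument that the same row is not charged many times across different columns. Turning a pairwise disjunction into a globally linear charge with only $\bigO(1)$ budget per column and row is exactly where a naive amortization threatens to become quadratic.

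The paper sidesteps the sweep entirely with a short algebraic argument. Writing $c_i = |\calC(\pi_i)|$, $r_i = |\calR(\pi_i)|$, and $k_i = c_i + r_i$, and introducing indicator variables $x_{i,j}$ recording which half of the \cref{lem:aligned_turns} disjunction holds for the pair $(\pi_i,\pi_j)$, one gets
\[
c_i \le n - \sum_{j\neq i} x_{i,j}(c_j-1), \qquad r_i \le m - \sum_{j\neq i}(1-x_{i,j})(r_j-1),
\]
because if $\pi_i$ and $\pi_j$ share at most one column then their remaining vertical edges occupy disjoint columns, and similarly for rows. Since $|c_j - r_j| \le 1$ (the paths alternate horizontal and vertical edges), $k_j \le 2\min\{c_j,r_j\}+1$, and adding the two inequalities gives $k_i \le \bigO(n+m) - \tfrac12\sum_{j\neq i} k_j$. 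Applying this to just two indices and telescoping yields $\sum_i k_i = \bigO(n+m)$ without any geometric sweep. If you want to salvage your approach, you would need to supply the missing invariant and verify the charge budget rigorously; otherwise you should replace the sweep with this kind of direct accounting.
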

    \begin{proof}
        We bound the number of edges of $\T_\hor(S)$.
        The forest has at most $|S| = \bigO(n+m)$ connected components, and since each connected component is a tree, the number of vertices of such a component is exactly one greater than the number of edges.
        Thus the number of vertices is at most $\bigO(n+m)$ greater than the number of edges.

        There exists a collection of $|S|$ horizontal-greedy $\delta$-matchings $\Pi$ that all start at points in $S$ and have interior-disjoint images, for which $\T_\hor(S) = \bigcup_{\pi \in \Pi} \pi$.
        Let $\pi_1, \dots, \pi_{|\Pi|}$ be the $\bigO(n+m)$ paths in $\Pi$, in arbitrary order.
        We write $c_i = |\calC(\pi_i)|$, $r_i = |\calR(\pi_i)|$ and $k_i = |\calC(\pi_i)| + |\calR(\pi_i)|$, and proceed to bound $\sum_i k_i$.
        This quantity is equal to the number of edges of $\T_\hor(S)$.

        By~\cref{lem:aligned_turns}, for all pairs of paths $\pi_i$ and $\pi_j$ we have that $|\calC(\pi_i) \cap \calC(\pi_j)| \leq 1$ or $|\calR(\pi_i) \cap \calR(\pi_j)| \leq 1$.
        Let $x_{i, j} \in \{0, 1\}$ be an indicator variable that is set to $1$ if $|\calC(\pi_i) \cap \calC(\pi_j)| \leq 1$ and $0$ if $|\calR(\pi_i) \cap \calR(\pi_j)| \leq 1$ (with an arbitrary value in $\{0, 1\}$ if both hold).
        We then get the following bounds on $c_i$ and $r_i$:
        \[
            c_i \leq n - \sum_{j \neq i} x_{i, j} \cdot (c_j - 1) \quad \text{and} \quad r_i \leq m - \sum_{j \neq i} (1-x_{i, j}) \cdot (r_j - 1).
        \]
        We naturally have that $|c_j - r_j| \leq 1$ for all paths $\pi_j$, and so $k_j = c_j + r_j \leq 2\min\{c_j, r_j\} + 1$.
        Hence
        \[
            c_i \leq n - \sum_{j \neq i} x_{i, j} \cdot \left( \frac{k_j - 1}{2} - 1 \right) \quad \text{and} \quad r_i \leq m - \sum_{j \neq i} (1-x_{i, j}) \cdot \left( \frac{k_j - 1}{2} - 1 \right),
        \]
        from which it follows that
        \[
            k_i \leq n+m - \sum_{j \neq i} \left( \frac{k_j - 1}{2} - 1 \right) = \bigO(n+m) - \frac{1}{2} \sum_{j \neq i} k_j.
        \]
        We proceed to bound the quantity $\sum_{\pi \in \Pi} (|\calC(\pi)| + |\calR(\pi)|) = \sum_i k_i$, which bounds the total number of edges of the paths in $\Pi$, and thus the number of edges of $\T_\hor(S)$:
        \begin{align*}
            \sum_{i=1}^{|\Pi|} k_i &= \sum_{i=3}^{|\Pi|} k_i + k_1 + k_2 \\
            &\leq \sum_{i=3}^{|\Pi|} k_i + \bigO(n+m) - \frac{1}{2} \sum_{j \neq 1} k_j - \frac{1}{2} \sum_{j \neq 2} k_j\\
            &= \sum_{i=3}^{|\Pi|} k_i + \bigO(n+m) - \frac{1}{2}(k_1 + k_2) - \sum_{i=3}^{|\Pi|} k_i\\
            &= \bigO(n+m).\qedhere
        \end{align*}
    \end{proof}

\subsection{Constructing the forest}
\label{sub:forest_construction}

    We turn to constructing the forest $\T_\hor(S)$.
    For this task, we require a data structure that determines, for a vertex of a maximal horizontal-greedy $\delta$-matching, where its next vertex lies.
    We make use of two auxiliary data structures that store one-dimensional curves $A$.
    The first determines, for a given point $A(x)$ and threshold value $U$, the maximum subcurve $A[x, x']$ on which no point's value exceeds $U$.

    \begin{lemma}
    \label{lem:max_prefix_below_threshold_ds}
        Let $A$ be a one-dimensional curve with $k$ vertices.
        In $\bigO(k \log k)$ time, we can construct a data structure of $\bigO(k)$ size, such that given a point $A(x)$ and a threshold value $U \geq A(x)$, the last point $A(x')$ with $\max_{\hat{x} \in [x, x']} A(\hat{x}) \leq U$ can be reported in $\bigO(\log k)$ time.
    \end{lemma}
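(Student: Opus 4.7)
\noindent\emph{Proof proposal.}
The plan is to use a standard segment tree on the vertex values $a_i := A(i)$ for $i \in \{1, \ldots, k\}$, where each internal node stores the maximum of the $a_i$ over the indices in its range. A bottom-up construction takes $\bigO(k)$ time and $\bigO(k)$ space, so the $\bigO(k \log k)$ construction bound in the statement is generous.

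To answer a query $(x, U)$ with $U \geq A(x)$, first observe that because $A$ is piecewise linear and $A(x) \leq U$, the set $\{\hat{x} \in [x, k] : A(\hat{x}) > U\}$ is a (possibly empty) union of open intervals contained in edges, and the sought $x'$ is either $k$, if this set is empty, or the infimum of this set, at which point $A(x') = U$ by continuity. Let $i = \lfloor x \rfloor$. If $a_{i+1} > U$, then $A$ increases linearly from $A(x) \leq U$ to $a_{i+1} > U$ along the segment $A[x, i+1]$, so $x' \in (x, i+1]$ is determined by linear interpolation in $\bigO(1)$ time. Otherwise, we query the segment tree for the leftmost index $j \in [i+2, k]$ with $a_j > U$. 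If no such $j$ exists, return $x' = k$. Otherwise $a_{j-1} \leq U < a_j$, so $x'$ lies on edge $[j-1, j]$ at the unique point where $A$ equals $U$, again found by linear interpolation.

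The leftmost-exceeding-threshold query is the standard segment-tree descent: decompose $[i+2, k]$ into $\bigO(\log k)$ canonical subranges, scan them left-to-right until reaching the first whose stored maximum exceeds $U$, then descend into that subrange, at each node preferring the left child if its stored maximum exceeds $U$ and otherwise going right. Each of the $\bigO(\log k)$ levels contributes $\bigO(1)$ work, so the total query time is $\bigO(\log k)$.

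The main obstacle is essentially bookkeeping rather than algorithmic: correctly handling the partial starting edge $A[x, i+1]$ when $x$ is not integral, as well as the case where $A$ never exceeds $U$ on $[x, k]$. Both are minor, and the correctness and running time of the data structure then follow immediately from the standard guarantees of the segment tree and its descent.
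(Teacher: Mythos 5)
Your proposal is correct and takes a genuinely simpler route than the paper. The paper's proof builds a \emph{persistent} red-black tree: one version $T_i$ per index $i$, storing the vertices of $A[i,k]$ with each node augmented by an extremal value over its subtree; a query accesses version $T_{\lfloor x\rfloor}$ and descends. You instead build a single static segment tree over $1,\dots,k$ storing subtree maxima, peel off the partial starting edge $A[x,\lceil x\rceil]$ in $O(1)$ time, and then run the standard ``first index in $[i+2,k]$ with value exceeding $U$'' descent. Both approaches give $O(k)$ space and $O(\log k)$ query time, but yours builds in $O(k)$ rather than $O(k\log k)$, and avoids the persistence machinery entirely; the paper seems to use persistence only to support starting the query at an arbitrary index, which a segment tree already handles as a range query. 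Two small nits that do not affect correctness: when $A(x) = U$ and $a_{i+1} > U$ the interpolated answer is $x' = x$, so write $x' \in [x, i+1]$ rather than $(x, i+1]$; and you should explicitly handle $x = k$, where there is no edge $[i,i+1]$ to check and the answer is $x' = k$.
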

    \begin{proof}
        We use a persistent red-black tree, of which we first describe the ephemeral variant.
        Let $T_i$ be a red-black tree storing the vertices of $A[i, k]$ in its leaves, based on their order along $A$.
        The tree has $\bigO(\log k)$ height.
        We augment every node of $T_i$ with the last vertex stored in its subtree that has the minimum value.
        To build the tree $T_{i-1}$ from $T_i$, we insert $A(i-1)$ into $T_i$ by letting it be the leftmost leaf.
        This insertion operation costs $\bigO(\log k)$ time, but only at most two ``rotations'' are used to rebalance the tree~\cite{guibas78balanced_trees}.
        Each rotation affects $\bigO(1)$ nodes of the tree, and the subtrees containing these nodes require updating their associated vertex.
        There are $\bigO(\log k)$ such subtrees and updating them takes $\bigO(\log k)$ time in total.
        Inserting a point therefore takes $\bigO(\log k)$ time.

        To keep representations of all trees $T_i$ in memory, we use persistence~\cite{driscoll89persistence}.
        With the techniques of~\cite{driscoll89persistence} to make the data structure persistent, we may access any tree $T_i$ in $\bigO(\log k)$ time.
        The trees all have $\bigO(\log k)$ height.
        The time taken to construct all trees is $\bigO(k \log k)$.

        Consider a query with a point $A(x)$ and value $U \in \R$.
        Let $e = A[i, i+1]$ be the edge of $A$ containing $A(x)$, picking $i = x$ if $A(x)$ is a vertex with two incident edges.
        We first compute the last point on $e$ whose value does not exceed the threshold $U$.
        If this point is not the second endpoint $A(i+1)$ of $e$, then we report this point as the answer to the query.
        Otherwise, we continue to report the last vertex $A(i')$ after $A(i+1)$ for which $\max_{\hat{x} \in [x, x']} A(\hat{x}) \leq U$.
        The answer to the query is on the edge $A[i', i'+1]$.

        We first access $T_i$.
        We then traverse $T_i$ from root to leaf in the following manner:
        Suppose we are in a node $\mu$ and let its left subtree store the vertices of $A[i_1, i_2]$ and its right subtree the vertices of $A[i_2+1, i_3]$.
        If the left child of $\mu$ is augmented with a value greater than $U$, then $A(\hat{i}) > U$ for some $\hat{i} \in [i_1, i_2]$.
        In this case, we continue the search by going into the left child of $\mu$.
        Otherwise, we remember $i_2$ as a candidate for $i'$ and continue the search by going into the right child of $\mu$.
        In the end, we have $\bigO(\log k)$ candidates for $i'$, and we pick the last index.

        Given $i'$, we report the last point on the edge $A[i', i'+1]$ (or $A(i')$ itself if $i' = k$) whose value does not exceed $U$ as the answer to the query.
        We find $i'$ in $\bigO(\log k)$ time, giving a query time of $\bigO(\log k)$.
    \end{proof}

    We also make use of a \emph{range minimum query} data structure.
    A range minimum query on a subcurve $A[x, x']$ reports the minimum value of the subcurve.

    This value is either $A(x)$, $A(x')$, or the minimum value of a vertex of $A[\lceil x \rceil, \lfloor x' \rfloor]$.
    Hence range minimum queries can be answered in $\bigO(1)$ time after $\bigO(k)$ time preprocessing (see e.g.~\cite{fischer10succinct_RMQ}).
    However, we give an alternative data structure with $\bigO(\log k)$ query time.
    Our data structure additionally allows us to query a given range for the first value below a given threshold.
    This latter type of query is also needed for the construction of $\T_\hor(S)$.
    The data structure has the added benefit of working in the pointer-machine model of computation.

    \begin{lemma}
    \label{lem:RMQ_ds}
        Let $A$ be a one-dimensional curve with $k$ vertices.
        In $\bigO(k)$ time, we can construct a data structure of $\bigO(k)$ size, such that the minimum values of a query subcurve $A[x, x']$ can be reported in $\bigO(\log k)$ time.
        Additionally, given a threshold value $U \in \R$, the first and last points $A(x^*)$ on $A[x, x']$ with $A(x^*) \leq U$ (if they exist) can be reported in $\bigO(\log k)$ time.
    \end{lemma}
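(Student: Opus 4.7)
The plan is to build a balanced binary tree (for example a perfectly balanced or weight-balanced tree) whose leaves store $A(1), \dots, A(k)$ in order of index. Because the vertices are given in sorted order, the tree can be constructed bottom-up in $\bigO(k)$ time, has $\bigO(\log k)$ height, and uses $\bigO(k)$ space. Each internal node is then augmented, in a single post-order traversal, with the minimum value over the vertices in its subtree. The whole construction uses only pointers and comparisons, so it lives in the pointer-machine model.

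For a range minimum query on $A[x, x']$, observe that a piecewise linear function attains its minimum either at an endpoint of the domain or at an interior vertex. Hence the minimum equals $\min\{A(x), A(x'), \mu\}$, where $\mu$ is the minimum over the vertices with index in $[\lceil x \rceil, \lfloor x' \rfloor]$. To compute $\mu$, I descend from the root along the paths to the two boundary leaves and collect the usual $\bigO(\log k)$ canonical subtrees that together cover the index range (right subtrees on the left path, left subtrees on the right path); $\mu$ is the minimum of their stored values, so the query takes $\bigO(\log k)$ time.

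For the threshold queries I reuse the same decomposition. To find the first point on $A[x, x']$ with value at most $U$, I first test $A(x)$; if $A(x) > U$, I scan the canonical subtrees from left to right until I hit one whose stored minimum is at most $U$, and then descend into it by always going to the left child whose stored minimum is at most $U$ until I reach a leaf $A(i)$. The true first point may in fact lie on the partial edge $A[x, \lceil x \rceil]$ or on the edge $A[i-1, i]$ just before the reached leaf, so I additionally compute in $\bigO(1)$ the first point where the linear function on each of these at-most-two edges crosses $U$, and return the leftmost candidate. The last-point query is symmetric: scan right-to-left, always descend preferring right children, and inspect the partial edge at $x'$ together with the edge immediately following the reached leaf. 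Each query performs $\bigO(\log k)$ work.

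The main obstacle is the boundary handling: because $x$ and $x'$ may lie strictly inside edges, the first or last crossing of $U$ may occur on a partial edge rather than at a stored vertex, so the canonical-subtree search must be combined carefully with a constant number of explicit linear-edge evaluations. Once this bookkeeping is settled, the query time is dominated by the $\bigO(\log k)$ walk through the tree, matching the claimed bound, and the preprocessing and space bounds follow directly from the balanced-tree construction.
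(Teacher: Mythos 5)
Your proposal is correct and follows essentially the same route as the paper: a balanced binary tree over the $k$ vertices with per-node subtree minima, an $\bigO(\log k)$ canonical decomposition for range minima, and for the threshold query a left-to-right scan over the canonical subtrees followed by a descent into the leftmost one with stored minimum at most $U$, finishing with $\bigO(1)$ work on the partial boundary edges. The only differences are cosmetic (you make the constant number of boundary-edge checks slightly more explicit), so the bounds and the argument match the paper's.
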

    \begin{proof}
        We show how to preprocess $A$ for querying the minimum value of a subcurve $A[x, x']$, as well as the first point $A(x^*)$ on $A[x, x']$ with $A(x^*) \leq U$ for a query threshold value $U \in \R$.
        Preprocessing and querying for the other property is symmetric.

        We store the vertices of $A$ in the leaves of a balanced binary search tree $T$, based on their order along $A$.
        We augment each node of $T$ with the minimum value of the vertices stored in its subtree.
        Constructing a balanced binary search tree $T$ on $A$ takes $\bigO(k)$ time, since the vertices are pre-sorted.
        Augmenting the nodes takes $\bigO(k)$ time in total as well, through a bottom-up traversal of $T$.

        Consider a query with a subcurve $A[x, x']$.
        The minimum value of a point on this subcurve is attained by either $A(x)$, $A(x')$, or a vertex of $A[i, i']$ with $i = \lceil x \rceil$ and $i' = \lfloor x' \rfloor$.
        We query $T$ for the minimum value of a vertex of $A[i, i']$.
        For this, we identify $\bigO(\log k)$ nodes whose subtrees combined store exactly the vertices of $A[i, i']$.
        These nodes store a combined $\bigO(\log k)$ candidate values for the minimum, and we identify the minimum in $\bigO(\log k)$ time.
        Comparing this minimum to $A(x)$ and $A(x')$ gives the minimum of $A[x, x']$.

        Given a threshold value $U \in \R$, the first point $A(x^*)$ of $A[x, x']$ with $A(x^*) \leq U$ can be reported similarly to the minimum of the subcurve.
        If $x$ and $x'$ lie on the same edge of $A$, we report the answer in constant time.
        Next suppose $i = \lceil x \rceil \leq \lfloor x' \rfloor = i'$.

        We start by reporting the first vertex $A(i^*)$ of $A[i, i']$ with $A(i^*) \leq U$ (if it exists).
        For this, we again identify $\bigO(\log k)$ nodes whose subtrees combined store exactly the vertices of $A[i, i']$.
        Each node stores the minimum value of the vertices stored in its subtree, and so the leftmost node $\mu$ storing a value below $U$ contains $A(i^*)$.
        (If no such node $\mu$ exists, then $A(i^*)$ does not exist.)
        To get to $A(i^*)$, we traverse the subtree of $\mu$ to a leaf, by always going into the left subtree if it stores a value below $U$.
        Identifying $\mu$ takes $\bigO(\log k)$ time, and traversing its subtree down to $A(i^*)$ takes an additional $\bigO(\log k)$ time.

        Given $A(i^*)$, the point $A(x^*)$ lies on the edge $A[i^*-1, i^*]$ and we compute it in $\bigO(1)$ time.
        If $i^*$ does not exist, then $A(x^*)$ is equal to either $A(x)$ or $A(x')$, and we report $A(x^*)$ in $\bigO(1)$ time.
    \end{proof}

    Next we give two data structures, one that determines how far we may extend a horizontal-greedy $\delta$-matching horizontally, and one that determines how far we may extend it vertically.

\subparagraph*{Horizontal movement.}
    We preprocess $\bar{R}$ into the data structures of~\cref{lem:max_prefix_below_threshold_ds,lem:RMQ_ds}, taking $\bigO(n \log n)$ time.
    To determine the maximum horizontal movement from a given vertex $(i, j)$ on a horizontal-greedy $\delta$-matching $\pi$, we first report the last vertex $\bar{R}(i')$ after $\bar{R}(i)$ for which $\max_{x \in [i, i']} |\bar{R}(x)| \leq \delta - |\bar{B}(j)|$.
    Since $\bar{R}$ lies completely left of $0$, we have $\max_{x \in [i, i']} |\bar{R}(x)| = \min_{x \in [i, i']} \bar{R}(x)$, and so this vertex can be reported in $\bigO(\log n)$ time with the data structure of~\cref{lem:max_prefix_below_threshold_ds}.

    Next we report the last prefix-minimum $\bar{R}(i^*)$ of $\bar{R}[i, i']$.
    The path $\pi$ may move horizontally from $(i, j)$ to $(i^*, j)$ and not further.
    Observe that $\bar{R}(i^*)$ is the last vertex of $\bar{R}[i, i']$ with $|\bar{R}(i^*)| \leq \min_{x \in [i, i']} |\bar{R}(x)|$.
    We report the value of $\min_{x \in [i, i']} |\bar{R}(x)|$ in $\bigO(\log n)$ time with the data structure of~\cref{lem:RMQ_ds}.
    The vertex $\bar{R}(i^*)$ of $\bar{R}[i, i']$ can then be reported in $\bigO(\log n)$ additional time with the same data structure.

    \begin{lemma}
    \label{lem:horizontal_movement_ds}
        In $\bigO(n \log n)$ time, we can construct a data structure of $\bigO(n)$ size, such that given a vertex $(i, j)$ of a horizontal-greedy $\delta$-matching $\pi$, the maximal horizontal line segment that $\pi$ may use as an edge from $(i, j)$ can be reported in $\bigO(\log n)$ time.
    \end{lemma}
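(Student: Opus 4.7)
The plan is to realize the data structure as the combination of the two auxiliary structures developed in \cref{lem:max_prefix_below_threshold_ds,lem:RMQ_ds}, applied to the one-dimensional curve $\bar{R}$ (or, more precisely, to the curve $-\bar{R}$, since $\bar{R}$ lies to the left of $0$ and so $|\bar{R}(x)| = -\bar{R}(x)$). Preprocessing these two structures on $\bar{R}$ takes $\bigO(n \log n)$ time and $\bigO(n)$ space, matching the claimed bounds.

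For a query at a vertex $(i, j)$ of a horizontal-greedy $\delta$-matching $\pi$, the maximal horizontal edge starting at $(i, j)$ is determined by two constraints: (a) the segment must remain in $\F_\delta(\bar{R}, \bar{B})$, which amounts to $|\bar{R}(x) - \bar{B}(j)| \leq \delta$ for all $x$ in the segment, i.e.\ $|\bar{R}(x)| \leq \delta - |\bar{B}(j)|$; and (b) the right endpoint must be a prefix-minimum of $\bar{R}[i, n]$, since $\pi$ is a prefix-minima matching. I will handle these constraints sequentially.

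First, use \cref{lem:max_prefix_below_threshold_ds} on the curve $-\bar{R}$ with starting point $\bar{R}(i)$ and threshold $U = \delta - |\bar{B}(j)|$ to obtain in $\bigO(\log n)$ time the last point $\bar{R}(i')$ such that the entire subcurve $\bar{R}[i, i']$ stays within the free space at height $j$. Then, using the RMQ data structure of \cref{lem:RMQ_ds} on $-\bar{R}$, compute the minimum $M = \min_{x \in [i, i']} |\bar{R}(x)|$ in $\bigO(\log n)$ time, and then use the same structure to report the last vertex $\bar{R}(i^*)$ of $\bar{R}[i, i']$ whose value is at most $M$; by definition of $M$ this vertex is the last prefix-minimum of $\bar{R}[i, i']$ and, since prefix-minima of $\bar{R}[i, n]$ that lie in $[i, i']$ coincide with the prefix-minima of $\bar{R}[i, i']$ that are at most $\bar{R}(i)$, the vertex $\bar{R}(i^*)$ is also the last prefix-minimum of $\bar{R}[i, n]$ reachable by a horizontal move from $(i, j)$. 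Return the segment $[i, i^*] \times \{j\}$.

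The only routine subtlety is the one just flagged: I must check that restricting to prefix-minima of $\bar{R}[i, i']$ yields exactly the candidate endpoints $\pi$ is allowed to use, and that the \emph{last} such prefix-minimum is truly the furthest vertex to which $\pi$ may greedily advance. The former follows since the horizontal-greedy definition requires the next vertex of $\pi$ to be a prefix-minimum of $\bar{R}[i, n]$, and every prefix-minimum of $\bar{R}[i, i']$ that lies weakly to the right of $\bar{R}(i)$ is automatically a prefix-minimum of $\bar{R}[i, n]$ by transitivity. The latter follows from the definition of horizontal-greediness itself. Both the preprocessing and query bounds then follow by summing the costs of the two subqueries.
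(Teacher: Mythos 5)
Your proposal is correct and follows essentially the same approach as the paper: preprocess $\bar{R}$ into the structures of \cref{lem:max_prefix_below_threshold_ds,lem:RMQ_ds}, first find the farthest $i'$ such that $[i,i']\times\{j\}$ stays in the free space, then use two RMQ-type queries to locate the last prefix-minimum $\bar{R}(i^*)$ on $\bar{R}[i,i']$, which gives the endpoint of the maximal edge. Your explicit remark that one should feed $-\bar{R}$ to the threshold structure is a clean fix of the paper's slightly loose phrasing (the paper writes $\max_{x}|\bar R(x)| = \min_x \bar R(x)$, which drops a sign), but it does not change the argument.
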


\subparagraph*{Vertical movement.}
    To determine the maximum vertical movement from a given vertex $(i, j)$ on a horizontal-greedy $\delta$-matching $\pi$, we need to determine the first prefix-minimum $\bar{B}(j')$ of $\bar{B}[j, j']$ for which a horizontal-greedy $\delta$-matching needs to move horizontally from $(i, j')$.
    For this, we make use of the following data structure that determines the second prefix-minimum $\bar{R}(i^*)$ of $\bar{R}[i, n]$:

    \begin{lemma}
        In $\bigO(n)$ time, we can construct a data structure of $\bigO(n)$ size, such that given a vertex $\bar{R}(i)$, the second prefix-minimum of $\bar{R}[i, n]$ (if it exists) can be reported in $\bigO(1)$ time.
    \end{lemma}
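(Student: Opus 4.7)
The plan is to reduce the problem to the classical ``next smaller element'' computation on the sequence of absolute values $|\bar{R}(1)|, \dots, |\bar{R}(n)|$. Observe that $\bar{R}(i)$ is trivially the first prefix-minimum of $\bar{R}[i, n]$, and so the second prefix-minimum is precisely the first vertex $\bar{R}(j)$ with $j > i$ satisfying $|\bar{R}(j)| < |\bar{R}(i)|$ (the strict inequality comes from our general-position assumption that all vertex values are distinct).

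I would precompute an array $N[1..n]$ where $N[i]$ stores exactly this index $j$, or a null marker if no such $j$ exists. The array is built in $\bigO(n)$ time by a single right-to-left sweep with a stack: maintain a stack of indices whose associated absolute values are strictly increasing from top to bottom; when processing index $i$, pop every index $k$ with $|\bar{R}(k)| \geq |\bar{R}(i)|$, then set $N[i]$ to the (new) top of the stack (or the null marker if the stack is empty), and finally push $i$. A standard amortized argument shows each index is pushed and popped at most once, so the total construction time is $\bigO(n)$ and the table itself occupies $\bigO(n)$ space.

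A query with vertex $\bar{R}(i)$ is then answered by simply returning the precomputed $N[i]$ in $\bigO(1)$ time. The main obstacle is really just verifying the reduction, i.e.\ checking that the second prefix-minimum of $\bar{R}[i, n]$ coincides with the next strictly smaller element after $i$; once this equivalence is in hand, the data structure is the textbook next-smaller-element table, whose correctness and linear-time construction follow from the stack invariant together with our general-position assumption.
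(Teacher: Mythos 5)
Your proof is correct and takes essentially the same approach as the paper: both reduce the problem to computing, for each index, the next vertex with smaller absolute value, and store the answers in a linear-size table for $\bigO(1)$ queries. The only cosmetic difference is that you spell out the standard right-to-left stack-based sweep for the next-smaller-element table, whereas the paper cites the all-nearest-smaller-values algorithm of Berkman~\etal; under the paper's general-position assumption (distinct vertex values) your strict inequality is equivalent to the paper's $\leq$.
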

    \begin{proof}
        We use the algorithm of Berkman~\etal~\cite{berkman93all_nearest_smaller_values} to compute, for every vertex $\bar{R}(i)$ of $\bar{R}$, the first vertex $\bar{R}(i^*)$ after $\bar{R}(i)$ with $|\bar{R}(i^*)| \leq |\bar{R}(i)|$ (if it exists).
        Naturally, $\bar{R}(i^*)$ is the second prefix-minimum of $R[i, n]$.
        Their algorithm takes $\bigO(n)$ time.
        Annotating the vertices of $\bar{R}$ with their respective second prefix-minima gives a $\bigO(n)$-size data structure with $\bigO(1)$ query time.
    \end{proof}

    The first prefix-minimum $\bar{B}(j')$ of $\bar{B}[j, j']$ for which a horizontal-greedy $\delta$-matching needs to move horizontally from $(i, j')$, is the first prefix-minimum with $\dF(\bar{R}[i, i^*], \bar{B}(j')) \leq \delta$.
    Observe that $\bar{B}(j')$ is not only the first prefix-minimum of $\bar{B}[j, m]$ with $\dF(\bar{R}[i, i^*], \bar{B}(j')) \leq \delta$, it is also the first vertex of $\bar{B}[j, m]$ with this property.

    We preprocess $\bar{B}$ into the data structures of~\cref{lem:max_prefix_below_threshold_ds,lem:RMQ_ds}, taking $\bigO(m \log m)$ time.
    We additionally preprocess $\bar{R}$ into the data structure of~\cref{lem:RMQ_ds}, taking $\bigO(n \log n)$ time.

    We first compute $\max_{x \in [i, i^*]} |\bar{R}(x)| = \min_{x \in [i, i^*]} \bar{R}(x)$ with the data structure of~\cref{lem:RMQ_ds}, taking $\bigO(\log n)$ time.
    We then report $\bar{B}(j')$ as the first vertex of $\bar{B}[j, m]$ with $|\bar{B}(j')| \leq \delta - \max_{x \in [i, i^*]} |\bar{R}(x)|$.
    This takes $\bigO(\log m)$ time.

    To determine the maximum vertical movement of $\pi$, we need to compute the maximum vertical line segment $\{i\} \times [j, j^*] \subseteq \F_\delta(\bar{R}, \bar{B})$ for which $\bar{B}(j^*)$ is a prefix-minimum of $\bar{B}[j, j']$.
    We query the data structure of~\cref{lem:max_prefix_below_threshold_ds} for the last vertex $\bar{B}(\hat{j})$ for which $\max_{y \in [j, \hat{j}]} |\bar{B}(y)| \leq \delta - |\bar{R}(i)|$.
    This takes $\bigO(\log m)$ time.
    The vertex $\bar{B}(j^*)$ is then the last prefix-minimum of $\bar{B}[j, \hat{j}]$, which we report in $\bigO(\log m)$ time with the data structure of~\cref{lem:RMQ_ds}, by first computing the minimum value of $\bar{B}[j, \hat{j}]$ and then the vertex that attains this value.

    \begin{lemma}
    \label{lem:vertical_movement_ds}
        In $\bigO((n+m) \log nm)$ time, we can construct a data structure of $\bigO(n+m)$ size, such that given a vertex $(i, j)$ of a horizontal-greedy $\delta$-matching $\pi$, the maximal vertical line segment that $\pi$ may use as an edge from $(i, j)$ can be reported in $\bigO(\log nm)$ time.
    \end{lemma}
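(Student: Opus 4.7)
The plan is to reuse the general-purpose threshold and range-minimum structures from \cref{lem:max_prefix_below_threshold_ds,lem:RMQ_ds} on both $\bar{R}$ and $\bar{B}$, and to additionally precompute, for each vertex $\bar{R}(i)$, its second prefix-minimum by an $\bigO(n)$ all-nearest-smaller-values pass on $|\bar{R}|$. Together these auxiliary structures fit within the claimed $\bigO((n+m)\log nm)$ preprocessing time and $\bigO(n+m)$ space.

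For a query at a vertex $(i, j)$ of $\pi$, I proceed in three stages. First, I read off the second prefix-minimum $\bar{R}(i^*)$ of $\bar{R}[i, n]$ in $\bigO(1)$ and compute $M := \max_{x \in [i, i^*]} |\bar{R}(x)| = -\min_{x \in [i, i^*]} \bar{R}(x)$ by one RMQ on $\bar{R}$ in $\bigO(\log n)$. Second, I obtain $j'$, the first vertex of $\bar{B}[j, m]$ with $|\bar{B}(y)| \le \delta - M$, via a thresholded-first-vertex query of \cref{lem:RMQ_ds} in $\bigO(\log m)$; this is exactly the first row $y$ at which the entire segment $[i, i^*] \times \{y\}$ lies in $\F_\delta(\bar{R},\bar{B})$, so by the horizontal-greedy rule $\pi$ must turn horizontal there. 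Third, I bound vertical extension by free space alone, using \cref{lem:max_prefix_below_threshold_ds} to retrieve the last $\hat{j}$ with $\max_{y \in [j,\hat j]}|\bar{B}(y)| \le \delta - |\bar{R}(i)|$ in $\bigO(\log m)$; the reported vertex $(i, j^*)$ then has $\bar{B}(j^*)$ equal to the last prefix-minimum of $\bar{B}[j, \min(\hat{j}, j')]$, which I recover by one RMQ call to get the minimum absolute value on that range followed by a second thresholded query to locate the last vertex attaining it, both in $\bigO(\log m)$.

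The step that needs a genuine argument rather than a lookup is the equivalence between ``$\pi$ is forced to turn horizontal at row $y$'' and ``$[i, i^*] \times \{y\} \subseteq \F_\delta(\bar{R},\bar{B})$''. This follows because $\bar{R}(i^*)$ is the next prefix-minimum of $\bar{R}[i, n]$ along the suffix, so a prefix-minima path cannot extend a horizontal edge past column $i^*$, while below column $i^*$ the only rows at which greediness can trigger are precisely the ones for which the full row up to $i^*$ is $\delta$-free. I expect the remaining subtleties to be purely in the boundary cases: when $i^*$ is undefined (no further prefix-minimum exists, so $j'$ defaults to $+\infty$) or when $[j, \min(\hat{j}, j')]$ contains no prefix-minimum strictly above $j$ (in which case the edge has zero length), both of which are absorbed cleanly by the data-structure interfaces. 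Summing, the query is $\bigO(\log nm)$, matching the statement.
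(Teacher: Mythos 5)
Your proposal follows the paper's proof essentially step for step: the $\bigO(n)$ all-nearest-smaller-values pass to read off the second prefix-minimum $\bar{R}(i^*)$, the RMQ on $\bar{R}$ for $M$, the thresholded-first-vertex query on $\bar{B}$ for the row $j'$ where the horizontal move to $[i,i^*]$ becomes available (and the observation that the first such vertex is automatically a prefix-minimum, so the query is exact), the \cref{lem:max_prefix_below_threshold_ds} query for the free-space bound $\hat{j}$, and a final RMQ to locate the last prefix-minimum in the resulting range. Your justification of the equivalence between ``greediness triggers at row $y$'' and ``$[i,i^*]\times\{y\}\subseteq\F_\delta$'' is the same argument the paper makes implicitly via the separation of $\bar{R}$ and $\bar{B}$. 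Two small remarks in your favor: you correctly write $M=-\min_{x\in[i,i^*]}\bar{R}(x)$ (the paper drops the minus sign), and you cap the final RMQ range at $\min(\hat{j},j')$ rather than $\hat{j}$ alone, which is the right thing to do --- the paper's surrounding prose clearly intends $j^*\le j'$, but its stated procedure queries $\bar{B}[j,\hat{j}]$, which can overshoot past $j'$ when $j'<\hat{j}$ and a strictly smaller prefix-minimum lies between them; your explicit $\min$ cleanly closes that gap.
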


\subparagraph*{Completing the construction.}
    We proceed to iteratively construct the forest $\T_\hor(S)$.
    Let $S' \subseteq S$ and suppose we have the forest $\T_\hor(S')$.
    Initially, $\T_\hor(\emptyset)$ is the empty forest.
    We show how to construct $\T_\hor(S' \cup \{s\})$ for a point $s \in S \setminus S'$.

    We assume $\T_\hor(S')$ is represented as a geometric graph.
    Further, we assume that the $\bigO(n+m)$ vertices of $\T_\hor(S')$ are stored in a red-black tree, based on the lexicographical ordering of the endpoints.
    This allows us to query whether a given point is a vertex of $\T_\hor(S')$ in $\bigO(\log nm)$ time, and also allows us to insert new vertices in $\bigO(\log nm)$ time each.

    We use the data structures of~\cref{lem:horizontal_movement_ds,lem:vertical_movement_ds}.
    These allow us compute the edge of $\pi_\hor(s)$ after a given vertex in $\bigO(\log nm)$ time.
    The preprocessing for the data structures is $\bigO((n+m) \log nm)$.

    We construct the prefix $\pi$ of $\pi_\hor(s)$ up to the first vertex of $\pi_\hor(s)$ that is a vertex of $\T_\hor(S')$ (or until the last vertex of $\pi_\hor(s)$ if no such vertex exists).
    This takes $\bigO(|\pi| \log nm)$ time, where $|\pi|$ is the number of vertices of $\pi$.
    Recall that if two maximal horizontal-greedy $\delta$-matchings $\pi_\hor(s)$ and $\pi_\hor(s')$ have a point $(x, y)$ in common, then the paths are identical from $(x, y)$ onwards.
    Thus the remainder of $\pi_\hor(s)$ is a path in $\T_\hor(S')$.

    We add all vertices and edges of $\pi$, except the last two vertices $p$ and $q$ and the last edge $e = \overline{pq}$, to $\T_\hor(S')$.
    If $\T_\hor(S')$ does not have a vertex at point $q$ already, then $q$ does not lie anywhere on $\T_\hor(S')$, not even interior to an edge.
    Hence $\pi$ is completely disjoint from $\T_\hor(S')$, and we add $p$ and $q$ as vertices to the forest, and $e$ as an edge.
    If $\T_\hor(S')$ does have a vertex $\mu$ at point $q$, then the edge $e$ may overlap with an edge of $\T_\hor(S')$.
    In this case, we retrieve the edge $e_\mu$ of $\T_\hor(S')$ that overlaps with $e$ (if it exists), by identifying the edges incident to $\mu$.
    If $p$ lies on $e_\mu$, we subdivide $e_\mu$ by adding a vertex at $p$.
    If $p$ does not lie on $e_\mu$, then the endpoint $q_\mu$ of $e_\mu$ other than $q$ lies on the interior of $e$.
    We add an edge from $p$ to $q_\mu$.

    The above construction updates $\T_\hor(S')$ into the forest $\T_\hor(S' \cup \{s\})$ in $\bigO(|\pi| \log nm)$ time.
    Inserting all $\bigO(|\pi|)$ newly added vertices into the red-black tree takes an additional $\bigO(|\pi| \log nm)$ time.
    It follows from the combined $\bigO(n+m)$ complexity of $\T_\hor(S)$ that constructing $\T_\hor(S)$ in this manner takes $\bigO((n+m) \log nm)$ time.

    \begin{lemma}
    \label{lem:constructing_forest}
        We can construct a geometric graph for $\T_\hor(S)$ in $\bigO((n+m) \log nm)$ time.
    \end{lemma}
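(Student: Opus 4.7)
The plan is to follow the iterative construction sketched in the paragraphs directly above the lemma, making the bookkeeping precise and charging the cost against the linear complexity bound from~\cref{lem:forest_complexity}. First I would preprocess $\bar{R}$ and $\bar{B}$ into the data structures of~\cref{lem:horizontal_movement_ds,lem:vertical_movement_ds}; this takes $\bigO((n+m) \log nm)$ time and lets me, given any vertex $(i,j)$ of some horizontal-greedy $\delta$-matching, compute the next edge along $\pi_\hor$ in $\bigO(\log nm)$ time. I would also maintain $\T_\hor(S')$ as a geometric graph together with a balanced search tree (red-black) over its vertices keyed by lexicographic order, supporting $\bigO(\log nm)$ point location and insertion.

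Next I would process the entrances $s\in S$ one by one. For each $s$ I trace $\pi_\hor(s)$ starting at $s$, alternately querying the horizontal and vertical movement oracles and appending the resulting edges to a temporary path $\pi$, stopping as soon as the current endpoint coincides with an existing vertex of $\T_\hor(S')$ (checked via the search tree) or the path terminates. By the definition of the horizontal-greedy matchings, once two paths $\pi_\hor(s)$ and $\pi_\hor(s')$ share a vertex they agree thereafter, so the remainder of $\pi_\hor(s)$ is already contained in $\T_\hor(S')$ and need not be retraced.

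I would then splice $\pi$ into $\T_\hor(S')$: add all internal vertices and edges of $\pi$ except possibly the last edge $e=\overline{pq}$, and handle $e$ with a short case analysis. If $q$ is not already a vertex of the forest, $\pi$ is disjoint from $\T_\hor(S')$ and I simply insert $p$, $q$, and $e$. Otherwise I inspect the edges incident to the existing vertex at $q$ to detect whether $e$ overlaps a forest edge $e_\mu$ colinear with it; in the overlapping case I either subdivide $e_\mu$ at $p$ (if $p$ lies on $e_\mu$) or add a new edge from $p$ to the far endpoint of $e_\mu$ (if $p$ lies beyond it). Each such splice touches only $\bigO(1)$ forest objects plus the search tree, costing $\bigO(\log nm)$.

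The main obstacle is a charging argument for the total running time. The work per entrance is $\bigO(|\pi|\log nm)$, and $|\pi|$ is not individually bounded. However, each newly traced edge becomes a distinct edge (or a subdivision of one) of $\T_\hor(S)$ after the splice, because tracing stops at the first contact with the existing forest and the final edge contributes $\bigO(1)$ new features. By~\cref{lem:forest_complexity} the total number of vertices and edges of $\T_\hor(S)$ is $\bigO(n+m)$, so the summed length $\sum_s |\pi|$ is also $\bigO(n+m)$. Multiplying by the $\bigO(\log nm)$ cost per traced edge and adding the preprocessing yields the claimed $\bigO((n+m)\log nm)$ bound.
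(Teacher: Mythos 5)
Your proposal is correct and follows essentially the same approach as the paper: preprocess with the horizontal- and vertical-movement data structures, maintain the forest as a geometric graph with a red-black tree over vertices, trace each $\pi_\hor(s)$ until first contact with the existing forest, splice with the same case analysis for the final edge, and charge the total traced length against the $\bigO(n+m)$ complexity bound from~\cref{lem:forest_complexity}. No gaps.
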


\subsection{Propagating reachability}
\label{sub:propagating_reachability}

    Next we give an algorithm for propagating reachability information.
    For the algorithm, we consider three more $\delta$-matchings that are symmetric in definition to the horizontal-greedy $\delta$-matchings.
    The first is the maximal \emph{vertical-greedy} $\delta$-matching $\pi_\ver(s)$, which, as the name suggests, is the maximal prefix-minima $\delta$-matching starting at $s$ that prioritizes vertical movement over horizontal movement.
    The other two require a symmetric definition to prefix-minima, namely \emph{suffix-minima}.
    These are the vertices closest to $0$ compared to the suffix of the curve after the vertex.
    The maximal \emph{reverse} horizontal- and vertical-greedy $\delta$-matchings $\rev{\pi}_\hor(t)$ and $\rev{\pi}_\ver(t)$ are symmetric in definition to the maximal horizontal- and vertical-greedy $\delta$-matching, except that they move backwards, to the left and down, and their vertices correspond to suffix-minima of the curves (see~\cref{fig:suffix-minima_matchings}).

    \begin{figure}
        \centering
        \includegraphics[page=3]{prefix_and_suffix-minima_matchings.pdf}
        \caption{(left) For every vertex, its previous suffix-minimum is shown as its parent in the tree. (right) The reverse horizontal-greedy $\delta$-matchings.
        Paths move monotonically to the left and down.
        }
        \label{fig:suffix-minima_matchings}
    \end{figure}

    Consider a point $s = (i, j) \in S$ and let $t = (i', j') \in E$ be $\delta$-reachable from $s$.
    Let $\bar{R}(i^*)$ and $\bar{B}(j^*)$ form a bichromatic closest pair of $\bar{R}[i, i']$ and $\bar{B}[j, j']$.
    Note that these points are unique, by our general position assumption.
    Recall from~\cref{cor:matching_closest_pair} that $(i^*, j^*)$ is $\delta$-reachable from $s$, and that $t$ is $\delta$-reachable from $(i^*, j^*)$.

    From~\cref{lem:lower_envelope} we have that $\pi_\hor(s)$ has points vertically below $(i^*, j^*)$, and the vertical segment between $\pi_\hor(s)$ and $(i^*, j^*)$ lies in $\F_\delta(\bar{R}, \bar{B})$.
    We extend the property to somewhat predict the movement of $\pi_\hor(s)$ near~$t$:

    \begin{lemma}
    \label{lem:horizontal_greedy_paths}
        Either $\pi_\hor(s)$ terminates in $(i^*, j^*)$, or it contains a point vertically below $t$ or horizontally left of $t$.
    \end{lemma}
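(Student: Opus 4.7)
The plan is to anchor the argument at the vertex of $\pi_\hor(s)$ produced by \cref{lem:lower_envelope}: applied to the prefix-minima reachable point $(i^*, j^*)$, it yields a vertex $(i^*, \hat{j})$ on $\pi_\hor(s)$ with $\hat{j} \le j^*$ such that the whole vertical segment $\{i^*\} \times [\hat{j}, j^*]$ lies in $\F_\delta(\bar{R}, \bar{B})$. Two structural facts will drive the rest of the proof. First, because $\bar{R}(i^*)$ is (by the general-position assumption) the unique vertex of $\bar{R}[i, i']$ of minimum absolute value, every prefix-minimum of $\bar{R}[i, n]$ strictly after $\bar{R}(i^*)$ must have index $> i'$. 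Symmetrically, every prefix-minimum of $\bar{B}[j, m]$ strictly after $\bar{B}(j^*)$ must have index $> j'$.

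Starting from $(i^*, \hat{j})$, I would case-analyse the subsequent moves of $\pi_\hor(s)$ while it remains on the line $x = i^*$. At any vertex $(i^*, y)$ with $y \le j^*$, if the next move is horizontal it reaches a prefix-minimum $\bar{R}(i^{**})$ with $i^{**} > i'$, so the horizontal edge contains $(i', y)$ with $y \le j^* \le j'$, a point of $\pi_\hor(s)$ vertically below $t$. If instead the next move is vertical to $(i^*, y')$, then $y'$ is the next prefix-minimum of $\bar{B}[j, m]$ after $y$; since $\bar{B}(j^*)$ is itself such a prefix-minimum, the case $y < j^*$ forces $y' \le j^*$, so the vertical edge sits inside the free segment $\{i^*\} \times [\hat{j}, j^*]$, making the move realisable, and I iterate the analysis at $(i^*, y')$.

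Because the prefix-minima of $\bar{B}[\hat{j}, j^*]$ form a finite chain, the iteration either triggers a winning horizontal move or reaches $(i^*, j^*)$ itself. At $(i^*, j^*)$, one of three things can happen: $\pi_\hor(s)$ terminates, giving the first disjunct of the statement; the next move is horizontal and, by the same reasoning, produces a point below $t$; or the next move is vertical to some $(i^*, y')$ with $y' > j^*$, which by the second structural fact forces $y' > j'$, so the vertical edge contains $(i^*, j')$, a point horizontally left of $t$ since $i^* \le i'$.

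The main delicacy I expect is ensuring that each vertical move the greedy path attempts actually lies inside $\F_\delta(\bar{R}, \bar{B})$. For the iteration vertices with $y < j^*$ this is handled by the prefix-minimum sandwich $y' \le j^*$, which keeps the step inside the already-free segment supplied by \cref{lem:lower_envelope}. At $(i^*, j^*)$ the resolution is different: if the vertical step happens to be infeasible and no horizontal step exists either, then by maximality the path terminates precisely at $(i^*, j^*)$ and the first disjunct is satisfied; otherwise, whichever feasible move is taken lands in one of the winning configurations described above.
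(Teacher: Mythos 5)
Your proof follows the paper's approach exactly: apply \cref{lem:lower_envelope} to anchor a point of $\pi_\hor(s)$ on the vertical segment $\{i^*\}\times[\hat{j},j^*]$, then use that $\bar{R}(i^*)$ and $\bar{B}(j^*)$ are the \emph{last} prefix-minima of $\bar{R}[i,i']$ and $\bar{B}[j,j']$ (so no vertex of $\pi_\hor(s)$ lands in the strips $(i^*,i']\times[1,m]$ or $[1,n]\times(j^*,j']$) to force one of the three outcomes. The paper compresses your iteration into a single appeal to maximality; you spell it out vertex by vertex, which is arguably more careful. One small imprecision worth noting: a vertical edge leaving $(i^*,y)$ with $y<j^*$ is \emph{not} constrained to stop at the next prefix-minimum of $\bar{B}$; since such edges of a prefix-minima matching may pass through intermediate prefix-minima pairs without turning, the edge could overshoot $j^*$ entirely (provided the horizontal-greedy condition never fires along $\{i^*\}\times(y,j^*]$). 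In that event the edge's upper endpoint satisfies $y'>j'$ by your second structural fact, so the edge contains $(i^*,j')$ and the third disjunct holds anyway; the conclusion survives, but the intermediate claim ``$y<j^*$ forces $y'\le j^*$'' is not literally correct.
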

    \begin{proof}
        From~\cref{lem:lower_envelope} we obtain that there exists a point $(i^*, \hat{j})$ on $\pi_\hor(s)$ that lies vertically below $(i^*, j^*)$.
        Moreover, the vertical line segment $\{i^*\} \times [\hat{j}, j^*]$ lies in $\F_\delta(\bar{R}, \bar{B})$.
        Because $\bar{R}(i^*)$ and $\bar{B}(j^*)$ form the unique bichromatic closest pair of $\bar{R}[i, i']$ and $\bar{B}[j, j']$, we have that $\bar{R}(i^*)$ and $\bar{B}(j^*)$ are the last prefix-minima of $\bar{R}[i, i']$ and $\bar{B}[j, j']$.
        Hence $\pi_\hor(s)$ has no vertex in the vertical slab $[i^*+1, i'] \times [1, m]$.
        Symmetrically, $\pi_\hor(s)$ has no vertex in the horizontal slab $[1, n] \times [j^*+1, j']$.
        Maximality of $\pi_\hor(s)$ therefore implies that $\pi_\hor(s)$ either moves horizontally from $(i^*, \hat{j})$ past $(i', \hat{j})$, or $\pi_\hor(s)$ moves vertically from $(i^*, \hat{j})$ to $(i^*, j^*)$, where it either terminates or moves further upwards past $(i^*, j')$.
    \end{proof}

    \begin{figure}[b]
        \centering
        \includegraphics{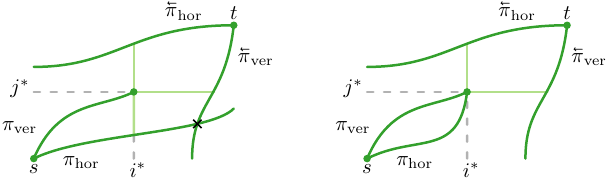}
        \caption{Two possible situations following from~\cref{lem:lower_envelope,lem:horizontal_greedy_paths}.
        The paths starting at $s$ or~$t$ are the four greedy matchings.
        The horizontal and vertical light green segments lie in $\F_\delta(R, B)$.
        On the right, the extensions of $\pi_\hor(s)$ and $\pi_\ver(s)$ respectively intersect $\rev{\pi}_\ver(t)$ and~$\rev{\pi}_\hor(t)$.}
        \label{fig:greedy_paths}
    \end{figure}

    \noindent
    Based on~\cref{lem:lower_envelope,lem:horizontal_greedy_paths} and their symmetric counterparts, $\pi_\hor(s)$ and $\pi_\ver(s)$ satisfy the properties below, and $\rev{\pi}_\hor(t)$ and $\rev{\pi}_\ver(t)$ satisfy symmetric properties, see~\cref{fig:greedy_paths}.
    \begin{itemize}
        \item $\pi_\hor(s)$ has a point vertically below $(i^*, j^*)$, and the vertical segment between $\pi_\hor(s)$ and $(i^*, j^*)$ lies in $\F_\delta(\bar{R}, \bar{B})$.

        \item $\pi_\ver(s)$ has a point horizontally left of $(i^*, j^*)$, and the horizontal segment between $\pi_\ver(s)$ and $(i^*, j^*)$ lies in $\F_\delta(\bar{R}, \bar{B})$.

        \item $\pi_\hor(s)$ and $\pi_\ver(s)$ both either terminate in $(i^*, j^*)$, or contain a point vertically below $t$ or horizontally left of $t$.
    \end{itemize}
\noindent
    These properties mean that either $\pi_\hor(s) \cup \pi_\ver(s)$ intersects $\rev{\pi}_\hor(t) \cup \rev{\pi}_\ver(t)$, or the following extensions do:
    Let $\pi^+_\hor(s)$ be the path obtained by extending $\pi_\hor(s)$ with the maximum horizontal line segment in $\F_\delta(\bar{R}, \bar{B})$ whose left endpoint is the end of $\pi_\hor(s)$.
    Define $\pi^+_\ver(s)$ symmetrically, by extending $\pi_\ver(s)$ with a vertical segment.
    Also define $\rev{\pi}^+_\hor(s)$ and $\rev{\pi}^+_\ver(s)$ analogously.
    By~\cref{lem:lower_envelope}, $\pi^+_\hor(s)$ or $\pi^+_\ver(s)$ \emph{must} intersect $\rev{\pi}^+_\hor(t)$ or $\rev{\pi}^+_\ver(t)$.
    Furthermore, if $\pi^+_\hor(s)$ or $\pi^+_\ver(s)$ intersects $\rev{\pi}^+_\hor(t')$ or $\rev{\pi}^+_\ver(t')$ for some potential exit $t' \in E$, then the bimonotonicity of the paths implies that $t'$ is $\delta$-reachable from $s$.
    Thus:

    \begin{lemma}
        A point $t \in E$ is $\delta$-reachable from a point $s \in S$ if and only if $\pi^+_\hor(s) \cup \pi^+_\ver(s)$ intersects $\rev{\pi}^+_\hor(t') \cup \rev{\pi}^+_\ver(t')$.
    \end{lemma}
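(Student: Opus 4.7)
The plan is to prove the two directions separately: the ``if'' direction follows from a short bimonotonicity argument, while the ``only if'' direction requires a case analysis anchored at the bichromatic closest pair of the relevant subcurves.

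For the backward direction, suppose $\pi^+_\hor(s) \cup \pi^+_\ver(s)$ meets $\rev{\pi}^+_\hor(t) \cup \rev{\pi}^+_\ver(t)$ at a point $p$. By construction each of the four extended paths is a bimonotone path contained in $\F_\delta(\bar R, \bar B)$---the two starting at $s$ proceed up and to the right, and the two starting at $t$ proceed down and to the left. Hence $p$ lies bimonotonically between $s$ and $t$, and concatenating the sub-path of a forward path from $s$ to $p$ with the sub-path of a reverse path from $p$ to $t$ yields a bimonotone $\delta$-free path from $s$ to $t$, so $t$ is $\delta$-reachable from $s$.

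For the forward direction, write $s = (i, j)$ and $t = (i', j')$, and let $\bar R(i^*)$ and $\bar B(j^*)$ be the (unique, by general position) bichromatic closest pair of $\bar R[i, i']$ and $\bar B[j, j']$; by \cref{cor:matching_closest_pair} they lie on some \f matching between these subcurves. \cref{lem:prefix_minimum_matching} then yields a prefix-minima $\delta$-matching from $s$ to $(i^*, j^*)$ and, symmetrically, from $(i^*, j^*)$ to $t$. Applying \cref{lem:lower_envelope} to $\pi_\hor(s)$ with the endpoint $(i^*, j^*)$ produces a vertex $(i^*, \hat j) \in \pi_\hor(s)$ with $\hat j \le j^*$ and $\{i^*\} \times [\hat j, j^*] \subseteq \F_\delta(\bar R, \bar B)$; the symmetric statement for $\pi_\ver(s)$ gives a vertex $(\hat i, j^*) \in \pi_\ver(s)$ with $\hat i \le i^*$ and an analogous $\delta$-free horizontal segment. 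The mirrored statements applied to the reverse paths yield a vertex $(i^*, \tilde j) \in \rev{\pi}_\hor(t)$ with $\tilde j \ge j^*$ and a vertex $(\tilde i, j^*) \in \rev{\pi}_\ver(t)$ with $\tilde i \ge i^*$, together with the corresponding $\delta$-free segments reaching $(i^*, j^*)$.

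The main obstacle is combining these four ``anchored'' points with the maximal extensions to actually exhibit an intersection between the forward and reverse families. Here I would invoke \cref{lem:horizontal_greedy_paths} and its three symmetric variants: each of the four greedy paths, beyond its anchored point, either terminates at $(i^*, j^*)$ or leaves the axis-aligned rectangle between $(i^*, j^*)$ and its source/sink through the horizontal or vertical side. A case analysis on the behaviours of $\pi_\hor(s)$ and $\rev{\pi}_\hor(t)$ (and analogously for the $\ver$ pair) shows that every side that is not already traced by a greedy path corresponds to a maximal $\delta$-free horizontal or vertical segment---which is exactly what $\pi^+_\hor$, $\pi^+_\ver$, $\rev{\pi}^+_\hor$ and $\rev{\pi}^+_\ver$ are defined to supply. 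By the maximality of the extensions, any resulting horizontal segment from the forward family must cross any orthogonal vertical segment from the reverse family (and vice versa), producing the required intersection. I anticipate that the cleanest route is to reduce to a single representative subcase---namely $\pi_\hor(s)$ continues horizontally past column $i'$ while $\rev{\pi}_\ver(t)$ continues vertically past row $\hat j$---and invoke the remaining three subcases by symmetry, since the bookkeeping of all four is the main source of friction.
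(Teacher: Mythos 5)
The backward direction is fine and essentially matches the paper's one-line observation about bimonotonicity: any intersection point $p$ satisfies $s \preceq p \preceq t$ componentwise, and both half-paths lie in $\F_\delta(\bar R,\bar B)$, so splicing them gives a $\delta$-matching from $s$ to $t$.

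For the forward direction you correctly identify the right tool chain (bichromatic closest pair $(i^*,j^*)$, \cref{cor:matching_closest_pair}, \cref{lem:prefix_minimum_matching} to get a prefix-minima matching, then \cref{lem:lower_envelope} and \cref{lem:horizontal_greedy_paths} to anchor each of the four greedy paths at one of the points $(i^*,\hat j)$, $(\hat i, j^*)$, $(i^*,\tilde j)$, $(\tilde i, j^*)$ with the associated $\delta$-free segments reaching $(i^*,j^*)$). This is the same skeleton the paper uses in the three bullet points preceding the lemma.

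However, the reduction to a single ``representative'' subcase does not work, and this is a genuine gap. Your representative subcase ($\pi_\hor(s)$ escapes horizontally past column $i'$, and $\rev{\pi}_\ver(t)$ escapes vertically past row $j$) produces a crossing at $(\tilde i, \hat j)$ \emph{without ever touching the extensions}. But the extensions are precisely what make the lemma true in the cases where a greedy path \emph{terminates} at $(i^*,j^*)$ rather than escaping, and those cases are not symmetry images of the escape case. Concretely: if, say, $\pi_\hor(s)$ terminates at $(i^*,j^*)$, the mechanism is that $\pi^+_\hor(s)$ adds the maximal horizontal $\delta$-free segment starting at $(i^*,j^*)$, which by \cref{lem:lower_envelope} applied to $\rev{\pi}_\ver(t)$ (giving $\delta$-freeness of $[i^*,\tilde i]\times\{j^*\}$) must reach the anchor $(\tilde i, j^*)\in\rev{\pi}_\ver(t)$; the symmetric termination cases use the vertical anchor segment $\{i^*\}\times[\hat j, j^*]$ in the same way. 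None of this is visible in the escape-escape subcase. There are also mixed cases --- one forward path escapes horizontally while the paired reverse path escapes horizontally too (e.g.\ $\pi_\hor(s)$ at height $\hat j < j^*$ and $\rev{\pi}_\ver(t)$ continuing left at height $j^*$) --- in which the chosen forward/reverse pair genuinely fails to intersect and you must switch to the other pair $\pi_\ver(s)$, $\rev{\pi}_\hor(t)$ before an intersection can be exhibited. So the bookkeeping you hope to avoid is unavoidable: you need to check that for \emph{some} pairing of one forward and one reverse path, either both pass through $(i^*,j^*)$, or one terminates there and its extension reaches the other's anchor, or both escape orthogonally and cross. Phrasing it this way (terminate-or-escape, with the $\delta$-free anchor segments guaranteeing what the extensions reach) is cleaner than enumerating all four greedy paths independently, but it is a multi-case argument, not one case plus symmetry.

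A smaller issue: the sentence ``any resulting horizontal segment from the forward family must cross any orthogonal vertical segment from the reverse family'' is too strong as stated; orthogonal segments need not cross. What is true is that the anchor segments supplied by \cref{lem:lower_envelope} pin down exactly where the extensions land, which is what forces the crossing in each termination case.
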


    \noindent
    Recall that $\T_\hor(S)$ represents all paths $\pi_\hor(s)$.
    We augment $\T_\hor(S)$ to represent all paths $\pi^+_\hor(s)$.
    For this, we take each root vertex $p$ and compute the maximal horizontal segment $\overline{pq} \subseteq \F_\delta(\bar{R}, \bar{B})$ that has $p$ as its left endpoint.
    We compute this segment in $\bigO(\log n)$ time after $\bigO(n \log n)$ time preprocessing (see~\cref{lem:max_prefix_below_threshold_ds}).
    We then add $q$ as a vertex to $\T_\hor(S)$, and add an edge from $p$ to $q$.

    Let $\T^+_\hor(S)$ be the augmented graph.
    We define the graphs $\T^+_\ver(S)$, $\rev{\T}^+_\hor(E)$ and $\rev{\T}^+_\ver(E)$ analogously.
    The four graphs have a combined complexity of $\bigO(n+m)$ and can be constructed in $\bigO((n+m) \log nm)$ time.
    Our algorithm computes the edges of $\rev{\T}^+_\hor(E)$ and $\rev{\T}^+_\ver(E)$ that intersect an edge of $\T^+_\hor(S)$ or $\T^+_\ver(S)$.
    We do so with a standard sweepline algorithm:

    \begin{lemma}
    \label{lem:reporting_some_intersections}
        Given sets of $n$ ``red'' and $m$ ``blue,'' axis-aligned line segments in $\R^2$, we can report all segments that intersect a segment of the other color in $\bigO((n+m) \log nm)$ time.
    \end{lemma}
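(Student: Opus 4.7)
The plan is to adapt a standard orthogonal plane sweep, running it in two symmetric passes so that every segment meeting a segment of the opposite color is marked exactly once.

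First I will describe the left-to-right pass, which is responsible for marking vertical segments. The events, sorted by $x$-coordinate in $\bigO((n+m) \log nm)$ time, will be the left and right endpoints of the horizontal segments and the $x$-coordinates of the vertical segments. I will maintain two balanced binary search trees of the horizontal segments currently crossing the sweep line, one per color, keyed by $y$-coordinate. Each endpoint event triggers an insertion or deletion in $\bigO(\log nm)$ time. When a vertical segment $v$ of color $c$ with $y$-range $[y_1, y_2]$ is reached, I will query the tree of the opposite color for the successor of $y_1$; if this successor is at most $y_2$, then $v$ crosses some horizontal segment of the opposite color, so I mark $v$. A symmetric bottom-to-top pass, swapping the roles of horizontal and vertical, then marks the horizontal segments. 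Collinear overlaps of opposite-color segments sharing a supporting line — the only remaining source of intersections — can be found by bucketing segments by their supporting line and performing a one-dimensional two-color interval-union scan per bucket, for $\bigO((n+m) \log nm)$ in total.

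The main subtlety to watch out for is that the number of red-blue intersections can be as large as $\Omega(nm)$, so I cannot afford to enumerate them during the sweep. Using the decision variant of the range query (``is there \emph{any} element of the opposite-color tree in $[y_1, y_2]$?'') sidesteps this: each of the $\bigO(n+m)$ events costs only $\bigO(\log nm)$ work irrespective of the intersection count, yielding the claimed $\bigO((n+m) \log nm)$ running time. The remaining bookkeeping — a Boolean ``marked'' flag per segment so that no segment is reported twice, and the collinear-overlap corner cases — is routine.
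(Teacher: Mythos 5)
Your proposal is correct and achieves the claimed bound, but it decomposes the problem differently from the paper. The paper runs two \emph{color-based} passes (one to mark red segments, one symmetric pass for blue), each a single upward sweep over endpoint events maintaining interval trees of both colors' cross-sections with the sweepline. Because a persistent segment may be intersected by a momentary segment of the other color inserted at a later event, that pass cannot get by with decision queries: when a blue segment appears it must \emph{report} all not-yet-marked red segments it overlaps and delete them from the red tree, so that each red segment is output at most once and the running time stays near-linear. Your plan instead decomposes by \emph{orientation}: a left-to-right pass marks vertical segments via a decision-only successor query against the opposite-color set of horizontal segments currently crossed, a bottom-to-top pass does the symmetric thing for horizontal segments, and a separate bucketed one-dimensional scan catches collinear overlaps (horizontal-horizontal and vertical-vertical), which the two perpendicular sweeps miss. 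The decision-only queries are simpler than the paper's reporting-plus-removal machinery and sidestep the $\Omega(nm)$ output-size concern cleanly; the cost is the explicit extra collinear pass, which the paper's interval trees absorb implicitly (vertical segments become degenerate point-intervals and collinear horizontal segments become overlapping intervals on the same event line). One point you wave off as routine that is worth making explicit: at equal sweep coordinates, insertions must precede queries and deletions must follow them, or touching intersections at segment endpoints will be missed. With that ordering fixed, both routes run in $\bigO((n+m)\log nm)$.
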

    \begin{proof}
        Let $L_R$ be the set of $n$ red segments and let $L_B$ be the set of $m$ blue segments.
        We give an algorithm that reports all red segments that intersect a blue segment.
        Reporting all blue segments that intersect a red segment can be done symmetrically.

        We give a horizontal sweepline algorithm, where we sweep upwards.
        During the sweep, we maintain three structures:
        \begin{enumerate}
            \item The set $L^*_R$ of segments in $L_R$ for which we have swept over an intersection with a segment in $L_B$.
            \item An interval tree~\cite{cormen89introduction_to_algorithms} $T_R$ storing the intersections between segments in $L_R \setminus L^*_R$ and the sweepline (viewing the sweepline as a number line).
            \item An interval tree $T_B$ storing the intersections between segments in $L_B$ and the sweepline.
        \end{enumerate}
        The trees $T_R$ and $T_B$ use $\bigO(|L_R \setminus L^*_R|)$, respectively $\bigO(|L_B|)$, space.
        The trees allow for querying whether a given interval intersects an interval in the tree in time logarithmic in the size of the tree, and allow for reporting all intersected intervals in additional time linear in the output size.
        Furthermore, the trees allow for insertions and deletions in time logarithmic in their size.

        The interval trees change only when the sweepline encounters an endpoint of a segment.
        Moreover, if two segments $e_R \in L_R$ and $e_B \in L_B$ intersect, then they have an intersection point that lies on the same horizontal line as an endpoint of $e_R$ or $e_B$.
        Hence it also suffices to update $L^*_R$ only when the sweepline encounters an endpoint.
        We next discuss how to update the structures.

        Upon encountering an endpoint, we first update the interval trees $T_R$ and $T_B$ by inserting the set of segments whose bottom-left endpoint lies on the sweepline.
        Let $L'_R \subseteq L_R$ and $L'_B \subseteq L_B$ be the sets of newly inserted segments.

        For each segment $e \in L'_R$, we check whether it intersects a line segment in $L_B$ in a point on the sweepline.
        For this, we query the interval tree $T_B$, which reports whether there exists an interval overlapping the interval corresponding to $e$ in $\bigO(\log m)$ time.
        If the query reports affirmative, we insert $e$ into $L^*_R$ and remove it from $L'_R$.
        The total time for this step is $\bigO(|L'_R| \log m)$.

        For each segment $e$ in $L'_B$, we report the line segments in $L_R \setminus L^*_R$ that have an intersection with $e$ on the sweepline.
        Doing so takes $\bigO(\log n + k_e)$ time by querying $T_R$, where $k_e$ is the number of segments reported for $e$.
        Before reporting the intersections of the next segment in $L'_B$, we first add all $k_e$ reported segments to $L^*_R$, remove them from $L'_R$, and remove their corrsponding intervals from $T_R$.
        This ensures that we report every segment at most once.
        Updating the structures takes $\bigO((1+k_e) \log n)$ time.
        Taken over all segments $e \in L'_B$, the total time taken for this step is $\bigO((|L'_B|+\sum_{e \in L'_R} k_e) \log n)$.

        Finally, we remove each segment from $T_R$ and $T_B$ whose top-right endpoint lies on the sweepline, as these are no longer intersected by the sweepline when advancing the sweep.

        Computing the events of the sweepline takes $\bigO((n+m) \log nm)$ time, by sorting the endpoints of the segments by $y$-coordinate.
        Each red, respectively blue, segment inserted and deleted from its respective interval tree exactly once.
        Hence each segment is included in $L'_R$ or $L'_B$ exactly once.
        It follows that the total computation time is $\bigO((n+m) \log nm)$.
    \end{proof}

    \noindent
    Suppose we have computed the set of edges $\mathcal{E}$ of $\rev{\T}^+_\hor(E)$ and $\rev{\T}^+_\ver(E)$ that intersect an edge of $\T^+_\hor(S)$ or $\T^+_\ver(S)$.
    We store $\mathcal{E}$ in a red-black tree, so that we can efficiently retrieve and remove edges from this set.
    Let $e \in \mathcal{E}$ and suppose $e$ is an edge of $\rev{\T}^+_\hor(E)$.
    Let $\mu$ be the top-right vertex of $e$.
    All potential exits of $E$ that are stored in the subtree of $\mu$ are reachable from a point in $S$.
    We traverse the entire subtree of $\mu$, deleting every edge we find from $\mathcal{E}$.
    Every point in $E$ we find is marked as reachable.
    In this manner, we obtain:

    \begin{theorem}
    \label{thm:propagating_reachability}
        Let $\bar{R}$ and $\bar{B}$ be two separated one-dimensional curves with $n$ and $m$ vertices.
        Let $\delta \geq 0$, and let $S, E \subseteq \F_\delta(\bar{R}, \bar{B})$ be sets of $\bigO(n+m)$ points.
        We can compute the set of all points in $E$ that are $\delta$-reachable from points in $S$ in $\bigO((n+m) \log nm)$ time.
    \end{theorem}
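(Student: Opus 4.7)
The plan is to assemble the machinery already built in this section into three phases: construct the four augmented greedy forests, detect ``which reverse-forest edges meet a forward-forest edge'' with bichromatic segment intersection, and then propagate the resulting marks down each reverse forest to identify the reachable exits.

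First I would construct $\T^+_\hor(S)$, $\T^+_\ver(S)$, $\rev{\T}^+_\hor(E)$, and $\rev{\T}^+_\ver(E)$. By~\cref{lem:constructing_forest} and its symmetric analogues (for vertical-greedy paths and for the reversed curves), the four base forests are computable in $\bigO((n+m) \log nm)$ time. Each root of a forward forest is extended by a single maximal horizontal or vertical free segment, computed with the data structure of~\cref{lem:max_prefix_below_threshold_ds} (and its obvious counterpart on $\bar{B}$) in $\bigO(\log nm)$ time; the same is done for the roots of the reverse forests. By~\cref{lem:forest_complexity} the four augmented forests have combined complexity $\bigO(n+m)$.

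Next, I would classify the edges of $\T^+_\hor(S) \cup \T^+_\ver(S)$ as ``red'' and those of $\rev{\T}^+_\hor(E) \cup \rev{\T}^+_\ver(E)$ as ``blue,'' all axis-aligned, and invoke the bichromatic sweep of~\cref{lem:reporting_some_intersections} to obtain in $\bigO((n+m) \log nm)$ time the set $\mathcal{E}$ of blue edges that cross at least one red edge. By the characterisation established immediately above the theorem, a point $t \in E$ is $\delta$-reachable from some $s \in S$ precisely when one of $\pi^+_\hor(s), \pi^+_\ver(s)$ crosses an edge of $\rev{\pi}^+_\hor(t) \cup \rev{\pi}^+_\ver(t)$; since the latter two paths are represented as root-to-leaf branches of their respective reverse forests, $t$ is reachable if and only if some ancestor edge of $t$ in one of these forests appears in $\mathcal{E}$.

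Finally, to translate $\mathcal{E}$ into marked exits, I would store $\mathcal{E}$ in a balanced binary search tree and repeat the following until $\mathcal{E}$ is empty: extract any $e \in \mathcal{E}$, let $\mu$ be its top-right endpoint in the appropriate reverse forest, and traverse the entire subtree rooted at $\mu$; each leaf encountered lies in $E$ and is reported as reachable, and each edge encountered is deleted from $\mathcal{E}$ so that it is processed only once. Because every descendant leaf of $\mu$ inherits the same ancestor edge $e$, the bimonotonicity of greedy paths together with the preceding characterisation implies that each such leaf is reachable; and conversely each reachable $t$ is discovered via the \emph{first} edge on its root-to-leaf path that lies in $\mathcal{E}$. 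Since each edge of the reverse forests is visited at most once, this phase costs $\bigO((n+m) \log nm)$ for the red-black-tree bookkeeping, matching the claimed bound.

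The main obstacle I anticipate is verifying the correctness of the subtree-sweep: one has to argue both that an intersected ancestor edge really certifies reachability of \emph{every} descendant leaf (this uses the bimonotonicity of $\pi^+_\hor(s)$, $\pi^+_\ver(s)$ and the fact that descendant leaves share the same initial backwards-greedy prefix via $\mu$) and that no reachable exit is missed once edges get deleted from $\mathcal{E}$ (which follows because deletion only happens after its entire subtree of leaves has been marked).
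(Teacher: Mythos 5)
Your proposal matches the paper's proof essentially step for step: build the four augmented greedy forests, run the bichromatic axis-aligned segment sweep to identify the set $\mathcal{E}$ of reverse-forest edges crossed by a forward-forest edge, and then propagate down each reverse forest by subtree traversal with deletion from $\mathcal{E}$ to mark reachable exits exactly once. Your correctness discussion of the subtree sweep is also consistent with the paper's reasoning (it rests on the same facts: bimonotonicity of the greedy paths, the characterization lemma, and the fact that reverse paths through a vertex $\mu$ share the edge from $\mu$ toward the root), so there is no substantive difference in approach.
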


\section{Conclusion}
    We studied computing the approximate \f distance of two curves $R$ and $B$ that bound a simple polygon $P$,
    one clockwise and one counterclockwise, whose endpoints meet.
    Our algorithm is approximate, though the only approximate parts are for matching the far points and turning the decision algorithm into an optimization algorithm.
    Doing so exactly and in strongly subquadratic time remains an interesting open problem.
    
    Our algorithm extends to the case where $R$ and $B$ do not cover the complete boundary of the polygon.
    In other words, the start and endpoints of $R$ and $B$ need not coincide. Geodesics between points on $R$ and $B$ must stay inside $P$.
    In this case, $k=|P|$ can be much greater than $n+m-2$, which influences the preprocessing and query times of various data structures we use.
    The running time then becomes: $\bigO(k + \frac{1}{\eps} (n+m \log n) \log k \log \frac{1}{\eps})$.

\bibliographystyle{plainurl}
\bibliography{bibliography}

\newpage
\appendix

\section{Convex polygon}
\label{app:convex}

    Let $R \from [1, n] \to \R^2$ and $B \from [1, m] \to \R^2$ be two curves that bound a convex polygon $P$.
    We assume that $R$ moves clockwise and $B$ counter-clockwise around $P$.
    We give a simple linear-time algorithm for computing the geodesic \f distance between $R$ and $B$.
    For this we show that in this setting, there exists a \f matching of a particular structure, which we call a \emph{maximally-parallel matching}.

    \begin{figure}[b]
        \centering
        \includegraphics[page=2]{convex_polygon.pdf}
        \caption{An illustration of the maximally-parallel matching with respect to $\ell$.}
        \label{fig:maximally_parallel}
    \end{figure}

    Consider a line $\ell$.
    Let $R[r, r']$ and $B[b, b']$ be the maximal subcurves for which the lines supporting $\overline{rb}$ and $\overline{r'b'}$ are parallel to $\ell$, and $R[r, r']$ and $B[b, b']$ are contained in the strip bounded by these lines.
    Using the convexity of the curves, it must be that $r = R(1)$ or $b = B(1)$, as well as $r' = R(n)$ or $b' = B(m)$.
    The maximally-parallel matching with respect to $\ell$ matches $R[r, r']$ to $B[b, b']$, such that for every pair of points matched, the line through them is parallel to $\ell$.
    The rest of the matching matches the prefix of $R$ up to $r$ to the prefix of $B$ up to $b$, and matches the suffix of $R$ from $r'$ to the suffix of $B$ from $b'$, where for both parts, one of the subcurves is a single point.
    See~\cref{fig:maximally_parallel} for an illustration.
    We refer to the three parts of the matching as the first ``fan'', the ``parallel'' part, and the last ``fan''.

    In~\cref{lem:maximally-parallel} we show the existence of a maximally-parallel \f matching.
    Moreover, we show that there exists a \f matching that is a maximally-parallel matching with respect to a particular line that proves useful for our construction algorithm.
    Specifically, we show that there exists a pair of parallel lines $\ell_R$ and $\ell_B$ tangent to $P$, with $P$ between them, such that the maximally-parallel matching with respect to the line through the bichromatic closest pair of points $r^* \in \ell_R \cap R$ and $b^* \in \ell_B \cap B$ is a \f matching.
    To prove that such a matching exists, we first prove that there exist parallel tangents that go through points that are matched by a \f matching:

    \begin{figure}[t]
        \centering
        \includegraphics[page=3]{convex_polygon.pdf}
        \caption{Rotating tangent lines $\ell_R$ and $\ell_B$ around $P$ while the lines touch points $r$ and $b$ matched by the green matching.}
        \label{fig:rotating_calipers}
    \end{figure}

    \begin{lemma}
    \label{lem:finding_parallel_calipers}
        For any matching $(f, g)$ between $R$ and $B$, there exists a value $t \in [0, 1]$ and parallel lines tangent to $R(f(t))$ and $B(g(t))$ with $P$ in the area between them.
    \end{lemma}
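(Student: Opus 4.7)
The plan is to apply the intermediate value theorem to the difference of outward-normal angles at $R(f(t))$ and $B(g(t))$. Two parallel supporting lines of $P$ through these two points with $P$ in between exist precisely when one can choose outward normals at the two points that differ by $\pi$, so producing such a $t$ amounts to hitting a specific value of a continuous angular function.

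First I would continuously parameterize the outward normal along each curve. Since $R$ traces $\partial P$ clockwise, the outward normal angle along $R$ is monotonically non-increasing; since $B$ traces $\partial P$ counter-clockwise, the outward normal along $B$ is monotonically non-decreasing. At vertices of $P$ the normal is multi-valued, ranging over an exterior-angle cone, so I would augment the parameterizations of $R$ and $B$ by inserting, at each vertex $v$ of $P$, an interval on which the point stays at $v$ while the normal rotates continuously through the cone at $v$. This gives continuous monotone normal-angle functions $\tilde\tau_R$ and $\tilde\tau_B$. Normalizing so that $\tilde\tau_R(1)=0$ and $\tilde\tau_B(1)=\epsilon_0$, where $\epsilon_0, \epsilon_1$ are the exterior angles of $P$ at $p_0 = R(1) = B(1)$ and $p_1 = R(n) = B(m)$, the total change of $\tilde\tau_R$ along $R$ is $-\sigma_R$ and of $\tilde\tau_B$ along $B$ is $+\sigma_B$, with $\sigma_R + \sigma_B + \epsilon_0 + \epsilon_1 = 2\pi$ by one full traversal of $\partial P$.

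Next, the matching $(f,g)$ lifts continuously to the augmented parameterizations by inserting pauses whenever the matching encounters a vertex of $P$. Consider $\psi(t) := \tilde\tau_R(\tilde f(t)) - \tilde\tau_B(\tilde g(t))$. Then $\psi$ is continuous on $[0,1]$ with $\psi(0) = -\epsilon_0$ and $\psi(1) = -\sigma_R - \epsilon_0 - \sigma_B = -2\pi + \epsilon_1$. Since $P$ is convex we have $\epsilon_0, \epsilon_1 \in [0, \pi]$, so $-\pi$ lies in $[\psi(1), \psi(0)] = [-2\pi+\epsilon_1,\, -\epsilon_0]$. The intermediate value theorem therefore produces $t^* \in [0,1]$ with $\psi(t^*)=-\pi$. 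By construction $\tilde\tau_R(\tilde f(t^*))$ is an outward normal of $P$ at $R(f(t^*))$ and $\tilde\tau_B(\tilde g(t^*))$ is an outward normal at $B(g(t^*))$; since they differ by $\pi$ modulo $2\pi$, the corresponding supporting lines are parallel with $P$ sandwiched between them.

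The main obstacle is the continuous lifting of the outward-normal angle through vertices of $P$ where $\tau$ is multi-valued; once the augmented parameterization absorbs these cone rotations into genuine intervals of the domain, the rest is a straightforward IVT on a continuous scalar function whose endpoint values straddle $-\pi$.
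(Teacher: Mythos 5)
Your proof is correct and takes essentially the same approach as the paper: both arguments track the tangent/supporting lines at matched points as $t$ runs over $[0,1]$ and invoke continuity to find a moment where they are parallel. The paper phrases this informally as rotating calipers that start and end as coinciding tangents, while you make the same argument rigorous by lifting the outward-normal angles to a continuous monotone function $\psi$ and applying the intermediate value theorem to hit $-\pi$.
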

    \begin{proof}
        Let $\ell_R$ and $\ell_B$ be two coinciding lines tangent to $P$ at $R(1) = B(1)$.
        Due to the continuous and monotonic nature of matchings, we can rotate $\ell_R$ and $\ell_B$ clockwise and counter-clockwise, respectively, around $P$, until they coincide again, such that at every point of the movement, there are points $r \in \ell_R$ and $b \in \ell_B$ that are matched by $(f, g)$.
        See~\cref{fig:rotating_calipers} for an illustration.
        Because the lines start and end as coinciding tangents, there must be a point in time strictly between the start and end of the movement where the lines are parallel.
        The area between these lines contains $P$, and thus these lines specify a time $t \in [0, 1]$ that satisfies the claim.
    \end{proof}

    \begin{lemma}
    \label{lem:maximally-parallel}
        There exists a \f matching between $R$ and $B$ that is a maximally-parallel matching with respect to a line perpendicular to an edge of $P$.
    \end{lemma}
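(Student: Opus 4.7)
The plan is to combine Lemma \ref{lem:finding_parallel_calipers} with a rotating-calipers continuity argument. I would start from an arbitrary Fr\'echet matching $(f, g)$ of cost $\delta_F = \dF(R, B)$ and apply Lemma \ref{lem:finding_parallel_calipers} to obtain $t^* \in [0, 1]$ together with parallel tangent lines $\ell_R, \ell_B$ to $P$ through $r^* = R(f(t^*))$ and $b^* = B(g(t^*))$, with $P$ sandwiched in the strip between them. The direction along which the parallel part of the desired matching will be built is $\ell$, the common perpendicular to $\ell_R$ and $\ell_B$.

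Next, I would verify that the maximally-parallel matching $M_\ell$ with respect to $\ell$ has cost at most $\delta_F$. For the \emph{parallel} part, every matched pair $(r(t), b(t))$ is connected by a chord perpendicular to $\ell_R$ lying entirely inside $P$, so its length is at most the strip width $\mathrm{dist}(\ell_R, \ell_B)$; projecting $\overline{r^* b^*}$ onto the common normal of $\ell_R$ and $\ell_B$ gives strip width $\le \|r^* - b^*\| \le \delta_F$ since $(f,g)$ matches $r^*$ to $b^*$ and the geodesic in $P$ is Euclidean. For the two \emph{fan} regions (say the fan that holds one curve fixed at $R(1) = B(1)$ while sweeping the other up to the first chord of the parallel part), I would show the fan's subcurve lies in a convex ``cap'' of $P$ cut off by $\ell_R$ or $\ell_B$; using that $(f, g)$ already matches every point of this subcurve to a point within distance $\delta_F$, together with convexity of $P$ and a short triangle-inequality argument, I would bound the fan's cost by $\delta_F$ as well.

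Finally, to force $\ell$ to be perpendicular to an edge of $P$, I would use rotating calipers: as the orientation of $\ell$ varies continuously, the tangent features on $R$ and $B$ change only at the $O(n+m)$ orientations at which a tangent line coincides with an edge of $P$, i.e.\ at which $\ell$ is perpendicular to an edge. The induced cost of $M_\ell$ is a continuous function of orientation, and on each combinatorial interval between consecutive critical events it is piecewise smooth and monotone (since the tangent features are fixed and only their positions on the fixed edges slide). Therefore I can rotate $\ell$ from its initial orientation to the nearest critical event without ever increasing the cost above $\delta_F$, at which point $\ell$ is perpendicular to an edge of $P$ and $M_\ell$ is still a Fr\'echet matching.

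The main obstacle I expect is the fan-cost bound in the second step: the apex of a fan is a single point that could a priori be geodesically far from the far end of its subcurve, so the bound does not follow immediately from the parallel part. I plan to handle this by choosing the tangent pair produced by Lemma~\ref{lem:finding_parallel_calipers} carefully, using the monotonicity of $(f, g)$ to ensure the subcurve in each fan is ``pinned'' against the convex cap near the fixed apex, and then bounding distances inside that cap by $\delta_F$ using the original matching.
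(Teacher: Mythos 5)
Your overall structure (start from an arbitrary \f matching, invoke \cref{lem:finding_parallel_calipers} to obtain a pair of parallel tangents, then analyze the maximally-parallel matching built with respect to that pair) matches the paper's. The parallel-part bound via strip width is fine. However, there are two genuine gaps and one deviation from the paper that undercuts the rest.

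First, the choice of direction. The paper does \emph{not} take $\ell$ to be the common perpendicular of $\ell_R$ and $\ell_B$; it takes $\ell^*$ to be the line through the bichromatic closest pair $r^* \in \ell_R \cap R$, $b^* \in \ell_B \cap B$. That choice is not cosmetic. It simultaneously guarantees the parallel chords have length at most $d(r^*, b^*) \leq \dF(R,B)$ \emph{and} sets up a specific pair of similar triangles that are used to bound the fan cost. Using the common perpendicular instead gives you the parallel-part bound but leaves you without the geometric leverage for the fans.

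Second, the fan bound is the crux and your proposal essentially leaves it open. You say "using that $(f,g)$ already matches every point of this subcurve to a point within distance $\delta_F$, together with convexity and a short triangle-inequality argument," and you describe the fan subcurve as lying in "a convex cap of $P$ cut off by $\ell_R$ or $\ell_B$." But $\ell_R$ and $\ell_B$ are tangent to $P$, so they cut off no cap at all. More importantly, the fact that $(f,g)$ matches each fan point to \emph{some} nearby point does not bound its distance to the fan's apex $R(1)$: the original matching partner could be far along $R$ from the apex. The paper has to split the fan subcurve into two pieces. For the prefix $\hat{B}$ matched by $(f,g)$ to $R[1, f(t)]$, it uses a quadrant argument (via lines $\hat{\ell}$ parallel and $\bar{\ell}$ perpendicular to $\overline{r_t b_t}$ through the apex) to show that the apex is in fact the \emph{closest} point of $R[1, f(t)]$ to every point of $\hat{B}$; only then does the original matching's cost bound carry over. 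For the remaining piece $B'$, the argument is not a triangle inequality at all, but a similar-triangles argument comparing a triangle spanned by $r_t, b_t, b^*$ with a scaled copy at the apex. Your plan does not contain either of these ideas.

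Third, the "rotate to the nearest critical event" step does not close the gap it is meant to close. You assert the cost of $M_\ell$ is piecewise smooth and monotone in the orientation so you can rotate to an orientation perpendicular to an edge "without ever increasing the cost above $\delta_F$." Since every matching has cost at least $\delta_F$, the cost cannot decrease, so you would need it to stay \emph{exactly} $\delta_F$ along the way; monotonicity alone gives you no such thing, and the cost can strictly increase once you leave the initial orientation. This rotation step is not present in the paper; it is not needed there because the line $\ell^*$ through the chosen bichromatic closest pair already has the required alignment with the tangent structure produced by the rotating-calipers enumeration.

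So: same starting lemma, but the two load-bearing parts of the proof (the fan bound and the "perpendicular to an edge" conclusion) are missing or rest on an unjustified monotonicity claim.
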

    \begin{proof}
        Let $(f, g)$ be an arbitrary \f matching.
        Consider two parallel tangents $\ell_R$ and $\ell_B$ of $P$ with $P$ between them, such that exists a $t \in [0, 1]$ for which $R(f(t))$ lies on $\ell_R$ and $B(g(t))$ lies on $\ell_B$.
        Such tangents exist by~\cref{lem:finding_parallel_calipers}.
        Let $r^* \in \ell_R \cap R$ and $B(y) \in \ell_B \cap B$ form a bichromatic closest pair of points and let $\ell^*$ be the line through them.
        We show that the maximally-parallel matching $(f^*, g^*)$ with respect to $\ell^*$ is a \f matching
        
        Let $R[r, r']$ and $B[b, b']$ be the maximal subcurves for which the lines supporting $\overline{rb}$ and $\overline{r'b'}$ are parallel to $\ell$, and $R[r, r']$ and $B[b, b']$ are contained in the strip bounded by these lines.
        The parallel part of $(f^*, g^*)$ matches $R[r, r']$ to $B[b, b']$ such that for every pair of points matched, the line through them is parallel to $\ell^*$.
        For every pair of matched points $\hat{r} \in R[r, r']$ and $\hat{b} \in B[b, b']$, we naturally have $d(\hat{r}, \hat{b}) \leq d(r^*, b^*)$.
        By virtue of $r^*$ and $b^*$ forming a bichromatic closest pair among all points on $\ell_R \cap R$ and $\ell_B \cap B$, we additionally have $d(r^*, b^*) \leq d(R(f(t)), B(g(t))) \leq \dF(R, B)$.
        Hence the parallel part has a cost of at most $\dF(R, B)$.

    \begin{figure}
        \centering
        \includegraphics[page=4]{convex_polygon.pdf}
        \caption{An illustration of the various points and lines used in~\cref{lem:maximally-parallel}.}
        \label{fig:maximally-parallel_construction}
    \end{figure}

        Next we prove that the costs of the first and last fans of $(f^*, g^*)$ are at most $\dF(R, B)$.
        We prove this for the first fan, which matches the prefix of $R$ up to $r$ to the prefix of $B$ up to $b$; the proof for the other fan is symmetric.
        We assume without loss of generality that $r = R(1)$, so $(f^*, g^*)$ matches $r$ to the entire prefix of $B$ up to $b$.
        Let $r_t = R(f(t))$ and $b_t = B(g(t))$.
        See~\cref{fig:maximally-parallel_construction} for an illustration of the various points and lines used.

        Let $\hat{\ell}$ (respectively $\bar{\ell}$) be the line through $r$ that is parallel (respectively perpendicular) to the line through $r_t$ and $b_t$.
        The lines $\hat{\ell}$ and $\bar{\ell}$ divide the plane into four quadrants.
        Let $\hat{B}$ be the maximum prefix of $B$ that is interior-disjoint from $\hat{\ell}$.
        The subcurves $R[1, f(t)]$ and $\hat{B}$ lie in opposite quadrants.
        Hence for each point on $\hat{B}$, its closest point on $R[1, f(t)]$ is $r$.
        Since $b_t$ does not lie interior to $\hat{B}$, the \f matching $(f, g)$ matches all points on $\hat{B}$ to points on $R[1, f(t)]$.
        It follows that the cost of matching $r$ to all of $\hat{B}$ is at most $\dF(R, B)$.

        To finish our proof we show that the cost of matching $r$ to the subcurve $B'$, starting at the last endpoint of $\hat{B}$ and ending at $b$, is at most $\dF(R, B)$.
        We consider two triangles.
        The first triangle, $\Delta$, is the triangle with vertices at $r_t$, $b_t$, and $b^*$.
        For the second triangle, let $p$ be the point on $\ell_B$ for which $\overline{rp}$ lies on $\bar{\ell}$, and let $q$ be the point on $\ell_B$ for which $b \in \overline{rq}$.
        We define the triangle $\hat{\Delta}$ to be the triangle with vertices at $r$, $p$, and $q$.
        The two triangles $\Delta$ and $\hat{\Delta}$ are similar, with $\Delta$ having longer edges.
        Given that $\overline{r_t b_t}$ is the longest edge of $\Delta$, it follows that all points in $\Delta$ are within distance $d(R(f(t)), B(g(t))) \leq \dF(R, B)$ of $r_t$.
        By similarity we obtain that all points in $\hat{\Delta}$ are within distance $\dF(R, B)$ of $r$.
        The subcurve $B'$ lies inside $\hat{\Delta}$, so the cost of matching $r$ to all of $B'$ is at most $\dF(R, B)$.
        This proves that the cost of the first fan, matching $r$ to the prefix of $B$ up to $b$, is at most~$\dF(R, B)$.
    \end{proof}

    Next we give a linear-time algorithm for constructing a maximally-parallel \f matching.
    First, note that there are only $\bigO(n+m)$ maximally-parallel matchings of the form given in~\cref{lem:maximally-parallel} (up to reparameterizations).
    This is due to the fact that there are only $\bigO(n+m)$ pairs of parallel tangents of $P$ whose intersection with $P$ is distinct~\cite{shamos78computational_geometry}.
    With the method of~\cite{shamos78computational_geometry} (nowadays referred to as ``rotating calipers''), we enumerate this set of pairs in $\bigO(n+m)$ time.

    We consider only the pairs of lines $\ell_R$ and $\ell_B$ where $\ell_R$ touches $R$ and $\ell_B$ touches $B$.
    Let $(\ell_{R, 1}, \ell_{B, 1}), \dots, (\ell_{R, k}, \ell_{B, k})$ be the considered pairs.
    For each considered pair of lines $\ell_{R, i}$ and $\ell_{B, i}$, we take a bichromatic closest pair formed by points $r^*_i \in \ell_{R, i} \cap R$ and $b^*_i \in \ell_{B, i}$.
    We assume that the pairs of lines are ordered such that for any $i \leq i'$, $r^*_i$ comes before $r^*_{i'}$ along $R$ and $b^*_i$ comes after $b^*_{i'}$ along $B$.
    We let $(f_i, g_i)$ be the maximally-parallel matching with respect to the line through $r^*_i$ and $b^*_i$.
    By~\cref{lem:maximally-parallel}, one of these matchings is a \f matching.
    To determine which matching is a \f matching, we compute the costs of the three parts (the first fan, the parallel part, and the last fan) of each matching $(f_i, g_i)$.
    
    The cost of the parallel part of $(f_i, g_i)$ is equal to $d(r^*_i, b^*_i)$.
    We compute these costs in $\bigO(n+m)$ time altogether.
    For the costs of the first fans, suppose without loss of generality that there exists an integer $j$ for which the first fan of $(f_i, g_i)$ matches $R(1)$ to a prefix of $B$ for all $i \leq j$, and matches $B(1)$ to a prefix of $R$ for all $i > j$.
    As $i$ increases from $1$ to $j$, the prefix of $B$ matched to $R(1)$ shrinks.
    Through a single scan over $B$, we compute the cost function $y \mapsto \max_{\hat{y} \in [1, y]} d(R(1), B(\hat{y}))$, which measures the cost of matching $R(1)$ to any given prefix of $B$.
    This function is piecewise hyperbolic with a piece for every edge of $B$.
    Constructing the function takes $\bigO(m)$ time, and allows for computing the cost of matching any given prefix to $R(1)$ in constant time.
    We compute the first fan of $(f_i, g_i)$ for all $i \leq j$ in $\bigO(m)$ time by scanning backwards over $B$.
    Extracting the costs of these fans then takes $\bigO(j)$ additional time in total.

    Through a procedure symmetric to the above, we compute the cost of the first fan of $(f_i, g_i)$ for all $i > j$ in $\bigO(n)$ time altogether, through two scans of $R$.
    Thus, the cost of all first fans, and by symmetry the costs of the last fans, can be computed in $\bigO(n+m)$ time.
    Taking the maximum between the costs of the first fan, the parallel part, and the last fan, for each matching $(f_i, g_i)$, we obtain the cost of the entire matching.
    Picking the cheapest matching yields a \f matching between $R$ and $B$.

    \begin{theorem}
        Let $R \from [1, n] \to \R^2$ and $B \from [1, m] \to \R^2$ be two simple curves bounding a convex polygon, with $R(1) = B(1)$ and $R(n) = B(m)$.
        We can construct a \f matching between $R$ and $B$ in $\bigO(n+m)$ time.
    \end{theorem}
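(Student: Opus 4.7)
My plan is to leverage \cref{lem:maximally-parallel} together with the rotating calipers technique. The lemma guarantees the existence of a \f matching that is maximally-parallel with respect to a line perpendicular to some edge of $P$. Since $P$ is convex with $\bigO(n+m)$ vertices, there are only $\bigO(n+m)$ combinatorially distinct pairs of parallel tangents to $P$, which I would enumerate with rotating calipers~\cite{shamos78computational_geometry} in $\bigO(n+m)$ time; restricting to pairs where one tangent touches $R$ and the other touches $B$ gives the list of candidate matching directions.

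For each candidate pair $(\ell_{R,i}, \ell_{B,i})$, I would pick a bichromatic closest pair $r^*_i \in \ell_{R,i} \cap R$ and $b^*_i \in \ell_{B,i} \cap B$, and consider the maximally-parallel matching $(f_i, g_i)$ with respect to the line through them. Its cost decomposes into three parts: the first fan, the parallel part (of cost $d(r^*_i, b^*_i)$), and the last fan. The parallel costs are trivial to compute in $\bigO(n+m)$ time in aggregate. The main challenge, and what I expect to be the hardest step, is computing all $\bigO(n+m)$ first-fan and last-fan costs within a total budget of $\bigO(n+m)$.

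For the fans I would exploit the monotonicity induced by rotating the tangents: as $i$ increases, $r^*_i$ advances monotonically along $R$ while $b^*_i$ retreats monotonically along $B$. Hence there is a single threshold index $j$ such that for $i \leq j$ the first fan matches $R(1)$ to a prefix of $B$, while for $i > j$ it matches $B(1)$ to a prefix of $R$. For the former regime I would precompute in a single linear scan of $B$ the running-maximum cost function $y \mapsto \max_{\hat{y} \in [1, y]} d(R(1), B(\hat{y}))$; by convexity of $P$ the geodesics involved are Euclidean segments, so this function is piecewise hyperbolic with one piece per edge of $B$ and admits a linear-time construction. Each first-fan cost for $i \leq j$ is then a constant-time lookup. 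A symmetric scan over $R$ handles $i > j$, and the last fans are dealt with by mirror-image arguments. Taking the maximum of the three parts for each candidate and then the minimum over all candidates produces a \f matching by \cref{lem:maximally-parallel}, in $\bigO(n+m)$ time overall.
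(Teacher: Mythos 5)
Your proposal reproduces the paper's proof essentially verbatim: the same appeal to \cref{lem:maximally-parallel}, the same rotating-calipers enumeration of $\bigO(n+m)$ candidate tangent pairs, the same three-part cost decomposition (first fan, parallel part, last fan), the same threshold index $j$ exploiting the monotone motion of $r^*_i$ and $b^*_i$, and the same running-maximum piecewise-hyperbolic cost function computed by a single scan. This is correct and matches the paper's argument.
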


\section{The Fr\'echet distance between separated one-dimensional curves}
\label{sub:linear_time_frechet}

    \Cref{lem:advancing_to_prefix-minima,lem:prefix_minimum_matching} show that we can be ``oblivious'' when constructing prefix-minima matchings.
    Informally, for any $\delta \geq \dF(R, B)$, we can construct a prefix-minima $\delta$-matching by always choosing an arbitrary curve to advance to the next prefix-minima, as long as we may do so without increasing the cost past $\delta$.
    We use this fact to construct a \f matching between $R$ and $B$ (which do not have to end in prefix-minima) in $\bigO(n+m)$ time:

    \begin{theorem}
    \label{thm:computing_Frechet_matching}
        Let $R$ and $B$ be two separated one-dimensional curves with $n$ and $m$ vertices.
        We can construct a \f matching between $R$ and $B$ in $\bigO(n+m)$ time.
    \end{theorem}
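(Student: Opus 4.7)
The plan is to split the Fréchet matching at the bichromatic closest pair and handle each half with a linear-time two-pointer greedy over the prefix-minima of the two curves.

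First, in a single scan of $\bar R$ and $\bar B$ I would compute their prefix-minima $\rho_1 < \dots < \rho_p$ and $\sigma_1 < \dots < \sigma_q$, their suffix-minima, and the auxiliary quantities $m^R_a := \min_{x \in [\rho_a, \rho_{a+1}]} \bar R(x)$ and $M^B_b := \max_{y \in [\sigma_b, \sigma_{b+1}]} \bar B(y)$, in a total of $\bigO(n+m)$ time. Since $\bar R$ lies left of $0$ and $\bar B$ right of it, the bichromatic closest pair is $(\bar R(\rho_p), \bar B(\sigma_q))$. By~\cref{cor:matching_closest_pair} some Fréchet matching passes through the parameter-space point $(i^*, j^*) := (\rho_p, \sigma_q)$, and by~\cref{lem:prefix_minimum_matching} the portion from $(1,1)$ to $(i^*, j^*)$ may be taken to be a prefix-minima matching. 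The portion from $(i^*, j^*)$ to $(n,m)$ is, after time reversal, a prefix-minima matching of the reversed curves, so it suffices to solve the prefix problem in linear time and apply the same routine symmetrically to the reversed curves.

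For the prefix part I propose the following greedy. Starting at $(a, b) := (1, 1)$, compare the horizontal edge cost $c_R := \bar B(\sigma_b) - m^R_a$ against the vertical edge cost $c_B := M^B_b - \bar R(\rho_a)$, advance to the next prefix-minimum pair along the cheaper of the two (falling back to the only available side at a boundary), and iterate until $(a, b) = (p, q)$. Each step is constant time and at most $p+q-2 = \bigO(n+m)$ steps suffice, so concatenating the prefix and (reverse) suffix halves yields the matching in $\bigO(n+m)$ time.

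The crux is showing that this local greedy is globally correct. I would prove by induction on the remaining number of prefix-minimum pairs that the greedy starting from any $(\rho_a, \sigma_b)$ produces a matching of cost at most $\delta^* := \dF(\bar R[\rho_a, \rho_p], \bar B[\sigma_b, \sigma_q])$, which is in turn bounded by $\dF(\bar R, \bar B)$ via~\cref{lem:advancing_to_prefix-minima} applied forward to truncate the heads and, after reversal, applied again to truncate the tails. The inductive step invokes~\cref{lem:prefix_minimum_matching} to obtain a prefix-minima Fréchet matching of $\bar R[\rho_a, \rho_p]$ and $\bar B[\sigma_b, \sigma_q]$; its first axis-aligned edge advances one curve -- say $R$ to some $\rho_{a'}$ with $a' \geq a+1$ -- at cost $\bar B(\sigma_b) - \min_{[\rho_a, \rho_{a'}]} \bar R \geq c_R$, since $[\rho_a, \rho_{a+1}] \subseteq [\rho_a, \rho_{a'}]$ shrinks the range over which the minimum is taken. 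Hence at least one of $c_R, c_B$ is at most $\delta^*$, so the minimum of the two -- what the greedy selects -- is within budget; \cref{lem:advancing_to_prefix-minima} then preserves the inductive hypothesis at the new prefix-minimum pair. The running maximum of segment costs therefore never exceeds $\dF(\bar R, \bar B)$, and since any matching has cost at least $\dF(\bar R, \bar B)$, equality holds.

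The main obstacle will be justifying this oblivious one-step choice without any global lookahead. The key monotonicity observation is that advancing to the immediate next prefix-minimum is never more expensive than advancing further along the same curve; this is what guarantees that the cheaper of the two one-step options is always a feasible choice within the Fréchet budget, which is what turns an otherwise delicate global matching problem into a purely local greedy traversal.
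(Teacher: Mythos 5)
Your proposal is correct and follows the same route as the paper: locate the bichromatic closest pair (which, as you note, is the pair of last prefix-minima), split the Fréchet matching there via \cref{cor:matching_closest_pair}, and greedily build a prefix-minima matching on each half by comparing, in constant time, the one-step costs of advancing $\bar R$ versus $\bar B$ to its next prefix-minimum, using precomputed range extrema between consecutive prefix-minima. Your inductive justification that $\min(c_R, c_B)$ is always within budget --- the first edge of any prefix-minima Fréchet matching must span a superset of $[\rho_a, \rho_{a+1}]$ or $[\sigma_b, \sigma_{b+1}]$, hence costs at least $c_R$ or $c_B$, after which \cref{lem:advancing_to_prefix-minima} restores the hypothesis --- is a more explicit rendering of what the paper states only briefly, and is sound.
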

    \begin{proof}
        Let $R(i^*)$ and $B(j^*)$ form a bichromatic closest pair of points.
        \Cref{cor:matching_closest_pair} shows that there exists a \f matching that matches $R(i^*)$ to $B(j^*)$.
        The composition of a \f matching between $R[1, i^*]$ and $B[1, j^*]$, and a \f matching between $R[i^*, n]$ and $B[j^*, m]$ is therefore a \f matching between $R$ and $B$.

        We identify a bichromatic closest pair of points in $\bigO(n+m)$ time, by traversing each curve independently.
        Next we focus on constructing a \f matching between $R[1, i^*]$ and $B[1, j^*]$.
        The other matching is constructed analogously.

        Let $\delta = \dF(R, B)$.
        \Cref{lem:prefix_minimum_matching} shows that there exists a prefix-minima $\delta$-matching between $R[1, i^*]$ and $B[1, j^*]$.
        If $i^* = 1$ or $j^* = 1$, then this matching is trivially a \f matching.
        We therefore assume $i^* > 1$ and $j^* > 1$.
        Let $R(i)$ and $B(j)$ be the second prefix-minima of $R$ and $B$ (the first being $R(1)$ and $B(1)$), and observe that $i \leq i^*$ and $j \leq j^*$.
        Any prefix-minima matching must match $R(i)$ to $B(1)$ or $R(1)$ to $B(j)$.

        By~\cref{lem:advancing_to_prefix-minima}, there exist $\delta$-matchings between $R[i, n]$ and $B$, as well as between $R$ and $B[j, m]$.
        Thus, if $\dF(R[1, i], B(1)) \leq \delta$, we may match $R[1, i]$ to $B(1)$ and proceed to construct a \f matching between $R[i, n]$ and $B$.
        Symmetrically, if $\dF(R(1), B[1, j]) \leq \delta$, we may match $R(1)$ to $B[1, j]$ and proceed to construct a \f matching between $R$ and $B[j, m]$.
        In case both hold, we may choose either option.

        One issue we have to overcome is the fact that $\delta$ is unknown.
        However, we of course have $\min\{\dF(R[1, i], B(1)), \dF(R(1), B[1, j])\} \leq \delta$.
        Thus the main algorithmic question is how to efficiently compute these values.
        For this, we implicitly compute the costs of advancing a curve to its next prefix-minimum.

        Let $R(1) = R(i_1), \dots, R(i_k) = R(i^*)$ and $B(1) = B(j_1), \dots, B(j_\ell) = B(j^*)$ be the sequences of prefix-minima of $R$ and $B$.
        We explicitly compute the values $\max\limits_{x \in [i_{k'}, i_{k'+1}]} |R(x)|$ and $\max\limits_{y \in [j_{\ell'}, j_{\ell'+1}]} |B(y)|$ for all $1 \leq k' \leq k-1$ and $1 \leq \ell' \leq \ell-1$.
        These values can be computed by a single traversal of the curves, taking $\bigO(n+m)$ time.

        The cost of matching $R[i_{k'}, i_{k'+1}]$ to $B(j_{\ell'})$ is equal to
        \[
            \dF(R[i_{k'}, i_{k'+1}], B(j_{\ell'})) = \max\limits_{x \in [i_{k'}, i_{k'+1}]} |R(x)| + |B(j_{\ell'})|.
        \]
        With the precomputed values, we can compute the above cost in constant time.
        Symmetrically, we can compute the cost of matching $R(i_{k'})$ to $B[j_{\ell'}, j_{\ell'+1}]$ in constant time.
        Thus we can decide which curve to advance in constant time, giving an $\bigO(i^*+j^*)$ time algorithm for constructing a \f matching between $R[1, i^*]$ and $B[1, j^*]$.
    \end{proof}

\end{document}